\newtheorem{thm}{Theorem}[section]
\newtheorem{prop}[thm]{Proposition}
\newtheorem{cor}[thm]{Corollary}
\theoremstyle{definition}
\newtheorem{defn}[thm]{Definition}
\theoremstyle{remark}
\newtheorem{rmk}[thm]{Remark}
\theoremstyle{remark}
\newtheorem{ex}[thm]{Example}
\numberwithin{equation}{section}
\def\wh{\widehat}
\def\leq{\leqslant}
\def\geq{\geqslant}
\def\s{\mathfrak{s}}
\def\ie{{\it i.e.}\/,}
\def\cf{{\it cf.}\/}
\colorlet{lgray}{white!85!black}
\colorlet{lred}{white!85!red}
\colorlet{lgreen}{white!80!green}
\colorlet{dgreen}{black!30!green}
\definecolor{green}{rgb}{0.1,0.8,0.1}
\definecolor{yellow}{rgb}{1.0,0.85,0.25}
\newcommand{\bra}[1]{\left\langle #1\right|}
\newcommand{\ket}[1]{\left|#1\right\rangle}
\newcommand\fs{\footnotesize}
\renewcommand{\tikz}[2]{
\begin{tikzpicture}[scale=#1,baseline=(current bounding box.center),>=stealth]
#2
\end{tikzpicture}}
\def\C{\mathcal{C}}
\def\I{\bm{I}}
\def\J{\bm{J}}
\def\K{\bm{K}}
\def\AA{\bm{A}}
\def\M{\bm{M}}
\newcommand{\Is}[2]{\I_{[#1,#2]}}
\newcommand{\subalign}[1]{%
  \vcenter{%
    \Let@ \restore@math@cr \default@tag
    \baselineskip\fontdimen10 \scriptfont\tw@
    \advance\baselineskip\fontdimen12 \scriptfont\tw@
    \lineskip\thr@@\fontdimen8 \scriptfont\thr@@
    \lineskiplimit\lineskip
    \ialign{\hfil$\m@th\scriptstyle##$&$\m@th\scriptstyle{}##$\crcr
      #1\crcr
    }%
  }
}
\newlength{\cellsize}
\newcommand{\tableaucell}[1]{{%
\def \arg{#1}\def \void{}%
\ifx \void \arg
\vbox to \cellsize{\vfil \hrule width \cellsize height 0pt}%
\else
\unitlength=\cellsize
\begin{picture}(1,1)
\put(0,0){\makebox(1,1){$#1$}}
\put(0,0){\line(1,0){1}}
\put(0,1){\line(1,0){1}}
\put(0,0){\line(0,1){1}}
\put(1,0){\line(0,1){1}}
\end{picture}%
\fi}}
\newcommand\tableau[1]{
\vcenter{
\let\\=\cr
\baselineskip=-16000pt
\lineskiplimit=16000pt
\lineskip=0pt
\halign{&\tableaucell{##}\cr#1\crcr}}}
\begin{document}

\title{Nonsymmetric Macdonald polynomials \\ via integrable vertex models}

\author{Alexei Borodin}

\address[Alexei Borodin]{ Department of Mathematics, MIT, Cambridge, USA, and
Institute for Information Transmission Problems, Moscow, Russia. E-mail: borodin@math.mit.edu }

\author{Michael Wheeler}

\address[Michael Wheeler]{ School of Mathematics and Statistics, The University of Melbourne, Parkville,
Victoria, Australia. E-mail: wheelerm@unimelb.edu.au}

\begin{abstract}
Starting from an integrable rank-$n$ vertex model, we construct an explicit family of partition functions indexed by compositions 
$\mu = (\mu_1,\dots,\mu_n)$. Using the Yang--Baxter algebra of the model and a certain rotation operation that acts on our partition functions, we show that they are eigenfunctions of the Cherednik--Dunkl operators $Y_i$ for all $1 \leq i \leq n$, and are thus equal to nonsymmetric Macdonald polynomials $E_{\mu}$. Our partition functions have the combinatorial interpretation of ensembles of coloured lattice paths which traverse a cylinder. Applying a simple bijection to such path ensembles, we show how to recover the well-known combinatorial formula for $E_{\mu}$ due to Haglund--Haiman--Loehr.
\end{abstract}

\maketitle

\setcounter{tocdepth}{1}
\makeatletter
\def\l@subsection{\@tocline{2}{0pt}{2.5pc}{5pc}{}}
\makeatother
\tableofcontents

\section{Introduction}

\subsection{Nonsymmetric Macdonald polynomials}

The {\it nonsymmetric Macdonald polynomials} were introduced in a series of papers by Cherednik \cite{Cherednik,Cherednik2}, Macdonald \cite{Macdonald-Bourbaki} and Opdam \cite{Opdam}. They depend on an alphabet $(x_1,\dots,x_n)$, two supplementary parameters $(q,t)$, and are indexed by nonnegative integer compositions $\mu = (\mu_1,\dots,\mu_n)$; they are traditionally denoted $E_{\mu}(x_1,\dots,x_n;q,t)$. In this work we employ the notation
\begin{align}
f_{(\mu_1,\dots,\mu_n)}(x_1,\dots,x_n;q,t) 
:= 
E_{(\mu_n,\dots,\mu_1)}(x_n,\dots,x_1;q,t)
\end{align}
and perform all calculations in terms of the $f_{\mu}$ family; we do this for notational consistency with our previous paper \cite{BorodinW2}. Up to overall normalization, the nonsymmetric Macdonald polynomials are characterized as the unique polynomial eigenfunctions of the Cherednik--Dunkl operators:
\begin{align}
\label{eigen-intro}
Y_i \cdot f_{\mu}(x_1,\dots,x_n;q,t)
=
y_i(\mu;q,t)
f_{\mu}(x_1,\dots,x_n;q,t),
\qquad
1 \leq i \leq n,
\end{align}
where $Y_i = T_{i-1} \cdots T_{1} \cdot \omega \cdot T^{-1}_{n-1} \cdots T^{-1}_{i}$ is a product of generators of the affine Hecke algebra of type $A_{n-1}$, and the eigenvalue $y_i(\mu;q,t)$ is an explicit integral power of $q$ times an explicit integral power of $t$. For precise definitions of these objects, see equations \eqref{Hecke}, \eqref{omega} and \eqref{rho} of the text.

Aside from the fact that their symmetrization yields symmetric Macdonald polynomials \cite{Macdonald-Bourbaki,Macdonald,Macdonald2}, nonsymmetric Macdonald polynomials have a host of remarkable properties including orthogonality and Cauchy-type summation identities \cite{MimachiN}, and a recursive construction via polynomial representations of the Hecke algebra \cite{Sahi}. For a general survey of their theory, we refer the reader to \cite{Marshall}. 

The goal of the present work is to study the defining relation of the nonsymmetric Macdonald polynomials, \eqref{eigen-intro}, from the perspective of integrable vertex models. Our motivations for doing so are threefold: 

{\bf 1.} An explicit vertex model solution of \eqref{eigen-intro}, in the case when $\mu$ is an {\it anti-dominant} composition\footnote{A weakly increasing composition.}, has been known since the work of \cite{CantiniGW}. Quite recently, the combinatorics underlying the matrix product solution of \cite{CantiniGW} was explored in \cite{CorteelMW}; 

{\bf 2.} A vertex model representation of the nonsymmetric Hall--Littlewood polynomials (which are the $q=0$ degeneration of $f_{\mu}$) appeared in our recent work \cite{BorodinW2}. The present paper thus provides a unification of this special case and that of {\bf 1}; 

{\bf 3.} In the recent integrable probability literature there has been a very rich interplay between (non)symmetric multivariate functions on the one hand, and stochastic (higher-rank) vertex models on the other; see, for example, \cite{Borodin,CorwinP,BorodinP1,Borodin2,BorodinBW,BarraquandBCW,Borodin3,Aggarwal,BorodinW2}. Against such a backdrop, it is natural to hope that our vertex model construction of the nonsymmetric Macdonald polynomials might ultimately lead to interesting probabilistic applications.

Let us now outline some of the particulars of our construction.

\subsection{Higher-rank vertex models and row operators}
\label{intro-models}

The model that we study is based on vertices of the form
\begin{align}
\label{generic-L-intro}
L_x(\I,j; \K,\ell)
\index{L1@$L_x(\I,j; \K,\ell)$; vertex weights}
=
\tikz{0.6}{
\draw[lgray,line width=1.5pt] (-1,-1) -- (1,-1) -- (1,1) -- (-1,1) -- (-1,-1);
\node[left] at (-1,0) {\tiny $j$};\node[right] at (1,0) {\tiny $\ell$};
\node[below] at (0,-1) {\tiny $\I$};\node[above] at (0,1) {\tiny $\K$};
\node[text centered] at (0,0) {$x$};
},
\end{align}
where the labels of the vertical edges are nonnegative integers 
$j,\ell \in \{0,1,\dots,n\}$, while the labels of the horizontal edges are compositions $\I,\K \in (\mathbb{Z}_{\geq 0})^n$. For any definite choice of $\I,j,\K,\ell$, the picture on the right-hand side of \eqref{generic-L-intro} is assigned a corresponding {\it weight} $L_x(\I,j; \K,\ell)$, which is a polynomial in the variable $x$ and the Macdonald parameter $t$; for the precise form of these weights, see equation \eqref{bos-weights} in the text. 

For now it suffices to note that the model that we are using is a rather special solution of the Yang--Baxter equation associated to the quantized affine algebra $U_t(\widehat{\mathfrak{sl}_{n+1}})$; this model (and its $n=1$ reduction) has already appeared in a variety of settings related to Macdonald polynomials and their descendants \cite{Tsilevich,Korff,WheelerZ,CantiniGW,GarbaliGW,BorodinW-spin}.

The object \eqref{generic-L-intro} can be considered as (the matrix elements of) a linear operator acting on the formal vector space $V = {\rm Span}_{\mathbb{C}} (\mathbb{Z}_{\geq 0})^n$; in the language of integrable systems, it is referred to as a {\it Lax matrix}, and the underlying vector space $V$ as the {\it physical} or {\it quantum space}.
\begin{figure}
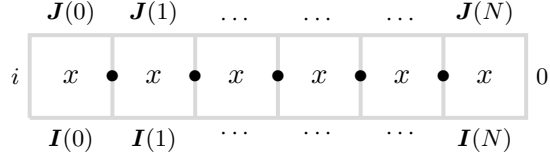

\begin{align*}
\tikz{1.1}{
\draw[lgray,line width=1.5pt] (0.5,-0.5) -- (6.5,-0.5) -- (6.5,0.5) -- (0.5,0.5) -- (0.5,-0.5);
\foreach\x in {1,...,5}{
\draw[lgray,line width=1.5pt] (0.5+\x,-0.5) -- (0.5+\x,0.5);
}
\node[text centered] at (1,0) {$x$};
\node[text centered] at (2,0) {$x$};
\node[text centered] at (3,0) {$x$};
\node[text centered] at (4,0) {$x$};
\node[text centered] at (5,0) {$x$};
\node[text centered] at (6,0) {$x$};
\node[left] at (0.5,0) {\fs $i$};\node[right] at (6.5,0) {\fs $0$};
\foreach\x in {1.5,...,5.5}{
\node at (\x,0) {$\bullet$};
}
\node[below] at (6,-0.5) {\fs $\I(N)$};\node[above] at (6,0.5) {\fs $\J(N)$};
\node[below] at (5,-0.5) {\fs $\cdots$};\node[above] at (5,0.5) {\fs $\cdots$};
\node[below] at (4,-0.5) {\fs $\cdots$};\node[above] at (4,0.5) {\fs $\cdots$};
\node[below] at (3,-0.5) {\fs $\cdots$};\node[above] at (3,0.5) {\fs $\cdots$};
\node[below] at (2,-0.5) {\fs $\I(1)$};\node[above] at (2,0.5) {\fs $\J(1)$};
\node[below] at (1,-0.5) {\fs $\I(0)$};\node[above] at (1,0.5) {\fs $\J(0)$};
}
\end{align*}
\caption{The matrix elements of the row operators $\C_i(x)$. Each of the $N+1$ squares is a vertex of the form \eqref{generic-L-intro}, and all vertices depend on the same parameter $x$. Each $\I(k)$ and $\J(k)$ denotes a fixed composition in $(\mathbb{Z}_{\geq 0})^n$. The value assigned to the leftmost vertical edge is $i \geq 1$, that assigned to the rightmost vertical edge is $0$, and the value assigned to each internal vertical edge (indicated by $\bullet$) is uniquely determined.}
\label{rows-intro}
\end{figure}
By horizontally concatenating several copies of the vertex \eqref{generic-L-intro}, we obtain a family of linear operators $\C_i(x)$, $1 \leq i \leq n$, which we call {\it row operators}; the details of their construction are given in Figure \ref{rows-intro} and in equation \eqref{C-row} of the text. These operators are well-known in the {\it algebraic Bethe Ansatz} approach to integrable systems; they are nothing but elements of the {\it monodromy matrix} built by taking $(N+1)$-fold tensor products of the Lax matrix \eqref{generic-L-intro}, where the value of $N \geq 0$ is arbitrary. 

By virtue of their construction and the Yang--Baxter equation of the model \eqref{generic-L-intro}, the row operators satisfy the following exchange relations:
\begin{align}
\label{CC>-intro}
\C_i(x) \C_j(y) = 
\frac{x-ty}{x-y}\, \C_j(y) \C_i(x) 
- 
\frac{(1-t)y}{x-y}\, \C_j(x) \C_i(y),
\end{align}
valid for all indices $n \geq i>j \geq 1$ and arbitrary complex parameters $x$, $y$. In fact, the operators $\C_i(x)$ are part of a richer structure known as the {\it Yang--Baxter algebra}, of which \eqref{CC>-intro} is but one commutation relation; for our purposes, we will only need to make use of this identity and none of the other relations within the algebra.

\subsection{Cylindrical construction}

Since the matrix elements of each $\C_i(x)$ operator have polynomial dependence on the parameter $x$, by multiplying $n$ such operators and taking matrix elements of the resulting product, we obtain multivariate polynomials. Many such constructions are possible, some leading to symmetric polynomials \cite{WheelerZ,CantiniGW}, and others to nonsymmetric families \cite{BorodinW2}.

In this work we will study one particular construction. Let $\mu = (\mu_1,\dots,\mu_n)$ be a composition and $\rho = (\rho_1,\dots,\rho_n)$ a permutation of $(1,\dots,n)$. We define the family of polynomials
\begin{align}
\label{f-rho-intro}
f_{\mu}^{\rho}(x_1,\dots,x_n;q,t)
=
\Omega_{\mu}(q,t)
\langle \C_{\rho_1}(x_1) \cdots \C_{\rho_n}(x_n) \rangle_{\mu},
\end{align}
where $\langle \cdot \rangle_{\mu}$ denotes a certain linear form on the vector space $\mathbb{V}(N) = \otimes_{i=0}^{N} V_i$ (here $V_i$ denotes a copy of $V$) with $N$ chosen as the largest part of the composition $\mu$, and $\Omega_{\mu}(q,t)$ is an explicit polynomial in the parameters $(q,t)$. The linear form $\langle \cdot \rangle_{\mu}$ is essentially a trace over $\mathbb{V}(N)$, and accordingly, one can think of \eqref{f-rho-intro} as a partition function on a cylinder of length $N+1$ and circumference $n$, with appropriate cuts. A pictorial representation of the right-hand side is given in Figure \ref{lattice-intro}; for the precise definitions of $\langle \cdot \rangle_{\mu}$ and $\Omega_{\mu}(q,t)$, see equations \eqref{form}, \eqref{Omega} and \eqref{alpha} of the text.
\begin{figure}
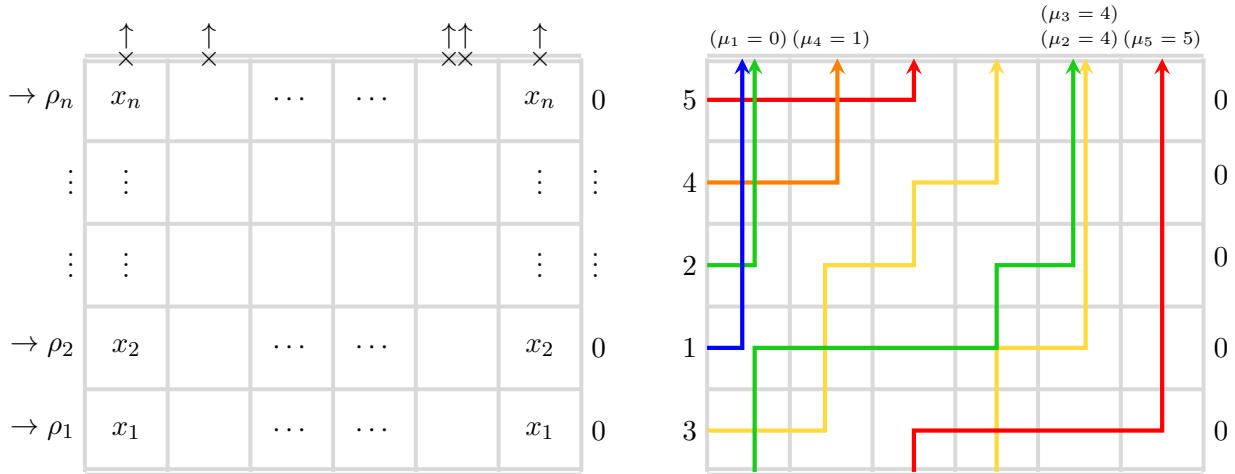

\begin{align*}
\tikz{1.1}{
\foreach\y in {0}{
\draw[lgray,line width=1.5pt,double] (1.5,0.5+\y) -- (7.5,0.5+\y);
}
\foreach\y in {5}{
\draw[lgray,line width=1.5pt,double] (1.5,0.5+\y) -- (7.5,0.5+\y);
}
\foreach\y in {1,...,4}{
\draw[lgray,line width=1.5pt] (1.5,0.5+\y) -- (7.5,0.5+\y);
}
\foreach\x in {0,...,6}{
\draw[lgray,line width=1.5pt] (1.5+\x,0.5) -- (1.5+\x,5.5);
}
\node[text centered] at (2,1) {$x_1$};
\node[text centered] at (4,1) {$\cdots$};
\node[text centered] at (5,1) {$\cdots$};
\node[text centered] at (7,1) {$x_1$};
\node[text centered] at (2,2) {$x_2$};
\node[text centered] at (4,2) {$\cdots$};
\node[text centered] at (5,2) {$\cdots$};
\node[text centered] at (7,2) {$x_2$};
\node[text centered] at (2,5) {$x_n$};
\node[text centered] at (4,5) {$\cdots$};
\node[text centered] at (5,5) {$\cdots$};
\node[text centered] at (7,5) {$x_n$};
\node[text centered] at (2,3.1) {$\vdots$};
\node[text centered] at (2,4.1) {$\vdots$};
\node[text centered] at (7,3.1) {$\vdots$};
\node[text centered] at (7,4.1) {$\vdots$};
\node[above] at (6,5.5) {\phantom{\tiny$(\mu_2=4)$}};
\node[above] at (6,5.8) {\phantom{\tiny$(\mu_3=4)$}};
\node at (2,5.5) {$\times$};
\node at (3,5.5) {$\times$};
\node at (5.9,5.5) {$\times$};
\node at (6.1,5.5) {$\times$};
\node at (7,5.5) {$\times$};
\node[above] at (2,5.5) {$\uparrow$};
\node[above] at (3,5.5) {$\uparrow$};
\node[above] at (5.9,5.5) {$\uparrow$};
\node[above] at (6.1,5.5) {$\uparrow$};
\node[above] at (7,5.5) {$\uparrow$};
\node[right] at (7.5,1) {$0$};
\node[right] at (7.5,2) {$0$};
\node[right] at (7.5,3.1) {$\vdots$};
\node[right] at (7.5,4.1) {$\vdots$};
\node[right] at (7.5,5) {$0$};
\node[left] at (1.5,1) {$\rightarrow \rho_1$};
\node[left] at (1.5,2) {$\rightarrow \rho_2$};
\node[left] at (1.5,3.1) {$\vdots$};
\node[left] at (1.5,4.1) {$\vdots$};
\node[left] at (1.5,5) {$\rightarrow \rho_n$};
}
\qquad
\tikz{1.1}{
\foreach\y in {0}{
\draw[lgray,line width=1.5pt,double] (1.5,0.5+\y) -- (7.5,0.5+\y);
}
\foreach\y in {5}{
\draw[lgray,line width=1.5pt,double] (1.5,0.5+\y) -- (7.5,0.5+\y);
}
\foreach\y in {1,...,4}{
\draw[lgray,line width=1.5pt] (1.5,0.5+\y) -- (7.5,0.5+\y);
}
\foreach\x in {0,...,6}{
\draw[lgray,line width=1.5pt] (1.5+\x,0.5) -- (1.5+\x,5.5);
}
\node[above] at (2,5.5) {\tiny$(\mu_1=0)$};
\node[above] at (3,5.5) {\tiny$(\mu_4=1)$};
\node[above] at (6,5.5) {\tiny$(\mu_2=4)$};
\node[above] at (6,5.8) {\tiny$(\mu_3=4)$};
\node[above] at (7,5.5) {\tiny$(\mu_5=5)$};
\node[right] at (7.5,1) {$0$};
\node[right] at (7.5,2) {$0$};
\node[right] at (7.5,3.1) {$0$};
\node[right] at (7.5,4.1) {$0$};
\node[right] at (7.5,5) {$0$};
\node[left] at (1.5,1) {$3$};
\node[left] at (1.5,2) {$1$};
\node[left] at (1.5,3) {$2$};
\node[left] at (1.5,4) {$4$};
\node[left] at (1.5,5) {$5$};
\draw[ultra thick,yellow,->] (1.5,1) -- (2.925,1) -- (2.925,3) -- (4,3) -- (4,4) -- (5,4) -- (5,5.5);
\draw[ultra thick,yellow,->] (5,0.5) -- (5,2) -- (6.075,2) -- (6.075,5.5);
\draw[ultra thick,red,->] (1.5,5) -- (4,5) -- (4,5.5); 
\draw[ultra thick,red,->] (4,0.5)-- (4,1) -- (7,1) -- (7,5.5);
\draw[ultra thick,orange,->] (1.5,4) -- (3.075,4) -- (3.075,5.5);
\draw[ultra thick,green,->] (1.5,3) -- (2.075,3) -- (2.075,5.5); 
\draw[ultra thick,green,->] (2.075,0.5) -- (2.075,2) -- (5,2) -- (5,3) -- (5.925,3) -- (5.925,5.5);
\draw[ultra thick,blue,->] (1.5,2) -- (1.925,2) -- (1.925,5.5);
}
\end{align*}
\caption{Left panel: a representation $\langle \C_{\rho_1}(x_1) \cdots \C_{\rho_n}(x_n) \rangle_{\mu}$ as a partition function on a cylinder. For all $1 \leq i \leq n$, a path of colour $\rho_i$ enters the lattice via the left edge of row $i$, and traverses the lattice by taking up or right steps. The top and bottom edges of each column are considered to be identified, so that the path may wrap around the cylinder. The path exits the lattice via the top of column $\mu_{\rho_i}$ (columns are labelled from left to right, starting at $0$); such exit points can be viewed as ``cuts'' along the cylinder. 
\\
{\color{white} .}
\hspace{0.1em} Right panel: extracting a sample configuration from the partition function $\langle \C_{3}(x_1) \C_{1}(x_2) \C_{2}(x_3) \C_{4}(x_4) \C_{5}(x_5) \rangle_{\mu}$ in the case $\mu = (0,4,4,1,5)$; the positions where the paths exit the cylinder are indicated along the top edge.}
\label{lattice-intro}
\end{figure}

\subsection{Two properties of the construction}

The partition functions \eqref{f-rho-intro} thus constructed have two key properties, which turn out to uniquely characterize them:
\begin{prop}
\label{prop1-intro}
Fix an integer $1 \leq i \leq n-1$ and let $\rho$ be a permutation such that $\rho_i < \rho_{i+1}$. The polynomials \eqref{f-rho-intro} satisfy the exchange relation
\begin{align}
\label{f-rho-exchange}
T^{-1}_i
\cdot
f_{\mu}^{\rho}(x_1,\dots,x_n;q,t)
=
t^{-1}
f_{\mu}^{\mathfrak{s}_i \cdot \rho}(x_1,\dots,x_n;q,t),
\end{align}
where $T^{-1}_i$ is an inverse Hecke generator, given by equation \eqref{Hecke}, and 
\begin{align*}
\mathfrak{s}_i \cdot \rho = (\rho_1,\dots,\rho_{i+1},\rho_i,\dots,\rho_n)
\end{align*}
is the same as the original permutation $\rho$, up to transposition of the $i$-th and $(i+1)$-th elements.
\end{prop}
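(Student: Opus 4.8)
The plan is to strip away everything in the definition \eqref{f-rho-intro} that is independent of the two variables $x_i$ and $x_{i+1}$, thereby reducing the claim to a local two-site identity among the row operators, and then to match that identity against the single exchange relation \eqref{CC>-intro}. First I would note that neither the normalization $\Omega_{\mu}(q,t)$ nor the linear form $\langle\,\cdot\,\rangle_{\mu}$ depends on the alphabet $(x_1,\dots,x_n)$, so $T^{-1}_i$ commutes past both and acts only on the operator matrix element. Writing the inverse Hecke generator of \eqref{Hecke} in the form $T^{-1}_i = \alpha(x_i,x_{i+1}) + \beta(x_i,x_{i+1})\,\mathfrak{s}_i$, where $\mathfrak{s}_i$ now acts by transposing $x_i \leftrightarrow x_{i+1}$ and $\alpha,\beta$ are explicit rational functions, I observe that in the product $\C_{\rho_1}(x_1)\cdots\C_{\rho_n}(x_n)$ the only factors carrying any dependence on $x_i$ or $x_{i+1}$ are $\C_{\rho_i}(x_i)$ and $\C_{\rho_{i+1}}(x_{i+1})$. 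All remaining factors are inert under $\mathfrak{s}_i$ and commute with multiplication by $\alpha,\beta$, so they can be carried outside the computation. This reduces the proposition to the two-site operator identity
\[
T^{-1}_i \cdot \bigl[\C_{\rho_i}(x_i)\,\C_{\rho_{i+1}}(x_{i+1})\bigr]
=
t^{-1}\,\C_{\rho_{i+1}}(x_i)\,\C_{\rho_i}(x_{i+1}).
\]

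To establish this, I would set $a = \rho_i$ and $b = \rho_{i+1}$, so that $b > a$ by the hypothesis $\rho_i < \rho_{i+1}$. Applying $T^{-1}_i$ to the left-hand side and using $\mathfrak{s}_i[\C_a(x_i)\C_b(x_{i+1})] = \C_a(x_{i+1})\C_b(x_i)$ yields a linear combination of the two products $\C_a(x_i)\C_b(x_{i+1})$ and $\C_a(x_{i+1})\C_b(x_i)$. On the right-hand side I would expand $\C_b(x_i)\C_a(x_{i+1})$ via \eqref{CC>-intro} with spectral parameters $(x,y) = (x_i,x_{i+1})$, which again produces a linear combination of the same two products. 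The identity then follows by comparing coefficients of these two products.

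The one point that requires checking --- and the structural heart of the statement --- is that the rational coefficients of $T^{-1}_i$ are precisely $t^{-1}$ times the structure constants of \eqref{CC>-intro}. Concretely, matching the coefficient of $\C_a(x_i)\C_b(x_{i+1})$ demands $\alpha(x_i,x_{i+1}) = (1-t^{-1})x_{i+1}/(x_i-x_{i+1})$, while matching the coefficient of $\C_a(x_{i+1})\C_b(x_i)$ demands $\beta(x_i,x_{i+1}) = (t^{-1}x_i - x_{i+1})/(x_i-x_{i+1})$; both hold for the inverse Demazure--Lusztig operator of \eqref{Hecke}. This compatibility between the Hecke action on the spectral parameters and the Yang--Baxter exchange of the row operators is not a coincidence but reflects the common origin of the two structures in the $R$-matrix of the model, and it is exactly at this step that the hypothesis $\rho_i < \rho_{i+1}$ is used, since $b > a$ is precisely the condition under which \eqref{CC>-intro} is valid.
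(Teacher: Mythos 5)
Your proposal is correct and takes essentially the same route as the paper: the paper proves the two-site operator identity $T^{-1}_i\cdot \C_j(x_i)\C_k(x_{i+1}) = t^{-1}\C_k(x_i)\C_j(x_{i+1})$ (for $j<k$, equation \eqref{T-CC}) by expanding $T^{-1}_i$ in its explicit Demazure--Lusztig form \eqref{Hecke} and collapsing the resulting combination via the exchange relation \eqref{CC>}, then applies it inside the linear form exactly as you reduce to your local identity. Your coefficient check ($\alpha = (1-t^{-1})x_{i+1}/(x_i-x_{i+1})$, $\beta = (t^{-1}x_i-x_{i+1})/(x_i-x_{i+1})$) is the same computation, merely organized as a two-sided comparison instead of a one-sided simplification.
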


\begin{prop}
\label{prop2-intro}
For any composition $\mu$ and permutation $\rho = (\rho_1,\dots,\rho_n)$, there holds
\begin{align}
\label{f-rho-cyclic}
f_{\mu}^{\rho}(x_1,\dots,x_{n-1},qx_n;q,t)
=
t^{n-2\rho_n+1}
y_{\rho_n}(\mu;q,t)
f_{\mu}^{\omega \cdot \rho}(x_n,x_1,\dots,x_{n-1};q,t),
\end{align}
where $y_{\rho_n}(\mu;q,t)$ denotes a Cherednik--Dunkl eigenvalue, given by \eqref{rho}, and
\begin{align*}
\omega \cdot \rho
=
(\rho_n,\rho_1,\dots,\rho_{n-1})
\end{align*}
is a cyclic permutation of the original $\rho$. 
\end{prop}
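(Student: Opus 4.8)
The plan is to relate the $q$-shift of a single spectral parameter to a cyclic rotation of the partition function, exploiting the cylindrical geometry of the lattice. The key observation is that the leftmost row operator in the product $\C_{\rho_1}(x_1)\cdots\C_{\rho_n}(x_n)$ is special: in \eqref{f-rho-cyclic} we shift $x_n \mapsto q x_n$, which acts on the \emph{rightmost} row operator $\C_{\rho_n}(x_n)$. Since the linear form $\langle\,\cdot\,\rangle_\mu$ is essentially a trace over $\mathbb{V}(N)$, cyclicity of the trace should allow us to move the operator carrying the shifted parameter from one end of the product to the other. Concretely, I would first write out $f_\mu^\rho(x_1,\dots,x_{n-1},qx_n;q,t)$ using the definition \eqref{f-rho-intro}, isolating the factor $\C_{\rho_n}(qx_n)$ and the normalization $\Omega_\mu(q,t)$.

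The heart of the argument is to understand how multiplying the spectral parameter by $q$ interacts with the trace structure on the physical space. In these integrable models the Macdonald parameter $q$ typically enters through the quantum-space grading (the number of paths wrapping the cylinder, or the ``height'' of the configuration), so that rescaling $x_n \mapsto qx_n$ at the level of a single row operator reproduces, up to an explicit scalar, the effect of cyclically rotating $\C_{\rho_n}$ past the remaining operators inside the trace. I would make this precise by tracking how each vertex weight $L_x(\I,j;\K,\ell)$ transforms under $x \mapsto qx$ and summing the resulting $q$-powers over a complete path configuration; the total exponent of $q$ should assemble into the part of the eigenvalue $y_{\rho_n}(\mu;q,t)$ given by \eqref{rho}, evaluated at the colour $\rho_n$. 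The accompanying power of $t$, namely $t^{\,n-2\rho_n+1}$, I expect to arise from the commutation of $\C_{\rho_n}$ with the other row operators via repeated application of the exchange relation \eqref{CC>-intro} as the operator is transported around the cylinder, together with the boundary weights implicit in $\langle\,\cdot\,\rangle_\mu$.

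After establishing the rotation identity at the level of the partition function, I would verify that the permutation indexing transforms correctly: moving $\C_{\rho_n}$ to the front of the product produces precisely $\C_{\rho_n}(x_n)\C_{\rho_1}(x_1)\cdots\C_{\rho_{n-1}}(x_{n-1})$, whose matrix element is $f_\mu^{\omega\cdot\rho}(x_n,x_1,\dots,x_{n-1};q,t)$ up to the normalization $\Omega_\mu$. Here I must check that the ratio $\Omega_\mu$ on the two sides contributes no extra factor (or a factor I can absorb into the stated prefactor), since the normalization depends only on $\mu$ and not on $\rho$. The cyclic reindexing of the alphabet $(x_1,\dots,x_{n-1},x_n)\mapsto(x_n,x_1,\dots,x_{n-1})$ matches the cyclic action $\omega\cdot\rho=(\rho_n,\rho_1,\dots,\rho_{n-1})$ by construction.

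The main obstacle, I expect, is controlling the scalar prefactor exactly — both the precise power of $t$ and the full eigenvalue contribution $y_{\rho_n}(\mu;q,t)$. The trace is not literally cyclic at the level of individual summands because the paths exit the cylinder at the ``cut'' positions determined by $\mu$ (as in Figure \ref{lattice-intro}), so rotating the operator past the cut incurs the weight of wrapping a colour-$\rho_n$ path all the way around, which is exactly where the $q$- and $t$-dependent spectral factors are generated. Carefully matching this wrapping weight against the explicit form of $y_{\rho_n}(\mu;q,t)$ in \eqref{rho}, and confirming that the residual $t$-power collapses to $t^{\,n-2\rho_n+1}$ after accounting for all the colour-ordering corrections from \eqref{CC>-intro}, will be the delicate bookkeeping at the core of the proof.
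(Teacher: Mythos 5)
Your geometric picture---rotating the row carrying the $q$-shifted parameter around the cylinder and matching the wrapping weight against the eigenvalue---is the right one, and you correctly sense that the failure of literal trace-cyclicity at the cuts is exactly where the prefactor $t^{n-2\rho_n+1}y_{\rho_n}(\mu;q,t)=q^{\mu_{\rho_n}}t^{\gamma_{\rho_n,0}(\mu)}$ must be generated. However, the two mechanisms you propose to realize this do not work as stated. First, you plan to produce the $t$-power by ``repeated application of the exchange relation \eqref{CC>-intro}'' while transporting $\C_{\rho_n}$ past the other row operators. Each application of \eqref{CC>} (or \eqref{CC<}) yields a \emph{two-term} linear combination with different argument assignments, so moving one operator through the product generates a proliferating sum of terms, not a single term times a scalar. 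The only situation in which these relations collapse to one term is when they are contracted against the inverse Hecke generator, as in \eqref{T-CC}---which is precisely why the paper uses the Yang--Baxter algebra only for Proposition \ref{prop1-intro} and abandons operator manipulation entirely when proving the cyclic relation.

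Second, and more fundamentally, the identity \eqref{f-rho-cyclic} is not a structural property of the row operators and the trace alone: it holds only for the twisted trace \eqref{form} with the parameters $v_{i,j}$ fixed to the specific values \eqref{v_ij}--\eqref{gamma}, and your proposal never engages with these parameters, even though they are where the composition-dependent powers of $q$ and $t$ actually reside. The paper's proof is a bijective, configuration-by-configuration comparison: it decomposes the partition function into columns via \eqref{column-decomp}, computes each column's weight in closed form (Theorem \ref{thm:comp}), shows that cyclically rotating a single column's edge states changes its weight by the explicit factor \eqref{rot-const} (Proposition \ref{prop:rotate}), and then shows---by a telescoping computation \eqref{N-columns}--\eqref{j-ratio-2} that crucially invokes the explicit form of $v_{i,j}$---that the product of these factors over all $N+1$ columns equals $q^{\mu_i}t^{\gamma_{i,0}(\mu)}$, independently of the internal path configuration (Proposition \ref{prop:refined-cyclic-rel}); permuted left-boundary conditions then give the statement for general $\rho$. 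To repair your plan you would need to replace both of your mechanisms with a comparison of this kind, in which the fixed internal edge states of the two lattices are matched one-to-one and the twist parameters enter the weight ratio explicitly.
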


The proof of Proposition \ref{prop1-intro} is straightforward; it follows as an immediate consequence of the commutation relation \eqref{CC>-intro} when applied to the right-hand side of \eqref{f-rho-intro}. Exchange relations of the type \eqref{f-rho-exchange} are a standard and well understood feature of partition functions constructed in the manner \eqref{f-rho-intro}.

On the other hand, the proof of Proposition \ref{prop2-intro} requires substantially more work. Remarkably, we find that \eqref{f-rho-cyclic} can be proved bijectively: \ie\ we find an explicit one-to-one pairing between lattice configurations of $f_{\mu}^{\rho}(x_1,\dots,x_{n-1},qx_n;q,t)$ and those of $f_{\mu}^{\omega \cdot \rho}(x_n,x_1,\dots,x_{n-1};q,t)$, and show that their weights always match up to the proportionality factor $t^{n-2\rho_n+1}
y_{\rho_n}(\mu;q,t)$. This bijection (Proposition \ref{prop:refined-cyclic-rel} of the text) is one of our key results; in general, when two vertex model partition functions coincide, it is very unusual to have a direct combinatorial proof of their equality. 

By combining Propositions \ref{prop1-intro} and \ref{prop2-intro}, we arrive at the main result of this paper:
\begin{thm}
Let $\mu$ be a generic composition. When $\rho = (1,\dots,n) = {\rm id}$, we have 
\begin{align*}
Y_i
\cdot
f^{\rm id}_{\mu}(x_1,\dots,x_n;q,t)
=
y_{i}(\mu;q,t)
f^{\rm id}_{\mu}(x_1,\dots,x_n;q,t),
\qquad
1 \leq i \leq n,
\end{align*}
and accordingly, 
\begin{align}
\label{f-ansatz-intro}
\Omega_{\mu}(q,t)
\langle \C_{1}(x_1) \cdots \C_{n}(x_n) \rangle_{\mu} 
=
f_{\mu}(x_1,\dots,x_n;q,t),
\end{align}
where the right-hand side of \eqref{f-ansatz-intro} is the (properly normalized) nonsymmetric Macdonald polynomial associated to $\mu$. 
\end{thm}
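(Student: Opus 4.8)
\emph{Proof proposal.} The plan is to evaluate $Y_i = T_{i-1}\cdots T_1 \cdot \omega \cdot T^{-1}_{n-1}\cdots T^{-1}_i$ on $f^{\rm id}_{\mu}$ by pushing the word for $Y_i$ through Propositions \ref{prop1-intro} and \ref{prop2-intro} one generator at a time, while tracking the permutation label $\rho$ and the accumulated power of $t$. The first thing I would do is recast the cyclic relation \eqref{f-rho-cyclic} as an operator identity. Identifying the affine rotation $\omega$ occurring in $Y_i$ with the map $(\omega g)(x_1,\dots,x_n) = g(x_2,\dots,x_n,qx_1)$ and relabelling $z_1=x_2,\dots,z_{n-1}=x_n,z_n=x_1$ in \eqref{f-rho-cyclic}, the tuple $(z_n,z_1,\dots,z_{n-1})$ becomes $(x_1,\dots,x_n)$ and Proposition \ref{prop2-intro} collapses to the clean statement
\begin{align*}
\omega \cdot f^{\rho}_{\mu}
=
t^{\,n-2\rho_n+1}\, y_{\rho_n}(\mu;q,t)\, f^{\omega\cdot\rho}_{\mu},
\qquad
\omega\cdot\rho = (\rho_n,\rho_1,\dots,\rho_{n-1}).
\end{align*}
This is the only nontrivial input; the remaining moves are bookkeeping.

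Next I would apply the inverse Hecke factors $T^{-1}_{n-1}\cdots T^{-1}_i$ to $f^{\rm id}_{\mu}$. Starting from $\rho=(1,\dots,n)$, the generators $T^{-1}_i,T^{-1}_{i+1},\dots,T^{-1}_{n-1}$ applied in this order each slide the entry $i$ one step to the right; at every stage the value being passed over is the comparatively small $i$, so the monotonicity hypothesis $\rho_i<\rho_{i+1}$ of Proposition \ref{prop1-intro} holds and each step contributes a factor $t^{-1}$. After the $n-i$ applications one reaches
\begin{align*}
T^{-1}_{n-1}\cdots T^{-1}_i \cdot f^{\rm id}_{\mu}
=
t^{-(n-i)}\, f^{\sigma}_{\mu},
\qquad
\sigma=(1,\dots,i-1,i+1,\dots,n,i),
\end{align*}
so that $\sigma_n=i$. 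Applying the operator form of Proposition \ref{prop2-intro} then produces precisely the target eigenvalue, since the rightmost entry is now $i$:
\begin{align*}
\omega\cdot T^{-1}_{n-1}\cdots T^{-1}_i \cdot f^{\rm id}_{\mu}
=
t^{-(n-i)}\,t^{\,n-2i+1}\, y_i(\mu;q,t)\, f^{\sigma'}_{\mu}
=
t^{\,1-i}\, y_i(\mu;q,t)\, f^{\sigma'}_{\mu},
\end{align*}
where $\sigma'=\omega\cdot\sigma=(i,1,2,\dots,i-1,i+1,\dots,n)$.

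Finally I would apply the forward factors $T_1,\dots,T_{i-1}$. Inverting \eqref{f-rho-exchange} yields $T_j f^{\rho}_{\mu}=t\, f^{\mathfrak{s}_j\rho}_{\mu}$ whenever $\rho_j>\rho_{j+1}$; applied in turn, $T_1,\dots,T_{i-1}$ shuttle the entry $i$ back leftward from position $1$ to position $i$, each step now passing over the comparatively large $i$ so that $\rho_j>\rho_{j+1}$ holds, and each contributing a factor $t$. The label returns to ${\rm id}$ and one collects $t^{\,i-1}$, giving
\begin{align*}
Y_i\cdot f^{\rm id}_{\mu}
=
t^{\,i-1}\cdot t^{\,1-i}\, y_i(\mu;q,t)\, f^{\rm id}_{\mu}
=
y_i(\mu;q,t)\, f^{\rm id}_{\mu},
\end{align*}
the three powers $t^{-(n-i)},\,t^{\,n-2i+1},\,t^{\,i-1}$ summing to zero and cancelling exactly. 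As this holds for all $1\le i\le n$, the polynomial $f^{\rm id}_{\mu}$ is a joint eigenfunction of the $Y_i$ with the spectrum $y_i(\mu;q,t)$ prescribed by \eqref{eigen-intro}. To deduce \eqref{f-ansatz-intro} I would then invoke the characterization of nonsymmetric Macdonald polynomials: for generic parameters the joint $Y$-spectrum is simple and separates compositions, so $f^{\rm id}_{\mu}$ must be a scalar multiple of $f_{\mu}$, and comparing the coefficient of the leading monomial shows that the prefactor $\Omega_{\mu}(q,t)$ is exactly the constant making the two coincide.

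The genuine obstacle is entirely absorbed into Proposition \ref{prop2-intro}, the cyclic relation whose bijective proof is the core of the paper; granting it, the only points that demand care are the correct identification of $\omega$ as the operator $g\mapsto g(x_2,\dots,x_n,qx_1)$ so that \eqref{f-rho-cyclic} becomes the eigenvalue-producing identity above, and the verification that the relevant monotonicity hypothesis holds at each of the $n-1$ intermediate Hecke steps (so that neither Proposition \ref{prop1-intro} nor its inverse introduces spurious $t$-factors). Both are routine once the permutation label is tracked explicitly.
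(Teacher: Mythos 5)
Your eigenvalue computation is correct, and it is in substance the paper's own proof: Section \ref{sec:matrix-prod} of the paper performs exactly this bookkeeping, with Proposition \ref{prop1-intro} entering as the operator identity \eqref{T-CC} and Proposition \ref{prop2-intro} entering as the cyclic relation \eqref{cyclic-rel}. The only organizational difference is that the paper transfers the string $T_{i-1}\cdots T_1$ to the other side of the eigenvalue equation (the recast form \eqref{eig-recast}) instead of applying it forward at the end; your three powers $t^{-(n-i)}$, $t^{n-2i+1}$, $t^{i-1}$ are precisely the factors $t^{i-n}$ and $t^{1-i}$ appearing in \eqref{lhs} and \eqref{rhs}, and your coefficient $t^{n-2i+1}y_i(\mu;q,t)$ equals the paper's $q^{\mu_i}t^{\gamma_{i,0}(\mu)}$ in \eqref{cyclic-rel}.

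However, the final step of your argument has a genuine gap. Simplicity of the joint spectrum (Remark \ref{rmk:alt-def}) only gives that $\langle \C_1(x_1)\cdots\C_n(x_n)\rangle_{\mu}$ is \emph{some} scalar multiple of $f_{\mu}$, and that scalar could a priori be zero: the identically-zero polynomial satisfies every eigenvalue equation, so nothing in the argument so far rules out the trivial solution. To conclude \eqref{f-ansatz-intro} with the explicit prefactor \eqref{Omega}, one must actually compute ${\rm Coeff}\big[\langle\C_1(x_1)\cdots\C_n(x_n)\rangle_{\mu};\,x^{\mu}\big]$ and check that it equals $1/\Omega_{\mu}(q,t)\neq 0$; ``comparing the coefficient of the leading monomial'' is not a routine verification here. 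The paper devotes Section \ref{ssec:norm} to it: one first shows, by a colour-by-colour exclusion argument, that the unique lattice configuration contributing to the monomial $x^{\mu}$ is the frozen one \eqref{frozen-xmu}, then evaluates its weight through the column components \eqref{components}, and finally matches $t$-exponents via the identity $\#\{\ell<i: j<\mu_{\ell}\}+\gamma_{i,j}(\mu)=\alpha_{i,j}(\mu)$. Granting Propositions \ref{prop1-intro} and \ref{prop2-intro} as black boxes, this normalization computation is the one remaining substantive ingredient your proposal is missing.
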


In general, $f^{\rho}_{\mu}(x_1,\dots,x_n;q,t)$ is equal to the {\it permuted basement nonsymmetric Macdonald polynomial} associated to the composition $\mu$ and permutation $\rho$, as defined in \cite{Alexandersson}.

\subsection{Combinatorial formula of Haglund--Haiman--Loehr}

An important development in the nonsymmetric Macdonald theory came in the paper \cite{HaglundHL2}, where the authors gave an explicit combinatorial construction of the polynomials in terms of a certain class of tableaux with associated $(x_1,\dots,x_n;q,t)$ statistics. This formula was later generalized to all affine root systems in \cite{RamY}. 

Another of our key results concerns matching \eqref{f-ansatz-intro} with a combinatorial formula in the spirit of \cite{HaglundHL2}. In Sections \ref{ssec:fill-config}--\ref{ssec:weight-match}, starting from the algebraic expression \eqref{f-ansatz-intro}, we show that it can be rewritten as
\begin{align}
\label{hhl-formula-intro}
f_{\mu}(x_1,\dots,x_n;q,t)
=
\sum_{\sigma \in \mathfrak{S}(\mu)}
x^{\sigma}
t^{\Delta(\sigma)}
\prod_{s \in \mathcal{D}(\sigma)}
\frac{1-t}{1-q^{l(s)+1} t^{a(s)+1}}
\prod_{s \in \mathcal{A}(\sigma)}
\frac{q^{l(s)+1} t^{a(s)} (1-t)}{1-q^{l(s)+1} t^{a(s)+1}},
\end{align}
where the sum is taken over {\it non-attacking fillings} $\sigma$ of the (Young) diagram associated to $\mu$. The definition of fillings and of all statistics appearing in \eqref{hhl-formula-intro} can be found in Sections \ref{ssec:diag}--\ref{ssec:formula}. We have adhered to the same conventions as in \cite{HaglundHL2}, and one can easily see that (up to inversion of the parameters $q$ and $t$) equation \eqref{hhl-formula-intro} differs from Theorem 3.5.1 of \cite{HaglundHL2} only up to the way $t$ exponents are distributed across the summand.

The correspondence between \eqref{f-ansatz-intro} and \eqref{hhl-formula-intro} is far from obvious, and originally came as a surprise to us. While \eqref{f-ansatz-intro} requires a more complicated technical setup than the tableaux combinatorics underpinning \eqref{hhl-formula-intro}, the algebraic expression \eqref{f-ansatz-intro} does offer some structural insights that were not previously available. 

The connection with Yang--Baxter integrability outlined in the present text is just the tip of a rather large iceberg; see also the recent work \cite{GarbaliW}. Let us also mention that our current work is conceptually analogous to the older paper of \cite{EtingofK}, where the Macdonald polynomials were expressed as certain traces of intertwiners of $U_q(\mathfrak{sl}_{n+1})$ modules; a concrete connection between the two works remains unclear to us at this stage.

\subsection{Acknowledgments}

We would like to thank Amol Aggarwal, Alexandr Garbali, Jan de Gier, Arun Ram and Lauren Williams for discussions on topics related to this work.

The work of A.~B. was partially supported by the NSF grants 
DMS-1607901 and DMS-1664619. M.~W. was partially supported by the ARC grant DE160100958.

\section{Nonsymmetric Macdonald polynomials}

\subsection{Cherednik--Dunkl operators and nonsymmetric Macdonald polynomials}

In what follows, we use the same conventions as Mimachi--Noumi \cite{MimachiN} in writing generators of the Hecke algebra (up to multiplying the generators of \cite{MimachiN} by $t^{1/2}$), the Cherednik--Dunkl operators, and in defining the polynomials $E_{\mu}(x_1,\dots,x_n)$. All operators in this section carry a tilde; this is to distinguish them from a modified form of the operators that we write down in Section \ref{ssec:reverse}.

We begin by recalling the definition of the {\it Hecke algebra} of type $A_{n-1}$. It is the algebra generated by the family of generators $\tilde{T}_1,\dots,\tilde{T}_{n-1}$, which satisfy the relations
\begin{align}
\label{hecke1}
(\tilde{T}_i - t)(\tilde{T}_i + 1) = 0,
\quad 
1 \leq i \leq n-1,
\qquad
\tilde{T}_i \tilde{T}_{i+1} \tilde{T}_i = \tilde{T}_{i+1} \tilde{T}_i 
\tilde{T}_{i+1},
\quad
1 \leq i \leq n-2,
\end{align}
as well as the commutativity property
\begin{align}
\label{hecke2}
[\tilde{T}_i,\tilde{T}_j] = 0,
\quad \forall\ i,j\ \text{such that}\ |i-j| > 1.
\end{align}
A well-known realization of this algebra is its {\it polynomial representation}. In this representation, one identifies the abstract generator $\tilde{T}_i$ and its inverse with explicit operators on $\mathbb{C}[x_1,\dots,x_n]$, the ring of polynomials in $n$ variables:
\begin{align}
\label{rev-Hecke}
\tilde{T}_i \mapsto t - \frac{t x_i-x_{i+1}}{x_i-x_{i+1}} (1-\s_i),
\quad
\tilde{T}_i^{-1} \mapsto t^{-1}\left(1 - \frac{t x_i-x_{i+1}}{x_i-x_{i+1}} (1-\s_i) \right),
\quad
1 \leq i \leq n-1,
\end{align}
where $\mathfrak{s}_i$ denotes the transposition operator on neighbouring variables, namely
\begin{align*}
\mathfrak{s}_i \cdot h(x_1,\dots,x_n)
=
h(x_1,\dots,x_{i+1},x_i,\dots,x_n),
\quad
1 \leq i \leq n-1,
\end{align*}
for any polynomial $h \in \mathbb{C}[x_1,\dots,x_n]$.

Let us introduce a further operator $\tilde\omega$, defined as follows:
\begin{align}
\label{rev-omega}
\tilde\omega = \s_{n-1} \cdots \s_1 \tau_1,
\end{align}
where $\tau_1$ denotes a $q$-shift operator with action $\tau_1 \cdot h(x_1,\dots,x_n) = h(q x_1, x_2,\dots,x_n)$. One has
\begin{align*}
\tilde\omega \cdot
h(x_1,\dots,x_n)
=
h(q x_n,x_1,\dots,x_{n-1}),
\end{align*}
for any polynomial $h \in \mathbb{C}[x_1,\dots,x_n]$. Collectively, the operators $\tilde{T}_1,\dots,\tilde{T}_{n-1},\tilde\omega$ give a polynomial representation of the affine Hecke algebra of type $A_{n-1}$. The {\it Cherednik--Dunkl operators} $\tilde{Y}_i$ generate an Abelian subalgebra of the affine Hecke algebra. They are given by
\begin{align}
\label{rev-Yi}
\tilde{Y}_i = 
\tilde{T}_i\cdots \tilde{T}_{n-1}
\cdot
\tilde\omega 
\cdot
\tilde{T}_{1}^{-1}
\cdots 
\tilde{T}_{i-1}^{-1},
\qquad
1 \leq i \leq n,
\end{align}
and satisfy the commutation relations
\begin{align}
\label{Y-commute}
[\tilde{Y}_i,\tilde{Y}_j] = 0,
\qquad
\forall\
1 \leq i,j \leq n.
\end{align}
In view of their commutativity, one can seek to jointly diagonalize the operators \eqref{rev-Yi}. This brings us to the definition of the nonsymmetric Macdonald polynomials:

\begin{defn}
Introduce the following two orders on length-$n$ compositions 
$\mu = (\mu_1,\dots,\mu_n)$ and $\nu = (\nu_1,\dots,\nu_n)$. The first is the dominance order, denoted $<$:
\begin{align*}
\nu < \mu 
\iff 
\sum_{i=1}^{j} \nu_i 
< 
\sum_{i=1}^{j} \mu_i, 
\qquad \forall\ 1 \leq j \leq n.
\end{align*}
The second order is denoted $\prec$:
\begin{align*}
\nu \prec \mu 
\iff
\Big(
\nu^{+} < \mu^{+}
\quad
\text{or}
\quad
\nu^{+} = \mu^{+},
\
\nu < \mu
\Big),
\end{align*}
where $\lambda^{+}$ is the unique partition that can be obtained by permuting the parts of a composition $\lambda$. The nonsymmetric Macdonald polynomials $E_{\mu}(x_1,\dots,x_n;q,t)$ are the unique family of polynomials in $\mathbb{C}[x_1,\dots,x_n]$ which satisfy the triangularity property
\begin{align}
\label{monic}
E_{\mu}(x_1,\dots,x_n;q,t) &= 
x^{\mu}
+ 
\sum_{\nu \prec \mu} \tilde{c}_{\mu,\nu}(q,t)
x^{\nu},
\quad
x^{\mu}
=
\prod_{i=1}^{n} x_i^{\mu_i},
\quad
\tilde{c}_{\mu,\nu}(q,t)
\in
\mathbb{Q}(q,t),
\end{align}
as well as the eigenvalue equation
\begin{align}
\tilde{Y}_i \cdot
E_{\mu}(x_1,\dots,x_n;q,t) 
&= 
\tilde{y}_i(\mu;q,t)E_{\mu}(x_1,\dots,x_n;q,t),
\quad
\forall\ 1 \leq i \leq n,
\label{eig-rev-Yi}
\end{align}
with the eigenvalues on the right-hand side of \eqref{eig-rev-Yi} given by
\begin{align}
\label{tilde-rho}
\tilde{y}_i(\mu;q,t)= q^{\mu_i} t^{\tilde\eta_i(\mu)+n-i},
\quad
\tilde\eta_i(\mu)= -\#\{j < i : \mu_j \geq \mu_i\} - \#\{j > i : \mu_j > \mu_i\}.
\end{align}
\end{defn}

\subsection{Reversing the alphabet and compositions}
\label{ssec:reverse}

So far we adhered to the conventions of \cite{MimachiN} in defining the nonsymmetric Macdonald polynomials, but it will turn out to be convenient in our subsequent proofs to deal with a modified version of the polynomials in which both the order of the alphabet $(x_1,\dots,x_n)$ and the composition $\mu = (\mu_1,\dots,\mu_n)$ get reversed:
\begin{defn}
Let $\mu = (\mu_1,\dots,\mu_n)$ be a composition and $\tilde{\mu} = (\mu_n,\dots,\mu_1)$ its reverse ordering. We make the definition 
\begin{align}
\label{f-E}
f_{\mu}(x_1,\dots,x_n)
:=
E_{\tilde\mu}(x_n,\dots,x_1;q,t).
\end{align}
The polynomials $f_{\mu}(x_1,\dots,x_n)$ will be the subject of Section \ref{sec:matrix-prod}; we will continue to refer to them simply as nonsymmetric Macdonald polynomials.
\end{defn}

Let us now translate the operator definitions \eqref{rev-Hecke}--\eqref{rev-Yi} and properties \eqref{monic}--\eqref{tilde-rho} into the reversed-alphabet setting. In doing so, we drop all tildes from the notation. We define new versions of the Hecke generators
\begin{align}
\label{Hecke}
T_i = t - \frac{x_i-t x_{i+1}}{x_i-x_{i+1}} (1-\s_i),
\quad
T^{-1}_i  
= 
t^{-1}\left(1 - \frac{x_i-t x_{i+1}}{x_i-x_{i+1}} (1-\s_i) \right),
\quad
1 \leq i \leq n-1,
\end{align}
as well as a new cyclic generator
\begin{align}
\label{omega}
\omega = \tau_1 \s_1 \cdots \s_{n-1},
\end{align}
and reintroduce the Cherednik--Dunkl operators as follows: 
\begin{align}
\label{Yi}
Y_i
=
T_{i-1} \cdots T_1 \cdot \omega \cdot T^{-1}_{n-1} \cdots T^{-1}_{i},
\qquad
1 \leq i \leq n.
\end{align}
It is easy to check that the two families of operators \eqref{rev-Yi}, \eqref{Yi} are related via the identity
\begin{align}
\label{YY}
Y_{n-i+1} \cdot h(x_1,\dots,x_n)
=
\tilde{Y}_i \cdot h(x_n,\dots,x_1),
\end{align}
for any polynomial $h \in \mathbb{C}[x_1,\dots,x_n]$.

\begin{prop}
The polynomials $f_{\mu}(x_1,\dots,x_n)$ are uniquely characterized by their monomial expansion
\begin{align}
\label{monic-f}
f_{\mu}(x_1,\dots,x_n) &= 
x^{\mu}
+ 
\sum_{\tilde\nu \prec \tilde\mu} c_{\mu,\nu}(q,t)
x^{\nu},
\quad
c_{\mu,\nu}(q,t)
\in
\mathbb{Q}(q,t),
\end{align}
with the sum taken over all compositions $\nu$ such that $\tilde{\nu} \prec \tilde{\mu}$, as well as the eigenvalue equation
\begin{align}
\label{eig-Yi}
Y_i\cdot
f_{\mu}(x_1,\dots,x_n)
=
y_i(\mu;q,t)
f_{\mu}(x_1,\dots,x_n),
\quad
\forall\ 1 \leq i \leq n,
\end{align}
with the eigenvalues on the right-hand side of \eqref{eig-Yi} given by
\begin{align}
\label{rho}
y_i(\mu;q,t)
=
q^{\mu_i}
t^{\eta_i(\mu)+i-1},
\quad
\eta_i(\mu)= -\#\{j < i : \mu_j > \mu_i\} - \#\{j > i : \mu_j \geq \mu_i\}.
\end{align}
\end{prop}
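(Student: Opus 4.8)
The plan is to derive both asserted properties from the already-established characterization of $E_{\tilde\mu}$ (the Definition preceding \eqref{monic}) by transporting everything through the alphabet-reversing involution. Let $\iota$ denote the linear involution on $\mathbb{C}[x_1,\dots,x_n]$ given by $\iota\cdot h(x_1,\dots,x_n) = h(x_n,\dots,x_1)$; it satisfies $\iota\cdot x^{\lambda} = x^{\tilde\lambda}$ for every composition $\lambda$, and by \eqref{f-E} we have $f_{\mu} = \iota\cdot E_{\tilde\mu}$, equivalently $\iota\cdot f_{\mu} = E_{\tilde\mu}$. Since $\iota$ is a bijection, uniqueness of the family $f_{\mu}$ follows at once from uniqueness of the $E_{\tilde\mu}$: any polynomial satisfying \eqref{monic-f} and \eqref{eig-Yi} is carried by $\iota$ to a polynomial satisfying \eqref{monic} and \eqref{eig-rev-Yi} for $\tilde\mu$, hence equals $E_{\tilde\mu}$, and applying $\iota$ once more recovers $f_{\mu}$.

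For the triangularity \eqref{monic-f}, I would apply $\iota$ to the expansion \eqref{monic} of $E_{\tilde\mu}$. The leading term $x^{\tilde\mu}$ maps to $x^{\mu}$, while each subordinate monomial $x^{\lambda}$ with $\lambda \prec \tilde\mu$ maps to $x^{\tilde\lambda}$. Reindexing the sum by $\nu := \tilde\lambda$, so that the constraint $\lambda \prec \tilde\mu$ reads $\tilde\nu \prec \tilde\mu$, and setting $c_{\mu,\nu}(q,t) := \tilde c_{\tilde\mu,\tilde\nu}(q,t)$, yields \eqref{monic-f} verbatim.

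The eigenvalue equation is the substantive step. I would first recast the given relation \eqref{YY} as the operator conjugation $Y_i = \iota\,\tilde Y_{n-i+1}\,\iota$; the quickest route is to check $\iota\tilde T_k\iota = T_{n-k}$ (a short manipulation of the rational prefactors comparing \eqref{rev-Hecke} and \eqref{Hecke}) and $\iota\tilde\omega\iota = \omega$, then multiply the factors of \eqref{rev-Yi} in the correct order to land on \eqref{Yi}. Granting this, we compute $Y_i\cdot f_{\mu} = \iota\tilde Y_{n-i+1}\iota f_{\mu} = \iota\tilde Y_{n-i+1}E_{\tilde\mu} = \tilde y_{n-i+1}(\tilde\mu;q,t)\,\iota E_{\tilde\mu} = \tilde y_{n-i+1}(\tilde\mu;q,t)\,f_{\mu}$, using \eqref{eig-rev-Yi} in the middle. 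It then remains to verify the scalar identity $\tilde y_{n-i+1}(\tilde\mu;q,t) = y_i(\mu;q,t)$. Writing $\tilde\mu = (\mu_n,\dots,\mu_1)$, the part of index $n-i+1$ in $\tilde\mu$ is precisely $\mu_i$, so the $q$-powers match; for the $t$-powers one checks $n-(n-i+1) = i-1$ and, reindexing the two counting sets in $\tilde\eta_{n-i+1}(\tilde\mu)$ of \eqref{tilde-rho} by $m = n-k+1$, that they transform exactly into the two counting sets of $\eta_i(\mu)$ in \eqref{rho} (the weak condition $\lambda_k \geq \lambda_i$ for $k$ below the index becomes $\mu_m \geq \mu_i$ for $m$ above $i$, and the strict condition above the index becomes the strict condition below $i$). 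This gives $\tilde\eta_{n-i+1}(\tilde\mu) = \eta_i(\mu)$ and closes the computation.

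The only genuine obstacle is bookkeeping: keeping the index reversals $i \leftrightarrow n-i+1$, the position reversals in $\iota\tilde T_k\iota = T_{n-k}$, and the set reindexing $k \mapsto n-k+1$ all mutually consistent. None of these steps is deep, but a single misaligned index would break the matching of eigenvalues; I would therefore confirm both $\iota\tilde Y_{n-i+1}\iota = Y_i$ and $\tilde\eta_{n-i+1}(\tilde\mu) = \eta_i(\mu)$ against an explicit small-$n$ example before asserting the general identity.
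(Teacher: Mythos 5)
Your proposal is correct and follows essentially the same route as the paper: transport everything through the alphabet-reversing involution, observe that triangularity \eqref{monic} passes directly to \eqref{monic-f}, use the relation \eqref{YY} (equivalently $Y_i = \iota\,\tilde{Y}_{n-i+1}\,\iota$, which the paper states as "easy to check" and you verify via $\iota\tilde{T}_k\iota = T_{n-k}$ and $\iota\tilde\omega\iota=\omega$), and then match eigenvalues by checking $\tilde{y}_{n-i+1}(\tilde\mu;q,t)=y_i(\mu;q,t)$ through the index reversal $m=n-k+1$, under which the weak/strict inequality counts swap sides exactly as you describe. The only difference is one of explicitness: you spell out the conjugation identities and the set reindexing that the paper compresses into "easily verified."
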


\begin{proof}
We check how the defining properties \eqref{monic}--\eqref{tilde-rho} of the nonsymmetric Macdonald polynomials translate under the substitution \eqref{f-E}. One immediately sees that \eqref{monic} maps over to \eqref{monic-f}, where the coefficients in the two expansions are related via 
\begin{align*}
\tilde{c}_{\tilde{\mu},\tilde{\nu}}(q,t) = c_{\mu,\nu}(q,t).
\end{align*} 
Similarly, substituting $E_{\mu}(x_1,\dots,x_n;q,t) = f_{\tilde{\mu}}(x_n,\dots,x_1)$ into \eqref{eig-rev-Yi}, we read
\begin{align}
\label{prop-1}
\tilde{Y}_i
\cdot
f_{\tilde{\mu}}(x_n,\dots,x_1)
=
\tilde{y}_i(\mu;q,t)
f_{\tilde{\mu}}(x_n,\dots,x_1),
\end{align}
and using \eqref{YY} we can reverse the order of the alphabets that appear in \eqref{prop-1}. Reversing also the order of the composition $\mu$, one obtains
\begin{align}
Y_{n-i+1}
\cdot
f_{\mu}(x_1,\dots,x_n)
=
\tilde{y}_{i}(\tilde{\mu};q,t)
f_{\mu}(x_1,\dots,x_n),
\end{align}
or equivalently,
\begin{align}
Y_i
\cdot
f_{\mu}(x_1,\dots,x_n)
=
\tilde{y}_{n-i+1}(\tilde{\mu};q,t)
f_{\mu}(x_1,\dots,x_n).
\end{align}
This matches the equation \eqref{eig-Yi} provided that $y_i(\mu;q,t) = 
\tilde{y}_{n-i+1}(\tilde{\mu};q,t)$, which is easily verified using the 
form of the eigenvalues \eqref{tilde-rho} and \eqref{rho}.

\end{proof}

\begin{rmk}
\label{rmk:alt-def}
In the current work, it will be convenient to make use of an alternative characterization of the nonsymmetric Macdonald polynomials. Let $\mathbb{C}^{D}[x_1,\dots,x_n]$ be the set of all polynomials in 
$(x_1,\dots,x_n)$ of total degree $\leq D$. Viewing the latter as a vector space, we can write
\begin{align*}
\mathbb{C}^{D}[x_1,\dots,x_n]
=
{\rm Span}\{x^{\mu}\}_{|\mu| \leq D},
\qquad
|\mu| := \sum_{i=1}^{n} \mu_i.
\end{align*}
Introduce a generating series $Y(w) = \sum_{i=1}^{n} Y_i w^{i-1}$ of the operators \eqref{Yi}, which can be viewed as a linear operator $Y(w) : \mathbb{C}^{D}[x_1,\dots,x_n] \rightarrow \mathbb{C}^{D}[x_1,\dots,x_n]$. The corresponding eigenvalues $y(\mu;q,t;w) = \sum_{i=1}^{n} y_i(\mu;q,t) w^{i-1}$ are pairwise distinct for all $|\mu| \leq D$, with $q,t,w$ living inside some open set in $\mathbb{C}$.

The simplicity of the spectrum of $Y(w)$ uniquely determines each eigenvector $f_{\mu}(x_1,\dots,x_n)$ (with $|\mu| \leq D$) up to a multiplicative constant. One can therefore define $f_{\mu}$ as the unique polynomial solution of the equations \eqref{eig-Yi} such that $f_{\mu} \in \mathbb{C}^{|\mu|}[x_1,\dots,x_n]$ and ${\rm Coeff}[f_{\mu};x^{\mu}] = 1$, and we will tacitly assume this definition in what follows.

Simplicity of the spectrum plays an important role in Cherednik's theory of double affine Hecke algebras, \cf\ \cite[Theorem 8.2(i)]{Cherednik3}.
\end{rmk}

\section{Vertex models related to $\mathfrak{sl}_{n+1}$}

\subsection{Fundamental $R$-matrix}

The lattice models which appear in this work are based on certain representations of the quantized affine algebra 
$U_t(\wh{\mathfrak{sl}_{n+1}})$.\footnote{Throughout the paper, we use $t$ to indicate the intrinsic quantum parameter of our lattice models, while $q$ will appear extrinsically in our choice of boundary conditions in partition functions. We make these choices in order to be consistent with the canonical use of $(q,t)$ parameters in Macdonald polynomial theory \cite{Macdonald}.} In this setting, the basic object is the {\it fundamental $\mathfrak{sl}_{n+1}$ $R$-matrix} \cite{Jimbo1,Jimbo2,Bazhanov,FRT}, which takes the form
\begin{align}
\label{Rmat}
R_z
&=
\sum_{i=0}^{n}
\left(
R_z(i,i;i,i)
E^{(ii)} \otimes E^{(ii)}
\right)
\\
\nonumber
&+
\sum_{0 \leq i < j \leq n}
\left(
R_z(j,i;j,i)
E^{(ii)} \otimes E^{(jj)}
+
R_z(i,j;i,j)
E^{(jj)} \otimes E^{(ii)}
\right)
\\
\nonumber
&+
\sum_{0 \leq i < j \leq n}
\left(
R_z(j,i;i,j)
E^{(ij)} \otimes E^{(ji)}
+
R_z(i,j;j,i)
E^{(ji)} \otimes E^{(ij)}
\right)
\end{align}
where $E^{(ij)}$ denotes the $(n+1) \times (n+1)$ elementary matrix with a $1$ at position $(i,j)$ and $0$ everywhere else. The matrix entries of \eqref{Rmat} are given explicitly by
\begin{align}
\label{R-weights-a}
&
\left.
R_z(i,i;i,i)
=
1,
\quad
i \in \{0,1,\dots,n\},
\right.
\\
\nonumber
\\
\label{R-weights-bc}
&
\left.
\begin{array}{ll}
R_z(j,i;j,i)
=
\dfrac{t(1-z)}{1-tz},
&
\quad
R_z(i,j;i,j)
=
\dfrac{1-z}{1-tz}
\\ \\
R_z(j,i;i,j)
=
\dfrac{1-t}{1-tz},
&
\quad
R_z(i,j;j,i)
=
\dfrac{(1-t)z}{1-tz}
\end{array}
\right\}
\quad
i,j \in \{0,1,\dots,n\},
\quad  i<j.
\end{align}
All other matrix entries $R_z(i,j;k,\ell)$ which do not fall into a category listed above are by definition equal to $0$.

We shall denote the entries of the $R$-matrix pictorially using {\it vertices}. A vertex is the intersection of an oriented horizontal and vertical line, with a state variable $i \in \{0,1,\dots,n\}$ assigned to each of the four surrounding horizontal and vertical line segments. The $R$-matrix entries are identified with such vertices, as shown below: 
\begin{align}
\label{R-vert}
R_z(i,j; k,\ell)
=
\tikz{0.7}{
\draw[densely dotted] (-0.5,0) arc (180:270:0.5);
\node[below,left] at (0.1,-0.2) {\tiny $z$};
\draw[lgray,line width=1.5pt,->] (-1,0) -- (1,0);
\draw[lgray,line width=1.5pt,->] (0,-1) -- (0,1);
\node[left] at (-1,0) {\tiny $j$};\node[right] at (1,0) {\tiny $\ell$};
\node[below] at (0,-1) {\tiny $i$};\node[above] at (0,1) {\tiny $k$};
},
\quad
i,j,k,\ell \in \{0,1,\dots,n\},
\end{align}
where the angle between the lines takes the value $z$. We will often omit this angle from pictures when the {\it spectral parameter} $z$ of the $R$-matrix is clear from context. Comparing \eqref{R-weights-a}, \eqref{R-weights-bc} against \eqref{R-vert}, we obtain five different types of non-vanishing vertices, with corresponding weights recorded below:
\begin{align}
\label{fund-vert}
\begin{tabular}{|c|c|c|c|c|}
\hline
\quad
\tikz{0.6}{
\draw[lgray,line width=1.5pt,->] (-1,0) -- (1,0);
\draw[lgray,line width=1.5pt,->] (0,-1) -- (0,1);
\node[left] at (-1,0) {\tiny $i$};\node[right] at (1,0) {\tiny $i$};
\node[below] at (0,-1) {\tiny $i$};\node[above] at (0,1) {\tiny $i$};
}
\quad
&
\quad
\tikz{0.6}{
\draw[lgray,line width=1.5pt,->] (-1,0) -- (1,0);
\draw[lgray,line width=1.5pt,->] (0,-1) -- (0,1);
\node[left] at (-1,0) {\tiny $i$};\node[right] at (1,0) {\tiny $i$};
\node[below] at (0,-1) {\tiny $j$};\node[above] at (0,1) {\tiny $j$};
}
\quad
&
\quad
\tikz{0.6}{
\draw[lgray,line width=1.5pt,->] (-1,0) -- (1,0);
\draw[lgray,line width=1.5pt,->] (0,-1) -- (0,1);
\node[left] at (-1,0) {\tiny $j$};\node[right] at (1,0) {\tiny $j$};
\node[below] at (0,-1) {\tiny $i$};\node[above] at (0,1) {\tiny $i$};
}
\quad
&
\quad
\tikz{0.6}{
\draw[lgray,line width=1.5pt,->] (-1,0) -- (1,0);
\draw[lgray,line width=1.5pt,->] (0,-1) -- (0,1);
\node[left] at (-1,0) {\tiny $i$};\node[right] at (1,0) {\tiny $j$};
\node[below] at (0,-1) {\tiny $j$};\node[above] at (0,1) {\tiny $i$};
}
\quad
&
\quad
\tikz{0.6}{
\draw[lgray,line width=1.5pt,->] (-1,0) -- (1,0);
\draw[lgray,line width=1.5pt,->] (0,-1) -- (0,1);
\node[left] at (-1,0) {\tiny $j$};\node[right] at (1,0) {\tiny $i$};
\node[below] at (0,-1) {\tiny $i$};\node[above] at (0,1) {\tiny $j$};
}
\\[1.3cm]
\quad
$1$
\quad
& 
\quad
$\dfrac{t(1-z)}{1-tz}$
\quad
&
\quad
$\dfrac{1-z}{1-tz}$
\quad
& 
\quad
$\dfrac{1-t}{1-tz}$
\quad
&
\quad
$\dfrac{(1-t)z}{1-tz}$
\quad 
\\[0.7cm]
\hline
\end{tabular}
\end{align}
where the indices satisfy the constraint $0 \leq i < j \leq n$. This model can be viewed as a multi-coloured (or higher-rank) extension of the {\it stochastic six-vertex model} \cite{Baxter,GwaS}; it reduces to the latter when $n=1$.

\subsection{Bosonic $L$-matrix}

Our subsequent partition functions will be constructed in terms of another object, which we term the {\it $L$-matrix}. It is an infinite-dimensional matrix, whose components will be represented as {\it faces} with labels assigned to each of the four edges:
\begin{align}
\label{generic-L}
L_x(\I,j; \K,\ell)
\index{L1@$L_x(\I,j; \K,\ell)$; vertex weights}
=
\tikz{0.7}{
\draw[lgray,line width=1.5pt] (-1,-1) -- (1,-1) -- (1,1) -- (-1,1) -- (-1,-1);
\node[left] at (-1,0) {\tiny $j$};\node[right] at (1,0) {\tiny $\ell$};
\node[below] at (0,-1) {\tiny $\I$};\node[above] at (0,1) {\tiny $\K$};
\node[text centered] at (0,0) {$x$};
},
\quad
j,\ell \in \{0,1,\dots,n\},
\quad
\I,\K \in \mathbb{N}^n,
\end{align}
where $\mathbb{N} = \{0,1,2,\dots\}$. Note that the labels assigned to the left and right vertical edges continue to take values in $\{0,1,\dots,n\}$, as in the case of the $R$-matrix vertices \eqref{R-vert}. In contrast, the labels assigned to the bottom and top horizontal edges are now $n$-dimensional vectors of nonnegative integers, with no bound imposed on the size of these integers.\footnote{Our use of faces instead of vertices is for notational convenience only; we are not considering ``solid-on-solid'' (SOS) or ``Interaction Round a Face'' (IRF) models, where the weights depend on values at vertices around a face rather than at their edges.}

Before stating the face weights \eqref{generic-L} explicitly, we develop some useful vector notation. For all $1\leq i \leq n$, let $\bm{e}_i$ denote the $i$-th Euclidean unit vector. For any vector $\I = (I_1,\dots,I_n) \in \mathbb{N}^n$ and indices $i,j \in \{1,\dots,n\}$ we define 
\begin{align*}
\I^{+}_{i}
=
\I + \bm{e}_i,
\quad
\I^{-}_{i}
=
\I - \bm{e}_i,
\quad
\I^{+-}_{ij}
=
\I + \bm{e}_i - \bm{e}_j,
\quad
|\I|
=
\sum_{k=1}^{n} I_k,
\quad
\Is{i}{j}
=
\sum_{k=i}^{j} I_k,
\end{align*}
where in the final case it is assumed that $i \leq j$; by agreement, we choose $\Is{i}{j} = 0$ for $i>j$. In terms of these notations, we tabulate the face weights \eqref{generic-L} below:
\begin{align}
\label{bos-weights}
\begin{tabular}{|c|c|c|}
\hline
\quad
\tikz{0.7}{
\draw[lgray,line width=1.5pt] (-1,-1) -- (1,-1) -- (1,1) -- (-1,1) -- (-1,-1);
\node[left] at (-1,0) {\tiny $0$};\node[right] at (1,0) {\tiny $0$};
\node[below] at (0,-1) {\tiny $\I$};\node[above] at (0,1) {\tiny $\I$};
}
\quad
&
\quad
\tikz{0.7}{
\draw[lgray,line width=1.5pt] (-1,-1) -- (1,-1) -- (1,1) -- (-1,1) -- (-1,-1);
\node[left] at (-1,0) {\tiny $i$};\node[right] at (1,0) {\tiny $i$};
\node[below] at (0,-1) {\tiny $\I$};\node[above] at (0,1) {\tiny $\I$};
}
\quad
&
\quad
\tikz{0.7}{
\draw[lgray,line width=1.5pt] (-1,-1) -- (1,-1) -- (1,1) -- (-1,1) -- (-1,-1);
\node[left] at (-1,0) {\tiny $0$};\node[right] at (1,0) {\tiny $i$};
\node[below] at (0,-1) {\tiny $\I$};\node[above] at (0,1) {\tiny $\I^{-}_i$};
}
\quad
\\[1.3cm]
\quad
$1$
\quad
& 
\quad
$xt^{\Is{i+1}{n}}$
\quad
& 
\quad
$x(1-t^{I_i}) t^{\Is{i+1}{n}}$
\quad
\\[0.7cm]
\hline
\quad
\tikz{0.7}{
\draw[lgray,line width=1.5pt] (-1,-1) -- (1,-1) -- (1,1) -- (-1,1) -- (-1,-1);
\node[left] at (-1,0) {\tiny $i$};\node[right] at (1,0) {\tiny $0$};
\node[below] at (0,-1) {\tiny $\I$};\node[above] at (0,1) {\tiny $\I^{+}_i$};
}
\quad
&
\quad
\tikz{0.7}{
\draw[lgray,line width=1.5pt] (-1,-1) -- (1,-1) -- (1,1) -- (-1,1) -- (-1,-1);
\node[left] at (-1,0) {\tiny $i$};\node[right] at (1,0) {\tiny $j$};
\node[below] at (0,-1) {\tiny $\I$};\node[above] at (0,1) 
{\tiny $\I^{+-}_{ij}$};
}
\quad
&
\quad
\tikz{0.7}{
\draw[lgray,line width=1.5pt] (-1,-1) -- (1,-1) -- (1,1) -- (-1,1) -- (-1,-1);
\node[left] at (-1,0) {\tiny $j$};\node[right] at (1,0) {\tiny $i$};
\node[below] at (0,-1) {\tiny $\I$};\node[above] at (0,1) {\tiny $\I^{+-}_{ji}$};
}
\quad
\\[1.3cm] 
\quad
$1$
\quad
& 
\quad
$x(1-t^{I_j}) t^{\Is{j+1}{n}}$
\quad
&
\quad
$0$
\quad
\\[0.7cm]
\hline
\end{tabular} 
\end{align}
where it is assumed that $1 \leq i < j \leq n$. All of the cases of $L_x(\I,j;\K,\ell)$ listed in \eqref{bos-weights} satisfy 
$\I +{\bm e}_j = \K + {\bm e}_{\ell}$; whenever this condition is violated, we shall set $L_x(\I,j;\K,\ell) = 0$. In view of this conservation property, we are able to think of each of the faces in \eqref{bos-weights} in terms of coloured lattice paths which enter via the left/bottom edges and exit via the right/top edges: whenever a left/right edge assumes the value $i \geq 1$, this means a path of colour $i$ is incident at that edge, and similarly when a bottom/top edge assumes the value 
$\I = (I_1,\dots,I_n)$, this means that there are $I_i$ paths of colour $i$ incident at that edge (for all $1 \leq i \leq n$).

\begin{rmk}
\label{rmk:x-depend}
Note that the face weights \eqref{bos-weights} have a very simple dependence on the parameter $x$. If $\ell \geq 1$, $L_x(\I,j; \K,\ell)$ is a homogeneous polynomial of degree $1$ in $x$; if $\ell = 0$, 
$L_x(\I,j; \K,\ell)$ does not depend on $x$. Put another way, we obtain a factor of $x$ every time a path passes through the right vertical edge of a face.
\end{rmk}

\begin{ex}
Let $n=3$ and consider $L_x(\I,j; \K,\ell)$ for $\I = (1,1,1)$, $j=1$, 
$\K = (2,0,1)$, $\ell = 2$. It has the pictorial representation and corresponding weight
\begin{align*}
L_x(\I,j; \K,\ell)
=
\tikz{0.7}{
\draw[lgray,line width=1.5pt] (-1,-1) -- (1,-1) -- (1,1) -- (-1,1) -- (-1,-1);
\node[left] at (-1,0) {\tiny $1$};\node[right] at (1,0) {\tiny $2$};
\node[below] at (0,-1) {\tiny $(1,1,1)$};\node[above] at (0,1) 
{\tiny $(2,0,1)$};
\draw[red,line width=1pt,->] (-1,0) -- (-0.5,0) -- (-0.5,1);
\draw[red,line width=1pt,->] (-0.3,-1) -- (-0.3,1);
\draw[green,line width=1pt,->] (0,-1) -- (0,0) -- (1,0);
\draw[blue,line width=1pt,->] (0.3,-1) -- (0.3,1);
}
=
x(1-t)t.
\end{align*}
\end{ex}

\begin{ex}
Let $n=3$ and consider $L_x(\I,j; \K,\ell)$ for $\I = (2,1,2)$, $j=0$, 
$\K = (1,1,2)$, $\ell = 1$. It has the pictorial representation and corresponding weight
\begin{align*}
L_x(\I,j; \K,\ell)
=
\tikz{0.7}{
\draw[lgray,line width=1.5pt] (-1,-1) -- (1,-1) -- (1,1) -- (-1,1) -- (-1,-1);
\node[left] at (-1,0) {\tiny $0$};\node[right] at (1,0) {\tiny $1$};
\node[below] at (0,-1) {\tiny $(2,1,2)$};\node[above] at (0,1) 
{\tiny $(1,1,2)$};
\draw[red,line width=1pt,->] (-0.5,-1) -- (-0.5,1);
\draw[red,line width=1pt,->] (-0.3,-1) -- (-0.3,0) -- (1,0);
\draw[green,line width=1pt,->] (0,-1) -- (0,1);
\draw[blue,line width=1pt,->] (0.3,-1) -- (0.3,1);
\draw[blue,line width=1pt,->] (0.5,-1) -- (0.5,1);
}
=
x(1-t^2)t^3.
\end{align*}
\end{ex}

\subsection{Yang--Baxter equation}

Now we come to the key property of the $R$ and $L$-matrices thus introduced, the {\it Yang--Baxter equation}.

\begin{prop}
\label{prop:RLL}
For any fixed integers $i_1,i_2,i_3,j_1,j_2,j_3 \in \{0,1,\dots,n\}$ and vectors $\I,\J \in \mathbb{N}^n$, the weights \eqref{R-weights-a}, \eqref{R-weights-bc}, \eqref{bos-weights} satisfy the relation
\begin{multline}
\label{RLL}
\sum_{0 \leq k_1,k_2 \leq n}
\
\sum_{\K \in \mathbb{N}^n}
R_{y/x}(i_2,i_1;k_2,k_1)
L_x(\I,k_1;\K,j_1)
L_y(\K,k_2;\J,j_2)
\\
=
\sum_{0 \leq k_1,k_2 \leq n}
\
\sum_{\K \in \mathbb{N}^n}
L_y(\I,i_2;\K,k_2)
L_x(\K,i_1;\J,k_1)
R_{y/x}(k_2,k_1;j_2,j_1).
\end{multline}
In terms of the graphical conventions adopted in \eqref{R-vert} and \eqref{generic-L}, the equation \eqref{RLL} translates into
\begin{align}
\label{graph-RLL}
\sum_{0 \leq k_1,k_2 \leq n}
\
\sum_{\K \in \mathbb{N}^n}
\tikz{1.7}{
\draw[densely dotted] (-1.75,0.75) arc (135:225:{1/sqrt(8)});
\node[left] at (-1.47,0.5) {\tiny $y/x$};
\draw[lgray,line width=1.5pt] (-1,-0.5) -- (0,-0.5) -- (0,1.5) -- (-1,1.5) -- (-1,-0.5);
\draw[lgray,line width=1.5pt] (-1,0.5) -- (0,0.5);
\node[text centered] at (-0.5,0) {$x$};
\node[text centered] at (-0.5,1) {$y$};
\node[right] at (0,1) {\fs $j_2$};
\node[right] at (0,0) {\fs $j_1$};
\node[below] at (-0.5,-0.5) {\fs $\I$};
\node at (-0.5,0.5) {\fs $\K$};
\node[above] at (-0.5,1.5) {\fs $\J$};
\draw[lgray,line width=1.5pt,->]
(-2,1) node[left] {\color{black} \fs $i_1$} -- (-1,0) node[left] {\color{black} \fs $k_1\phantom{.}$};
\draw[lgray,line width=1.5pt,->] 
(-2,0) node[left] {\color{black} \fs $i_2$} -- (-1,1) node[left] {\color{black} \fs $k_2\phantom{.}$ };
}
\quad
=
\quad
\sum_{0 \leq k_1,k_2 \leq n}
\
\sum_{\K \in \mathbb{N}^n}
\tikz{1.7}{
\draw[densely dotted] (0.25,0.75) arc (135:225:{1/sqrt(8)});
\node[left] at (0.53,0.5) {\tiny $y/x$};
\draw[lgray,line width=1.5pt] (-1,-0.5) -- (0,-0.5) -- (0,1.5) -- (-1,1.5) -- (-1,-0.5);
\draw[lgray,line width=1.5pt] (-1,0.5) -- (0,0.5);
\node[text centered] at (-0.5,0) {$y$};
\node[text centered] at (-0.5,1) {$x$};
\node[left] at (-1,1) {\fs $i_1$};
\node[left] at (-1,0) {\fs $i_2$};
\node[below] at (-0.5,-0.5) {\fs $\I$};
\node at (-0.5,0.5) {\fs $\K$};
\node[above] at (-0.5,1.5) {\fs $\J$};
\draw[lgray,line width=1.5pt,->] 
(0,1) node[right] {\color{black} \fs $k_1$} -- (1,0) node[right] {\color{black} \fs $j_1$};
\draw[lgray,line width=1.5pt,->] 
(0,0) node[right] {\color{black} \fs $k_2$} -- (1,1) node[right] {\color{black} \fs $j_2$};
}
\end{align}
in which we explicitly show the parameter dependence of the $R$ and $L$-matrices.
\end{prop}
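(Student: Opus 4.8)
The plan is to verify the RLL relation \eqref{RLL} by a direct but carefully organized case analysis, using the conservation laws of both matrices to collapse the internal sums to a handful of terms. First I would record the two conservation laws. The $R$-matrix weights \eqref{R-weights-a}--\eqref{R-weights-bc} vanish unless $\{i_1,i_2\} = \{k_1,k_2\}$ as a multiset, while each $L$-weight in \eqref{bos-weights} vanishes unless $\I + \bm{e}_j = \K + \bm{e}_\ell$ (with the convention $\bm{e}_0 = \bm{0}$). Composing these along either side, one sees that both sides vanish identically unless the global balance $\I + \bm{e}_{i_1} + \bm{e}_{i_2} = \J + \bm{e}_{j_1} + \bm{e}_{j_2}$ holds; I would assume this throughout, and note that it renders all the internal sums over $\K \in \mathbb{N}^n$ finite. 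In fact $\K$ is then determined by $\I$ and $k_1$ via $\K = \I + \bm{e}_{k_1} - \bm{e}_{j_1}$ on the left (and analogously on the right), so each side is really a sum over the at most two admissible assignments of $(k_1,k_2)$.

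Second I would reduce to a small number of genuinely distinct configurations. Since the $R$-matrix weights depend on the pair $(i_1,i_2)$ only through whether the two colours coincide and, if not, through their order, while the $L$-weights depend on the colours only through the tail-sums $\Is{i+1}{n}$, the substantive cases are: (i) $i_1 = i_2$, where the $R$-matrix acts diagonally and a single internal state survives on each side; (ii) $i_1 \neq i_2$ with both nonzero, where the $R$-matrix opens a transmission channel $(k_1,k_2)=(i_1,i_2)$ and a reflection channel $(k_1,k_2)=(i_2,i_1)$, so that each side carries a genuine two-term sum; and (iii) exactly one of $i_1,i_2$ equal to $0$, i.e.\ a coloured path meeting a vacant line. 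In every case the outgoing colours $j_1,j_2$ are constrained by the balance law, leaving only finitely many sub-cases.

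Third, in each sub-case I would substitute the tabulated weights and check the resulting algebraic identity in $x,y,t$ (equivalently in $z=y/x$ and $t$). Here I would exploit the fact that any colour not involved in the local transition behaves as a spectator: its multiplicity is unchanged from $\I$ to $\J$ and enters the relevant weights on both sides only through matching powers of $t^{\Is{i+1}{n}}$, which can be extracted as common factors. After this extraction each sub-case reduces to a single identity among the explicit rational functions of $(z,t)$ appearing in \eqref{R-weights-bc} and \eqref{bos-weights}, which can be checked by hand.

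The main obstacle is case (ii), with $i_1 \neq i_2$ both nonzero and distinct. There the sum over $(k_1,k_2)$ and $\K$ is genuinely two-dimensional on each side, and as a path is inserted or removed at different colours the tail-sum exponents $\Is{i+1}{n}$ shift in a colour-dependent way; matching the two sides amounts to a careful bookkeeping of these $t$-powers against the rational $R$-matrix prefactors. This accounting, rather than any conceptual difficulty, is where essentially all the work lies, and once it is completed cases (i) and (iii) follow by the same substitution with fewer surviving terms. As a cleaner alternative I would mention that \eqref{RLL} is the intertwining relation for $R$-matrices built from two representations of $U_t(\wh{\mathfrak{sl}_{n+1}})$ — the fundamental representation for $R$ and a bosonic Fock-space representation for $L$ — so that a reader content to invoke the general theory of quantum affine $R$-matrices, as developed in the references cited in the text, could bypass the explicit verification altogether.
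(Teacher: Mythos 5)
Your strategy is sound and nothing in it would fail: the two conservation laws you identify do reduce both sides of \eqref{RLL} to sums of at most two terms (with $\K$ determined once $(k_1,k_2)$ is fixed), spectator colours do factor out since their multiplicities are constant through every intermediate state, and the finitely many remaining configurations of $(i_1,i_2,j_1,j_2)$ can each be checked against the tables \eqref{R-weights-bc} and \eqref{bos-weights}. However, this is genuinely different from what the paper does --- in fact the paper offers no direct proof at all. The remark immediately following the proposition states that \eqref{RLL} is recovered as the $q \mapsto t$, $s=0$ reduction of the more general Yang--Baxter equation of \cite[Section 2.3, Equation (2.3.1)]{BorodinW2}, and that the relation also appeared, in the language of the $t$-boson algebra, in \cite{CantiniGW} and \cite{GarbaliGW}. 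So the paper's route is specialization of a previously established master identity for a higher-spin family of weights, whereas yours is an elementary, self-contained verification. The paper's approach buys economy and a structural explanation of where the weights \eqref{bos-weights} sit inside a known family; yours buys independence from external results, at the price of the case-(ii) bookkeeping that you correctly single out as the bulk of the work --- and which your write-up describes but does not actually execute, so a referee would still demand that computation (or a representative subset of cases plus a colour-symmetry argument) be written out. Your closing alternative, invoking the intertwining property of $U_t(\widehat{\mathfrak{sl}_{n+1}})$ modules, is the option closest in spirit to what the authors actually do, though they point to a concrete reduction of explicit weights rather than to abstract representation theory.
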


\begin{rmk}
Equation \eqref{RLL} is recovered as the $q \mapsto t$, $s=0$ reduction of \cite[Section 2.3, Equation (2.3.1)]{BorodinW2}. It also appeared, in terms of operators in the $t$-boson algebra, in \cite{CantiniGW} and \cite{GarbaliGW}.
\end{rmk}

\subsection{Row operators and commutation relations}

Let $V$ be the infinite-dimensional vector space obtained by taking the linear span of all $n$-tuples of nonnegative integers:  
\begin{align*}
V:= {\rm Span}\{\ket{\I}\} =
{\rm Span}\{\ket{i_1,\dots,i_n}\}_{i_1,\dots,i_n \in \mathbb{N}}.
\end{align*}
Fix a nonnegative integer\footnote{Later, we will specify the precise value of $N$ that we require. For now, it is treated as arbitrary.} $N$ and
consider an $(N+1)$-fold tensor product of such spaces, 
\begin{align*}
\mathbb{V}(N) := V_0 \otimes V_1 \otimes \cdots \otimes V_N,
\end{align*}
where each $V_i$ denotes a copy of $V$. We introduce a family of linear operators acting on $\mathbb{V}(N)$ as follows:
\begin{align}
\label{C-row}
\C_i(x)
:
\bigotimes_{k=0}^{N}
\ket{\J(k)}_k
\mapsto
\sum_{\I(0),\ldots,\I(N) \in \mathbb{N}^n}
\left(
\tikz{1.1}{
\draw[lgray,line width=1.5pt] (0.5,-0.5) -- (6.5,-0.5) -- (6.5,0.5) -- (0.5,0.5) -- (0.5,-0.5);
\foreach\x in {1,...,5}{
\draw[lgray,line width=1.5pt] (0.5+\x,-0.5) -- (0.5+\x,0.5);
}
\node[text centered] at (1,0) {$x$};
\node[text centered] at (2,0) {$x$};
\node[text centered] at (3,0) {$x$};
\node[text centered] at (4,0) {$x$};
\node[text centered] at (5,0) {$x$};
\node[text centered] at (6,0) {$x$};
\node[left] at (0.5,0) {\fs $i$};\node[right] at (6.5,0) {\fs $0$};
\node[below] at (6,-0.5) {\fs $\I(N)$};\node[above] at (6,0.5) {\fs $\J(N)$};
\node[below] at (5,-0.5) {\fs $\cdots$};\node[above] at (5,0.5) {\fs $\cdots$};
\node[below] at (4,-0.5) {\fs $\cdots$};\node[above] at (4,0.5) {\fs $\cdots$};
\node[below] at (3,-0.5) {\fs $\cdots$};\node[above] at (3,0.5) {\fs $\cdots$};
\node[below] at (2,-0.5) {\fs $\I(1)$};\node[above] at (2,0.5) {\fs $\J(1)$};
\node[below] at (1,-0.5) {\fs $\I(0)$};\node[above] at (1,0.5) {\fs $\J(0)$};
}
\right)
\bigotimes_{k=0}^{N}
\ket{\I(k)}_k,
\end{align}
where $1 \leq i \leq n$. Note that each face weight appearing on the right-hand side of \eqref{C-row} depends on the same parameter, $x$. The quantity
\begin{align*}
\tikz{1.1}{
\draw[lgray,line width=1.5pt] (0.5,-0.5) -- (6.5,-0.5) -- (6.5,0.5) -- (0.5,0.5) -- (0.5,-0.5);
\foreach\x in {1,...,5}{
\draw[lgray,line width=1.5pt] (0.5+\x,-0.5) -- (0.5+\x,0.5);
}
\node[text centered] at (1,0) {$x$};
\node[text centered] at (2,0) {$x$};
\node[text centered] at (3,0) {$x$};
\node[text centered] at (4,0) {$x$};
\node[text centered] at (5,0) {$x$};
\node[text centered] at (6,0) {$x$};
\node[left] at (0.5,0) {\fs $i$};\node[right] at (6.5,0) {\fs $0$};
\node[below] at (6,-0.5) {\fs $\I(N)$};\node[above] at (6,0.5) {\fs $\J(N)$};
\node[below] at (5,-0.5) {\fs $\cdots$};\node[above] at (5,0.5) {\fs $\cdots$};
\node[below] at (4,-0.5) {\fs $\cdots$};\node[above] at (4,0.5) {\fs $\cdots$};
\node[below] at (3,-0.5) {\fs $\cdots$};\node[above] at (3,0.5) {\fs $\cdots$};
\node[below] at (2,-0.5) {\fs $\I(1)$};\node[above] at (2,0.5) {\fs $\J(1)$};
\node[below] at (1,-0.5) {\fs $\I(0)$};\node[above] at (1,0.5) {\fs $\J(0)$};
}
\end{align*}
is a one-row partition function in the model \eqref{bos-weights}, and can be calculated by multiplying the weights of each face from left to right, noting that the integer values prescribed to all internal vertical edges are fixed by the local conservation property of the $L$-matrix.

By virtue of the Yang--Baxter equation \eqref{RLL}, one can now derive the following set of commutation relations of the row operators 
\eqref{C-row}:
\begin{prop}
\label{prop:CC}
Fix two positive integers $i,j$ such that $1 \leq i,j \leq n$. The row operators \eqref{C-row} satisfy the exchange relations
\begin{align}
\label{CC=}
\C_i(x) \C_i(y) &= \C_i(y) \C_i(x),
\\
\label{CC<}
t\, \C_i(x) \C_j(y) &= 
\frac{x-ty}{x-y}\, \C_j(y) \C_i(x) 
- 
\frac{(1-t)x}{x-y}\, \C_j(x) \C_i(y),
\quad\quad
i<j,
\\
\label{CC>}
\C_i(x) \C_j(y) &= 
\frac{x-ty}{x-y}\, \C_j(y) \C_i(x) 
- 
\frac{(1-t)y}{x-y}\, \C_j(x) \C_i(y),
\quad\quad
i>j.
\end{align}
\end{prop}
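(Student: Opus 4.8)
The plan is to promote the single-column Yang--Baxter relation of Proposition~\ref{prop:RLL} to an exchange relation for the full row operators, and then extract the three identities by specializing the external colours. Write $\mathcal{T}_x[a;b]$ for the one-row monodromy element built from $N+1$ faces of spectral parameter $x$, with left edge labelled $a$ and right edge labelled $b$, so that in particular $\C_i(x)=\mathcal{T}_x[i;0]$. The first step is to establish the \emph{global} exchange relation
\begin{align*}
\sum_{0\le k_1,k_2\le n}
R_{y/x}(i_2,i_1;k_2,k_1)\,\mathcal{T}_x[k_1;j_1]\,\mathcal{T}_y[k_2;j_2]
=
\sum_{0\le k_1,k_2\le n}
\mathcal{T}_y[i_2;k_2]\,\mathcal{T}_x[i_1;k_1]\,R_{y/x}(k_2,k_1;j_2,j_1),
\end{align*}
valid for all external colours $i_1,i_2,j_1,j_2$. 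This is obtained from \eqref{RLL} by the standard railroading argument: one attaches the $R$-matrix $R_{y/x}$ at the left of the two stacked rows and transports it rightwards, one column at a time, each move being an instance of \eqref{RLL} with the shared vertical label $\K$ summed over $\mathbb{N}^n$; after passing all $N+1$ columns the $R$-matrix re-emerges at the right. Formally this is an induction on $N$, whose base case is exactly \eqref{RLL}.

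With the global relation in hand, I would set the right external colours to $j_1=j_2=0$. On the left this turns $\mathcal{T}_x[k_1;0]\mathcal{T}_y[k_2;0]$ into $\C_{k_1}(x)\C_{k_2}(y)$, and since the $R$-matrix preserves the multiset of colours, $R_{y/x}(i_2,i_1;k_2,k_1)$ vanishes unless $\{k_1,k_2\}=\{i_1,i_2\}$ --- so for $i_1,i_2\ge 1$ only $k_1,k_2\ge 1$ contribute. On the right, $R_{y/x}(k_2,k_1;0,0)$ is nonzero only for $k_1=k_2=0$, where it equals $1$ by \eqref{R-weights-a}, collapsing the sum to a single term. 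This yields the master identity
\begin{align*}
\sum_{k_1,k_2\ge 1}
R_{y/x}(i_2,i_1;k_2,k_1)\,\C_{k_1}(x)\,\C_{k_2}(y)
=
\C_{i_2}(y)\,\C_{i_1}(x),
\qquad i_1,i_2\ge 1.
\end{align*}

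The three relations then follow by inserting the explicit weights \eqref{R-weights-bc} with $z=y/x$. Taking $i_1=i_2=i$ forces $k_1=k_2=i$ and $R_{y/x}(i,i;i,i)=1$, which is \eqref{CC=}. Taking $\{i_1,i_2\}=\{i,j\}$ with $i\ne j$, colour conservation leaves exactly two terms, $(k_1,k_2)=(i_1,i_2)$ (a non-crossing vertex) and $(k_1,k_2)=(i_2,i_1)$ (a crossing vertex); reading off their weights, clearing the denominator $1-tz$, and substituting $z=y/x$ gives a linear relation among $\C_i(x)\C_j(y)$, $\C_j(x)\C_i(y)$ and $\C_j(y)\C_i(x)$. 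The assignment $i_1=i<i_2=j$ produces the diagonal weight $t(1-z)/(1-tz)$ and the crossing weight $(1-t)/(1-tz)$, and after multiplying through by $(x-ty)/(x-y)$ reproduces \eqref{CC<}; the assignment $i_1=i>i_2=j$ produces instead $(1-z)/(1-tz)$ and $(1-t)z/(1-tz)$, giving \eqref{CC>}.

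The conceptual heart of the argument --- and the only step requiring genuine care --- is the railroading that yields the global relation: one must verify that the local move \eqref{RLL} iterates correctly across the row with the internal vertical labels summed, and that both ends behave, namely that the $R$-matrix can be introduced freely at the open left boundary and is then annihilated at the right boundary precisely because both right edges carry the colour $0$. Once this is in place, the remaining work is routine bookkeeping with the five-entry weight table \eqref{fund-vert}; the only thing to watch is the correct pairing of the diagonal and crossing weights to the two admissible $(k_1,k_2)$ assignments in each of the $i<j$ and $i>j$ cases.
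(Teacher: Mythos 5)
Your proposal is correct and takes essentially the same route as the paper's own proof, which simply cites \cite{BorodinW2} and notes that the relations follow by extending the local relation \eqref{graph-RLL} to $N+1$ sites and then choosing appropriate boundary conditions --- precisely your railroading step followed by the specialization $j_1=j_2=0$ that collapses the $R$-matrix at the right edge. Your bookkeeping with the weights \eqref{fund-vert} in the cases $i_1=i_2$, $i_1<i_2$, $i_1>i_2$ correctly reproduces \eqref{CC=}, \eqref{CC<} and \eqref{CC>}.
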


\begin{proof}
These relations were obtained in \cite[Theorem 3.2.1]{BorodinW2}. Their proof is based on writing an extension of \eqref{graph-RLL} to $N+1$ sites, and then choosing appropriate boundary conditions of the resulting relation. For the full details of the proof, see \cite[Section 3.2]{BorodinW2}. Let us note that we will only require the relation \eqref{CC>} in what follows.
\end{proof}

\subsection{Generic and composition states}

We now set up some useful definitions regarding the vector space $\mathbb{V}(N)$. Let $\{m_{i,j}\}_{1 \leq i \leq n, 0 \leq j \leq N}$ be a collection of nonnegative integers. We associate to such a set the vector
\begin{align}
\label{generic-vec}
\ket{\{m\}}
=
\bigotimes_{j=0}^{N}
\ket{\bm{M}(j)}_j
\in
\mathbb{V}(N),
\qquad
\text{with}\ \
\bm{M}(j) = \sum_{i=1}^{n} m_{i,j} \bm{e}_i,
\end{align}
as well as a state in the dual vector space
\begin{align}
\label{generic-dual-vec}
\bra{\{m\}}
=
\bigotimes_{j=0}^{N}
\bra{\bm{M}(j)}_j
\in
\mathbb{V}(N)^{*},
\qquad
\text{with}\ \
\bm{M}(j) = \sum_{i=1}^{n} m_{i,j} \bm{e}_i,
\end{align}
where we assume the inner product
\begin{align}
\label{inner}
\langle \{m\} | \{m'\} \rangle = 
\prod_{i=1}^{n} \prod_{j=0}^{N} 
{\bm 1}(m_{i,j} = m'_{i,j}).
\end{align}
We refer to the vectors $\ket{\{m\}}$ and $\bra{\{m\}}$ as generic states; they provide a natural basis for $\mathbb{V}(N)$ and $\mathbb{V}(N)^{*}$. A particular subclass of the vectors \eqref{generic-vec}--\eqref{generic-dual-vec} will play an important role in what follows. Let $\mu \in \mathbb{N}^n$ be a composition, and choose $N$ such that $\max_{1 \leq i \leq n}(\mu_i) \leq N$. We define
\begin{align}
\label{comp-state}
\ket{\mu}
=
\bigotimes_{j=0}^{N}
\ket{\bm{\mu}(j)}_j
\in
\mathbb{V}(N),
\qquad
\text{with}\ \
\bm{\mu}(j) = \sum_{i=1}^{n} \bm{1}(\mu_i = j) \bm{e}_i,
\end{align}
and the corresponding dual vector
\begin{align}
\bra{\mu}
=
\bigotimes_{j=0}^{N}
\bra{\bm{\mu}(j)}_j
\in
\mathbb{V}(N)^{*},
\qquad
\text{with}\ \
\bm{\mu}(j) = \sum_{i=1}^{n} \bm{1}(\mu_i = j) \bm{e}_i.
\end{align}
We refer to the vectors $\ket{\mu}$ and $\bra{\mu}$ as composition states; we will use them in our partition function representation of the nonsymmetric Macdonald polynomials $f_{\mu}(x_1,\dots,x_n)$.

\section{Partition function construction of the nonsymmetric Macdonald polynomials}
\label{sec:matrix-prod}

In this section we present our main result, namely an explicit partition function expression for the nonsymmetric Macdonald polynomial $f_{\mu}(x_1,\dots,x_n)$, for any composition $\mu \in \mathbb{N}^n$; this is the content of Theorem \ref{thm:f-formula}. The proof of this theorem will be given progressively over the course of the section; the elements of the proof are showing that the proposed formula \eqref{f-ansatz} satisfies {\bf 1.} certain exchange relations under the action of Hecke generators \eqref{Hecke}, and {\bf 2.} a particular cyclic relation under the action of the operator \eqref{omega}. Combining {\bf 1} and {\bf 2} allows one to show that \eqref{f-ansatz} satisfies the eigenvalue equation \eqref{eig-Yi}, which uniquely characterizes the nonsymmetric Macdonald polynomials (up to normalization).

Theorem \ref{thm:f-formula} expresses the nonsymmetric Macdonald polynomials in terms of a certain linear form which acts on a product of the row operators \eqref{C-row} and returns a scalar quantity; this is often termed a ``matrix product formula'' in the literature. In particular, the formula 
\eqref{f-ansatz} was inspired by the earlier work of \cite{CantiniGW}, where a matrix product expression was obtained for a different family of nonsymmetric polynomials, which also yield (symmetric) Macdonald polynomials under appropriate symmetrization. We will comment more on the difference between the present work and that of \cite{CantiniGW} in Remark \ref{rmk:cdgw}.

\subsection{Matrix product formula for $f_{\mu}(x_1,\dots,x_n)$}

\begin{defn}
Let $\mu$ be a composition and denote its largest part by $N = \max_{1 \leq i \leq n}(\mu_i)$. Fix a linear operator $X \in {\rm End}(\mathbb{V}(N))$, and a set of indeterminates $\{v_{i,j}\}_{1 \leq i \leq n, 0 \leq j \leq N}$. We define the linear form
\begin{align}
\label{form}
\langle X \rangle_{\mu}
:=
\sum_{\{m\}}
\prod_{i=1}^{n}
\prod_{j=0}^{N}
v_{i,j}^{m_{i,j}}
\bra{\{m\}} X \ket{\{m\} + \mu},
\end{align}
where the dual state $\bra{\{m\}}$ is given by \eqref{generic-dual-vec}, and where we have defined
\begin{align}
\ket{\{m\} + \mu}
=
\bigotimes_{j=0}^{N}
\ket{\bm{M}(j) + \bm{\mu}(j)}_j,
\qquad
\text{with}\ \ 
\bm{M}(j) + \bm{\mu}(j) = 
\sum_{i=1}^{n} (m_{i,j} + \bm{1}(\mu_i = j)) \bm{e}_i.
\end{align}
The summation in \eqref{form} is taken over all nonnegative integers $m_{i,j} \geq 0$, for each $1 \leq i \leq n$ and $0 \leq j \leq N$. These sums are to be interpreted as convergent geometric series (\ie\ we assume that $|v_{i,j}|$ is sufficiently small, for all $i,j$).
\end{defn}

\begin{thm}
\label{thm:f-formula}
Fix a composition $\mu$, set $N = \max_{1 \leq i \leq n}(\mu_i)$, and recall the definition \eqref{C-row} of the row operators $\C_i(x)$. The nonsymmetric Macdonald polynomial $f_{\mu}(x_1,\dots,x_n)$ is given in terms of the linear form \eqref{form} as follows:
\begin{align}
\label{f-ansatz}
f_{\mu}(x_1,\dots,x_n)
=
\Omega_{\mu}(q,t)
\langle \C_1(x_1) \cdots \C_n(x_n) \rangle_{\mu},
\end{align}
where the parameters $v_{i,j}$ are chosen to be
\begin{align}
\label{v_ij}
v_{i,j}
=
q^{\mu_i-j}
t^{\gamma_{i,j}(\mu)}
\bm{1}(\mu_i > j),
\qquad
1 \leq i \leq n,
\ \ 
0 \leq j \leq N,
\end{align}
with exponents $\gamma_{i,j}(\mu)$ given by
\begin{align}
\label{gamma}
\gamma_{i,j}(\mu)
=
-\#\{k < i : \mu_k > \mu_i\} + \#\{k > i : j \leq \mu_k < \mu_i\}.
\end{align}
The normalization factor $\Omega_{\mu}(q,t)$ appearing on the right-hand side of \eqref{f-ansatz} is a factorized polynomial in $(q,t)$, given explicitly as
\begin{align}
\label{Omega}
\Omega_{\mu}(q,t)
=
\prod_{i=1}^{n}
\prod_{j=0}^{\mu_i-1}
\left(1-q^{\mu_i-j} t^{\alpha_{i,j}(\mu)}\right),
\end{align}
where we have defined
\begin{align}
\label{alpha}
\alpha_{i,j}(\mu)
=
\#\{k < i: \mu_k = \mu_i\}
+
\#\{k \not= i: j < \mu_k < \mu_i\}
+
\#\{k > i: j = \mu_k\}.
\end{align}
\end{thm}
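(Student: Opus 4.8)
The strategy is to verify that the right-hand side of \eqref{f-ansatz} satisfies the two characterizing properties of $f_\mu$ from Remark \ref{rmk:alt-def}: it lies in $\mathbb{C}^{|\mu|}[x_1,\dots,x_n]$ with the coefficient of $x^\mu$ equal to $1$, and it is a joint eigenfunction of the Cherednik--Dunkl operators $Y_i$ with eigenvalues $y_i(\mu;q,t)$. Since the eigenvalues $y_i(\mu;q,t)$ are pairwise distinct (the simplicity of the spectrum established in Remark \ref{rmk:alt-def}), these two properties pin down the polynomial uniquely up to scalar, and the normalization condition fixes the scalar. Following the roadmap announced at the start of Section \ref{sec:matrix-prod}, the eigenvalue property will be assembled from the exchange relations of Proposition \ref{prop1-intro} and the cyclic relation of Proposition \ref{prop2-intro}.

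The first step is to introduce the auxiliary family $f_\mu^\rho(x_1,\dots,x_n;q,t)$ defined in \eqref{f-rho-intro} for arbitrary permutations $\rho$, so that $f_\mu^{\rm id}$ is the object of interest. I would then establish Proposition \ref{prop1-intro}: applying the commutation relation \eqref{CC>} termwise to the product $\C_{\rho_1}(x_1)\cdots\C_{\rho_n}(x_n)$ under the linear form $\langle\,\cdot\,\rangle_\mu$, and matching the rational prefactors in \eqref{CC>} against the explicit action of the inverse Hecke generator $T_i^{-1}$ from \eqref{Hecke}, shows that $T_i^{-1}$ intertwines $f_\mu^\rho$ with $f_\mu^{\mathfrak{s}_i\cdot\rho}$ up to the factor $t^{-1}$ whenever $\rho_i<\rho_{i+1}$. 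This is the routine part. The genuinely difficult step is Proposition \ref{prop2-intro}, the cyclic relation: here one must show that shifting $x_n\mapsto qx_n$ in $f_\mu^\rho$ reproduces $f_\mu^{\omega\cdot\rho}$ with the alphabet cyclically permuted, up to the eigenvalue factor $t^{n-2\rho_n+1}y_{\rho_n}(\mu;q,t)$. The paper indicates this is proved by an explicit weight-preserving bijection between lattice configurations on the cylinder, and I expect this bijection---tracking how the $q$-shift on $x_n$ interacts with the cut positions (the parameters $v_{i,j}$ in \eqref{v_ij}) and with a path wrapping around the cylinder---to be the main obstacle. The precise powers of $q$ and $t$ recorded in \eqref{gamma}, \eqref{alpha}, and \eqref{rho} must emerge exactly from bookkeeping the face weights \eqref{bos-weights} and the geometric-series factors $v_{i,j}^{m_{i,j}}$ under this bijection.

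With both propositions in hand, the eigenvalue equation follows by a direct computation using the definition $Y_i = T_{i-1}\cdots T_1\cdot\omega\cdot T_{n-1}^{-1}\cdots T_i^{-1}$ from \eqref{Yi}. Starting from $f_\mu^{\rm id}$, one applies the inverse Hecke generators $T_{n-1}^{-1},\dots,T_i^{-1}$ using Proposition \ref{prop1-intro} to slide the permutation, then applies $\omega$ (which implements the $q$-shift together with the cyclic shift $\tau_1\mathfrak{s}_1\cdots\mathfrak{s}_{n-1}$ of \eqref{omega}) using Proposition \ref{prop2-intro}, and finally applies $T_{i-1},\dots,T_1$ to return to the identity permutation. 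Tracking the accumulated scalar---the product of the $t^{-1}$ factors from the Hecke steps and the factor $t^{n-2\rho_n+1}y_{\rho_n}(\mu;q,t)$ from the cyclic step---and checking that it collapses to exactly $y_i(\mu;q,t)$ as given in \eqref{rho}, completes the verification that $f_\mu^{\rm id}$ is the desired eigenfunction.

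It remains to pin down the normalization. I would compute the coefficient of the leading monomial $x^\mu$ directly from the partition function: by Remark \ref{rmk:x-depend}, each $x_i$-weight is collected exactly when a colour-$\rho_i$ path passes through a right vertical edge, so extracting the $x^\mu$ coefficient amounts to identifying the unique (or combinatorially constrained) family of path configurations realizing the maximal degree in each variable. Comparing the resulting scalar---a product over the geometric sums weighted by the $v_{i,j}$ of \eqref{v_ij}---against the explicit normalization $\Omega_\mu(q,t)$ of \eqref{Omega}, with $\alpha_{i,j}(\mu)$ as in \eqref{alpha}, should show that the product $\Omega_\mu(q,t)\langle\C_1(x_1)\cdots\C_n(x_n)\rangle_\mu$ is monic in $x^\mu$. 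Together with the eigenvalue property and the uniqueness from Remark \ref{rmk:alt-def}, this establishes \eqref{f-ansatz}.
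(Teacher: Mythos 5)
Your plan follows the paper's own proof essentially step for step: the same characterization via Remark \ref{rmk:alt-def}, the same reduction of the eigenvalue equation to the exchange relation (proved from \eqref{CC>}) plus the cyclic relation (proved by the weight-preserving column-rotation bijection), and the same extraction of the $x^{\mu}$ coefficient from the frozen lattice configuration to identify $\Omega_{\mu}(q,t)$. The approach is correct and matches the paper's argument, including your identification of the cyclic relation as the genuinely hard step.
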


The proof of Theorem \ref{thm:f-formula} will be dealt with over the course of Sections \ref{ssec:recast}--\ref{ssec:norm}. Before moving on to the proof, let us comment on two special cases of the formula \eqref{f-ansatz} which were known prior to this work.

\begin{rmk}
\label{rmk:cdgw}
When the composition $\mu$ is anti-dominant, \ie\ in the case $\mu = \delta$ where $\delta_1 \leq \cdots \leq \delta_n$, the formula \eqref{f-ansatz} previously appeared in \cite{CantiniGW}. In that work the focus was on the family of polynomials
\begin{align}
\label{f-delta}
f_{\delta}^{\rho}(x_1,\dots,x_n)
=
\Omega_{\delta}(q,t)
\langle \C_{\rho_1}(x_1) \cdots \C_{\rho_n}(x_n) \rangle_{\delta},
\qquad
v_{i,j}
=
q^{\delta_i-j} \bm{1}(\delta_i > j),
\end{align}
where $\delta$ is a fixed anti-dominant composition and $\rho = (\rho_1,\dots,\rho_n)$ is allowed to vary (we have assumed that $\rho$ is a permutation, meaning that its parts are pairwise distinct, but \cite{CantiniGW} also catered for the situation when $\rho$ is a generic composition). The polynomials thus defined by \eqref{f-delta} are closely related to the stationary distribution of the multi-species ASEP on a ring; indeed, the formula \eqref{f-delta} originally arose by incorporating spectral parameters $(x_1,\dots,x_n)$ into known matrix product algorithms for such probabilities \cite{FerrariM,EvansFM,ProlhacEM,AritaAMP}.
\end{rmk}

\begin{rmk}
When $q=0$, all of the parameters \eqref{v_ij} vanish, either due to the positive exponents of $q$ or the indicator function $\bm{1}(\mu_i > j)$. This, in turn, drastically simplifies the linear form \eqref{form}; the summation over $\{m\}$ is collapsed to a single non-vanishing term, corresponding to the choice $m_{i,j} = 0$ for all $1 \leq i \leq n$, $0 \leq j \leq N$. One thus finds that
\begin{align*}
\langle X \rangle_{\mu}
\Big|_{v_{i,j} = 0}
=
\bra{\varnothing} X \ket{\mu},
\qquad
\text{where}\ \ 
\bra{\varnothing}
=
\bigotimes_{j=0}^{N} \bra{\bm{0}}_j,
\quad
\bm{0}
=
(0,\dots,0) \in \mathbb{N}^n,
\end{align*}
with $\ket{\mu}$ given by \eqref{comp-state}, as usual. Now using the fact that $\Omega_{\mu}(0,t) = 1$, we find that
\begin{align}
\label{hl-limit}
f_{\mu}(x_1,\dots,x_n)
\Big|_{q=0}
=
\bra{\varnothing} 
\C_1(x_1) \cdots \C_n(x_n) 
\ket{\mu}.
\end{align}
Equation \eqref{hl-limit} thus provides an algebraic formula for the {\it nonsymmetric Hall--Littlewood polynomials}. This result was recently obtained in \cite{BorodinW2}; see, in particular, Definition 3.4.3 and Section 5.9 therein\footnote{Note that \cite{BorodinW2} uses $q$ instead of $t$ to denote the Hall--Littlewood parameter.}. The present paper thus refines the work of \cite{BorodinW2} to the full Macdonald level, by the introduction of a nonzero $q$ parameter.
\end{rmk}

\subsection{Recasting the eigenvalue relation}
\label{ssec:recast}

We begin by modifying \eqref{eig-Yi} slightly, by transferring the string of Hecke generators $T_{i-1} \cdots T_1$ (contained within the Cherednik--Dunkl operator $Y_i$) to the right-hand side of the relation. One obtains
\begin{align}
\label{eig-recast}
\omega \cdot T^{-1}_{n-1} \cdots T^{-1}_i
\cdot
f_{\mu}(x_1,\dots,x_n)
=
y_i(\mu;q,t)
T^{-1}_1 \cdots T^{-1}_{i-1}
\cdot
f_{\mu}(x_1,\dots,x_n).
\end{align}
We wish to substitute the Ansatz \eqref{f-ansatz} into \eqref{eig-recast}, and check its validity. In order to do so, we require the following auxiliary result:
\begin{prop} 
Fix two integers $1 \leq j<k \leq n$ and a further integer $1 \leq i \leq n-1$ (unrelated to the first two). The operator identity
\begin{align}
\label{T-CC}
T^{-1}_i
\cdot
\C_j(x_i) \C_k(x_{i+1})
=
t^{-1}
\C_k(x_i) \C_j(x_{i+1})
\end{align}
holds in ${\rm End}(\mathbb{V}(N))$.
\end{prop}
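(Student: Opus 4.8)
The plan is to expand the left-hand side using the explicit form \eqref{Hecke} of the inverse Hecke generator and then recognize the result as a single instance of the commutation relation \eqref{CC>}; as the remark following Proposition \ref{prop:CC} anticipates, this is the only commutation relation I will need. The one point of interpretation is the action of $T^{-1}_i$: it acts on the spectral-parameter dependence of the operator-valued product, and since $\C_j(x_i)\C_k(x_{i+1})$ depends only on $x_i$ and $x_{i+1}$, the transposition $\mathfrak{s}_i$ simply swaps these two parameters (permuting spectral parameters, not colours), so that $\mathfrak{s}_i\cdot\C_j(x_i)\C_k(x_{i+1}) = \C_j(x_{i+1})\C_k(x_i)$.

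First I would abbreviate $c := \frac{x_i - t x_{i+1}}{x_i - x_{i+1}}$ and substitute \eqref{Hecke} to obtain
\begin{align*}
T^{-1}_i \cdot \C_j(x_i)\C_k(x_{i+1})
=
t^{-1}\big[(1-c)\,\C_j(x_i)\C_k(x_{i+1}) + c\,\C_j(x_{i+1})\C_k(x_i)\big].
\end{align*}
Next I would apply \eqref{CC>} to the target $\C_k(x_i)\C_j(x_{i+1})$, which is licit since $k > j$; with $x=x_i$ and $y=x_{i+1}$ it reads
\begin{align*}
\C_k(x_i)\C_j(x_{i+1})
=
\frac{x_i - t x_{i+1}}{x_i - x_{i+1}}\,\C_j(x_{i+1})\C_k(x_i)
-
\frac{(1-t)x_{i+1}}{x_i - x_{i+1}}\,\C_j(x_i)\C_k(x_{i+1}).
\end{align*}
The elementary algebraic fact that makes everything collapse is $1 - c = -\frac{(1-t)x_{i+1}}{x_i - x_{i+1}}$, so the right-hand side of the last display is exactly $(1-c)\,\C_j(x_i)\C_k(x_{i+1}) + c\,\C_j(x_{i+1})\C_k(x_i)$, i.e. $t$ times the bracket from the first display. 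Combining the two displays then yields $T^{-1}_i\cdot\C_j(x_i)\C_k(x_{i+1}) = t^{-1}\C_k(x_i)\C_j(x_{i+1})$, which is the claim.

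I do not expect a genuine obstacle here: the statement is essentially a repackaging of \eqref{CC>}. The only points requiring care are bookkeeping ones — correctly reading off the action of $\mathfrak{s}_i$ on the operator-valued product, and verifying the coefficient identity $1-c = -\frac{(1-t)x_{i+1}}{x_i-x_{i+1}}$ that matches the two displays term by term. All of these equalities should be read as identities of $\mathrm{End}(\mathbb{V}(N))$-valued rational functions in $x_i,x_{i+1}$ (equivalently, of polynomial identities after clearing the common denominator $x_i - x_{i+1}$), which is unambiguous since each matrix element of $\C_a(x)$ is polynomial in $x$ by Remark \ref{rmk:x-depend}.
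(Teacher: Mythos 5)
Your proof is correct and essentially identical to the paper's own argument: both expand $T^{-1}_i$ via \eqref{Hecke} (with $\mathfrak{s}_i$ acting by swapping the spectral parameters $x_i \leftrightarrow x_{i+1}$ in the operator product) and then recognize the resulting two-term combination as $\C_k(x_i)\C_j(x_{i+1})$ by the commutation relation \eqref{CC>} with $i \mapsto k$, $x \mapsto x_i$, $y \mapsto x_{i+1}$. One cosmetic slip: the right-hand side of your \eqref{CC>} display equals the bracket of your first display exactly (equivalently, $t$ times the \emph{full} right-hand side of that display, including the $t^{-1}$ prefactor), not ``$t$ times the bracket,'' but this wording does not affect your correctly drawn conclusion.
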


\begin{proof}
This follows by direct calculation, using the explicit form \eqref{Hecke} of $T^{-1}_i$. We compute
\begin{align*}
T^{-1}_i
\cdot
\C_j(x_i) \C_k(x_{i+1})
&=
t^{-1}\left(
\frac{(t-1)x_{i+1}}{x_i-x_{i+1}}
\C_j(x_i) \C_k(x_{i+1})
+
\frac{x_i-tx_{i+1}}{x_i-x_{i+1}}
\C_j(x_{i+1}) \C_k(x_i)
\right)
\\
&=
t^{-1}
\C_k(x_i) \C_j(x_{i+1}),
\end{align*}
where the final equality is a consequence of the commutation relation \eqref{CC>} with $i \mapsto k$, $x \mapsto x_i$, $y \mapsto x_{i+1}$. Note that the identity \eqref{T-CC} gives an immediate proof of Proposition \ref{prop1-intro}, stated in the introduction.
\end{proof}

\subsection{Left-hand side of eigenvalue equation}

Let us now examine the left-hand side of \eqref{eig-recast}, assuming the substitution \eqref{f-ansatz}. It becomes
\begin{align*}
\omega \cdot T^{-1}_{n-1} \cdots T^{-1}_i
\cdot
f_{\mu}(x_1,\dots,x_n)
=
\Omega_{\mu}(q,t)
\cdot
\omega
\cdot
T^{-1}_{n-1} \cdots T^{-1}_i
\cdot
\langle \C_1(x_1) \cdots \C_n(x_n) \rangle_{\mu},
\end{align*}
and by $n-i$ applications of the identity \eqref{T-CC} we transfer the operator $\C_i$ towards the right of the operator product, to obtain
\begin{multline*}
\omega \cdot T^{-1}_{n-1} \cdots T^{-1}_i
\cdot
f_{\mu}(x_1,\dots,x_n)
\\
=
\Omega_{\mu}(q,t) t^{i-n}
\cdot
\omega
\cdot
\langle \C_1(x_1) \cdots \C_{i-1}(x_{i-1})
\C_{i+1}(x_i) \cdots \C_n(x_{n-1})
\C_i(x_n) \rangle_{\mu}.
\end{multline*}
Finally we act explicitly with $\omega$ (\cf\ equation \eqref{omega}), which yields
\begin{multline}
\label{lhs}
\omega \cdot T^{-1}_{n-1} \cdots T^{-1}_i
\cdot
f_{\mu}(x_1,\dots,x_n)
\\
=
\Omega_{\mu}(q,t) t^{i-n}
\cdot
\langle \C_1(x_2) \cdots \C_{i-1}(x_i)
\C_{i+1}(x_{i+1}) \cdots \C_n(x_n)
\C_i(qx_1) \rangle_{\mu}.
\end{multline}

\subsection{Right-hand side of eigenvalue equation}

Repeating the exercise for the right-hand side of \eqref{eig-recast}, we have
\begin{align*}
y_i(\mu;q,t)
T^{-1}_1 \cdots T^{-1}_{i-1}
\cdot
f_{\mu}(x_1,\dots,x_n)
=
y_i(\mu;q,t)
\Omega_{\mu}(q,t)
\cdot
T^{-1}_1 \cdots T^{-1}_{i-1}
\cdot
\langle \C_1(x_1) \cdots \C_n(x_n) \rangle_{\mu},
\end{align*}
and by $i-1$ applications of the identity \eqref{T-CC} we send the operator $\C_i$ towards the left of the operator product. One finds that
\begin{multline}
\label{rhs}
y_i(\mu;q,t)
T^{-1}_1 \cdots T^{-1}_{i-1}
\cdot
f_{\mu}(x_1,\dots,x_n)
\\
=
t^{1-i}
y_i(\mu;q,t)
\Omega_{\mu}(q,t)
\cdot
\langle 
\C_i(x_1)
\C_1(x_2) \cdots \C_{i-1}(x_i)
\C_{i+1}(x_{i+1}) \cdots \C_n(x_n)
\rangle_{\mu}.
\end{multline}

\subsection{Cyclic relation}

Matching the right-hand sides of \eqref{lhs} and \eqref{rhs}, and applying $\mathfrak{s}_{i-1} \cdots \mathfrak{s}_1$ to both sides of the resulting equation, we are left with the task of proving that
\begin{align*}
\langle \C_1(x_1) \cdots \wh{\C_i(x_i)} \cdots \C_n(x_n) \C_i(qx_i) \rangle_{\mu}
=
t^{n-2i+1} y_i(\mu;q,t)
\langle \C_i(x_i) \C_1(x_1) \cdots \wh{\C_i(x_i)} \cdots \C_n(x_n) \rangle_{\mu},
\end{align*}
where we use the circumflex to indicate omission of the operator $\C_i(x_i)$ at that location. Using the explicit form of the eigenvalue \eqref{rho}, this we can write as
\begin{align}
\label{cyclic-rel}
\langle \C_1(x_1) \cdots \wh{\C_i(x_i)} \cdots \C_n(x_n) \C_i(qx_i) \rangle_{\mu}
=
q^{\mu_i}
t^{\gamma_{i,0}(\mu)}
\langle \C_i(x_i) \C_1(x_1) \cdots \wh{\C_i(x_i)} \cdots \C_n(x_n) \rangle_{\mu},
\end{align}
where we have used the fact that
\begin{align*}
n-i+\eta_i(\mu)
=
-\#\{j < i : \mu_j > \mu_i\} + \#\{j > i : \mu_j < \mu_i\}
\equiv
\gamma_{i,0}(\mu).
\end{align*}
We now devote our efforts to proving \eqref{cyclic-rel}, under the assumption that the twist parameters associated to the linear form \eqref{form} are given by \eqref{v_ij}--\eqref{gamma}.

\subsection{Column operators}

Our main tool for proving \eqref{cyclic-rel} will be another class of objects which we call {\it column operators}. These are constructed in terms of ``towers'' of the face weights, as follows. Fix two sets of nonnegative integers $\{i_1,\dots,i_n\}$, $\{j_1,\dots,j_n\}$ with 
$i_k,j_k \in \{0,1,\dots,n\}$ for all $1 \leq k \leq n$, as well as two compositions $\I, \J \in \mathbb{N}^n$. We define the following extension of the faces \eqref{generic-L}:
\begin{align}
\label{tower}
L(\I, i_1,\dots,i_n; \J, j_1,\dots, j_n)
:=
\tikz{1.2}{
\foreach\y in {1,...,4}{
\draw[lgray,line width=1.5pt] (1.5,0.5+\y) -- (2.5,0.5+\y);
}
\draw[lgray,line width=1.5pt] (1.5,0.5) -- (2.5,0.5) -- (2.5,5.5) -- (1.5,5.5) -- (1.5,0.5);
\node at (2,1.5) {\footnotesize$\bullet$}; \draw (2,1.5) -- (4,1.5); \node[right] at (4,1.5) {\footnotesize$\K(1)$};
\node at (2,2.5) {\footnotesize$\bullet$}; \draw (2,2.5) -- (4,2.5); \node[right] at (4,2.5) {\footnotesize$\K(2)$}; 
\node at (2,4.5) {\footnotesize$\bullet$}; \draw (2,4.5) -- (4,4.5);
\node[right] at (4,4.5) {\footnotesize$\K(n-1)$};
\node[text centered] at (2,1) {$x_1$};
\node[text centered] at (2,2) {$x_2$};
\node[text centered] at (2,3.1) {$\vdots$};
\node[text centered] at (2,4.1) {$\vdots$};
\node[text centered] at (2,5) {$x_n$};
\node[below] at (2,0.5) {\footnotesize$\bm{I}$};
\node[above] at (2,5.5) {\footnotesize$\bm{J}$};
\node[right] at (2.5,1) {$j_1$};
\node[right] at (2.5,2) {$j_2$};
\node[right] at (2.5,3.1) {$\vdots$};
\node[right] at (2.5,4.1) {$\vdots$};
\node[right] at (2.5,5) {$j_n$};
\node[left] at (1.5,1) {$i_1$};
\node[left] at (1.5,2) {$i_2$};
\node[left] at (1.5,3.1) {$\vdots$};
\node[left] at (1.5,4.1) {$\vdots$};
\node[left] at (1.5,5) {$i_n$};
}
\end{align}
where every face weight in the tower is prescribed a different spectral parameter. The states $\K(k)$ assigned to internal horizontal edges (marked by dot points in the picture) are fixed by conservation; namely, $L(\I, i_1,\dots,i_n; \J, j_1,\dots, j_n)$ vanishes identically unless
\begin{align*}
\I + \sum_{\ell=1}^{k} \bm{e}_{i_\ell}
=
\K(k) + \sum_{\ell=1}^{k} \bm{e}_{j_\ell},
\qquad
\forall\ 1 \leq k \leq n,
\qquad
\K(n) \equiv \J.
\end{align*}
In what follows, let $W \cong \mathbb{C}^{n+1}$ be an 
$(n+1)$-dimensional vector space with basis $\{\ket{0},\ket{1},\dots,\ket{n}\}$; or equivalently, we write
\begin{align*}
W
=
{\rm Span}\{\ket{i}\}_{0 \leq i \leq n}.
\end{align*}
The basis of the dual vector space $W^{*}$ will be written as $\{\bra{0},\bra{1},\dots,\bra{n}\}$ with $\langle i | j \rangle = \delta_{i,j}$. We wish to consider an $n$-fold tensor product of $W$, denoted by
\begin{align*}
\mathbb{W}(n)
=
W_1 \otimes \cdots \otimes W_n
=
{\rm Span}\left\{
\bigotimes_{k=1}^{n}
\ket{i_k}_k
\right\}_{0 \leq i_1,\dots,i_n \leq n}
\equiv
{\rm Span}\Big\{
\ket{i_1,\dots,i_n}
\Big\}_{0 \leq i_1,\dots,i_n \leq n}
\end{align*}
and now define operators acting on $\mathbb{W}(n)$.

\begin{defn}
Let $\{v_1,\dots,v_n\}$ be a set of complex parameters, and fix a composition $\AA = (a_1,\dots,a_n) \in \mathbb{N}^n$. We introduce the family of operators $\Lambda_{\{v_1,\dots,v_n\}}(\AA) \in {\rm End}(\mathbb{W}(n))$ with explicit action
\begin{multline}
\label{Q-def}
\Lambda_{\{v_1,\dots,v_n\}}(\AA)
:
\ket{j_1,\dots,j_n}
\mapsto
\\
\sum_{0 \leq i_1,\dots,i_n \leq n}
\left(
\sum_{\M \in \mathbb{N}^n}
\prod_{k=1}^{n}
v_k^{m_k}
L(\M, i_1, \dots, i_n; \M+\AA, j_1, \dots, j_n)
\right)
\ket{i_1,\dots,i_n}.
\end{multline}
The second sum in \eqref{Q-def} runs over all compositions $\M = (m_1,\dots,m_n) \in \mathbb{N}^n$, while in its summand $L(\M, i_1, \dots, i_n; \M+\AA, j_1, \dots, j_n)$ denotes a tower of the form \eqref{tower}. We assume that this sum always converges (which can be ensured by choosing $\{v_1,\dots,v_n\}$ to have small moduli), and thus the matrix elements of $\Lambda_{\{v_1,\dots,v_n\}}(\AA)$ can be viewed as elements of $\mathbb{Q}(v_1,\dots,v_n,t)$.
\end{defn}

The reason for introducing the column operators \eqref{Q-def} is that they offer an alternative algebraic setup for matrix products of the form \eqref{f-ansatz}, which will be very convenient in studying the relation \eqref{cyclic-rel}. In particular, we note the following:
\begin{prop}
Fix a composition $\mu$ and associate to it, in the same way as in \eqref{comp-state}, the binary strings 
$\bm{\mu}(j)$ for each $0 \leq j \leq 
\max_{1\leq i \leq n}(\mu_i) \equiv \max(\mu)$. Then one has
\begin{align}
\label{column-decomp}
\langle \C_1(x_1) \cdots \C_n(x_n) \rangle_{\mu}
=
\bra{1,\dots,n}
\prod_{0 \leq j \leq \max(\mu)}^{\rightarrow}
\Lambda_{\{v_{1,j},\dots,v_{n,j}\}}(\bm{\mu}(j))
\ket{0,\dots,0},
\end{align}
where the product is ordered so that the index $j$ increases as one moves from left to right, and the parameters $v_{i,j}$ are as in \eqref{v_ij}.
\end{prop}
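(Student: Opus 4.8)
The plan is to recognize that both sides of \eqref{column-decomp} compute one and the same partition function on the $n\times(N+1)$ rectangular lattice underlying \eqref{C-row}, the only difference being that the left-hand side assembles this lattice one row at a time (via the row operators $\C_i$) while the right-hand side assembles it one column at a time (via the column operators $\Lambda$). The proof is therefore a bookkeeping argument showing that the two slicings, together with their respective boundary data and weights, agree.

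First I would unfold the left-hand side. By the definition \eqref{form} of the linear form and the definition \eqref{C-row} of the row operators,
\begin{align*}
\langle \C_1(x_1) \cdots \C_n(x_n) \rangle_{\mu}
=
\sum_{\{m\}}
\prod_{i=1}^{n}\prod_{j=0}^{N}
v_{i,j}^{m_{i,j}}\,
\bra{\{m\}} \C_1(x_1)\cdots\C_n(x_n) \ket{\{m\}+\mu},
\end{align*}
and each matrix element is a lattice partition function in which row $k$ (carrying spectral parameter $x_k$) has left boundary edge $k$ and right boundary edge $0$, while the bottom vertical edge of column $j$ is $\bm{M}(j)$ and its top vertical edge is $\bm{M}(j)+\bm{\mu}(j)$. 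Here I am using that, in the product $\C_1(x_1)\cdots\C_n(x_n)$, the operator $\C_n(x_n)$ acts first, so that row $1$ sits at the bottom and row $n$ at the top, which matches the bottom-to-top ordering $x_1,\dots,x_n$ of spectral parameters in the tower \eqref{tower}. The internal summations are over all vertical (vector) edges between rows and all single-colour edges between columns.

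Next I would slice this lattice into its $N+1$ columns. Column $j$ is precisely a tower of the form \eqref{tower}, namely $L(\bm{M}(j),i_1,\dots,i_n;\bm{M}(j)+\bm{\mu}(j),j_1,\dots,j_n)$, where $(i_1,\dots,i_n)$ and $(j_1,\dots,j_n)$ are the single-colour edges immediately to the left and right of the column, and the internal vector edges $\K(1),\dots,\K(n-1)$ are summed over inside the tower. The key observation is that both the twist $\prod_{i=1}^n v_{i,j}^{m_{i,j}}$ and the free summation over the bottom vector edge $\bm{M}(j)=\M$ factorise across columns; collecting the factors attached to column $j$ reproduces exactly the matrix element
\begin{align*}
\bra{i_1,\dots,i_n}\,\Lambda_{\{v_{1,j},\dots,v_{n,j}\}}(\bm{\mu}(j))\,\ket{j_1,\dots,j_n}
=
\sum_{\M\in\mathbb{N}^n}\prod_{k=1}^{n} v_{k,j}^{m_k}\,
L(\M,i_1,\dots,i_n;\M+\bm{\mu}(j),j_1,\dots,j_n)
\end{align*}
of the column operator \eqref{Q-def}. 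The remaining sums over the single-colour edges shared between adjacent columns are then nothing but the insertions of complete sets of states in $\mathbb{W}(n)$ between consecutive factors, so that summing them assembles the ordered product $\prod_{0\le j\le N}^{\rightarrow}\Lambda_{\{v_{1,j},\dots,v_{n,j}\}}(\bm{\mu}(j))$.

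Finally I would pin down the two outer boundaries: the far-left single-colour edges are the left boundary edges of the rows, reading $(1,\dots,n)$ from bottom to top and hence producing the covector $\bra{1,\dots,n}$, while the far-right edges are all $0$ and produce $\ket{0,\dots,0}$; this yields \eqref{column-decomp}. I expect the only real subtlety to be this index bookkeeping: one must check that the row-ordering in $\C_1\cdots\C_n$ matches the bottom-to-top stacking in the tower, that the conservation law of the $L$-matrix makes the per-column flux $\bm{\mu}(j)$ consistent with the boundary data (the net horizontal flux $(1,\dots,1)-(0,\dots,0)$ equalling $\sum_{j}\bm{\mu}(j)$ after internal edges cancel), and that the twist and the free sum over $\{m\}$ genuinely decouple column by column. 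This last point is a Fubini-type rearrangement that becomes routine once the geometry is correctly set up.
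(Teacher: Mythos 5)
Your proposal is correct and follows essentially the same route as the paper's proof: expand $\langle \C_1(x_1)\cdots\C_n(x_n)\rangle_{\mu}$ via \eqref{form} and \eqref{C-row} into the lattice partition function \eqref{lattice}, then read it column-by-column, assigning the twist factor $\prod_{i}v_{i,j}^{m_{i,j}}$ and the sum over $\{m_{1,j},\dots,m_{n,j}\}$ to column $j$ so that each column reproduces a matrix element of $\Lambda_{\{v_{1,j},\dots,v_{n,j}\}}(\bm{\mu}(j))$, with the internal single-colour edge sums acting as resolutions of the identity in $\mathbb{W}(n)$ and the boundaries giving $\bra{1,\dots,n}$ and $\ket{0,\dots,0}$. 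Your bookkeeping of the row ordering (row $1$ at the bottom, row $n$ at the top, since $\C_n(x_n)$ acts first) and of the boundary data matches the paper's conventions exactly.
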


\begin{proof}
By the definition of the row operators \eqref{C-row} and the linear form \eqref{form}, we obtain the following partition function representation of $\langle \C_1(x_1) \cdots \C_n(x_n) \rangle_{\mu}$: 
\begin{align}
\label{lattice}
\langle \C_1(x_1) \cdots \C_n(x_n) \rangle_{\mu}
=
\sum_{\{m\}}
\prod_{i=1}^{n}
\prod_{j=0}^{N}
v_{i,j}^{m_{i,j}}
\times
\tikz{1.1}{
\foreach\y in {0,...,5}{
\draw[lgray,line width=1.5pt] (1.5,0.5+\y) -- (7.5,0.5+\y);
}
\foreach\x in {0,...,6}{
\draw[lgray,line width=1.5pt] (1.5+\x,0.5) -- (1.5+\x,5.5);
}
\node[text centered] at (2,1) {$x_1$};
\node[text centered] at (4,1) {$\cdots$};
\node[text centered] at (5,1) {$\cdots$};
\node[text centered] at (7,1) {$x_1$};
\node[text centered] at (2,2) {$x_2$};
\node[text centered] at (4,2) {$\cdots$};
\node[text centered] at (5,2) {$\cdots$};
\node[text centered] at (7,2) {$x_2$};
\node[text centered] at (2,5) {$x_n$};
\node[text centered] at (4,5) {$\cdots$};
\node[text centered] at (5,5) {$\cdots$};
\node[text centered] at (7,5) {$x_n$};
\node[text centered] at (2,3.1) {$\vdots$};
\node[text centered] at (2,4.1) {$\vdots$};
\node[text centered] at (7,3.1) {$\vdots$};
\node[text centered] at (7,4.1) {$\vdots$};
\node[below] at (2,0.5) {\footnotesize$\bm{M}(0)$};
\node[above,text centered] at (4,0) {$\cdots$};
\node[above,text centered] at (5,0) {$\cdots$};
\node[below] at (7,0.5) {\footnotesize$\bm{M}(N)$};
\node[above] at (2,5.5) {\footnotesize$\bm{M}(0)+\bm{\mu}(0)$};
\node[above,text centered] at (4,5.5) {$\cdots$};
\node[above,text centered] at (5,5.5) {$\cdots$};
\node[above] at (7,5.5) {\footnotesize$\bm{M}(N)+\bm{\mu}(N)$};
\node[right] at (7.5,1) {$0$};
\node[right] at (7.5,2) {$0$};
\node[right] at (7.5,3.1) {$\vdots$};
\node[right] at (7.5,4.1) {$\vdots$};
\node[right] at (7.5,5) {$0$};
\node[left] at (1.5,1) {$1$};
\node[left] at (1.5,2) {$2$};
\node[left] at (1.5,3.1) {$\vdots$};
\node[left] at (1.5,4.1) {$\vdots$};
\node[left] at (1.5,5) {$n$};
}
\end{align}
where we have denoted $N \equiv \max(\mu)$, with the states at the bottom/top of each column given by
\begin{align*}
\bm{M}(j) = \sum_{i=1}^{n} m_{i,j} \bm{e}_i,
\qquad
\bm{\mu}(j) = \sum_{i=1}^{n} \bm{1}(\mu_i = j) \bm{e}_i,
\qquad
\forall\
0 \leq j \leq N,
\end{align*}
and where the summation is over all nonnegative integers $m_{i,j} \geq 0$, for each $1 \leq i \leq n$ and $0 \leq j \leq N$. Summation over all possible states is implied at each internal lattice edge.

Now we read the partition function \eqref{lattice} column-by-column, starting from the leftmost and working to the right. We distribute the product $\prod_{i=1}^{n} \prod_{j=0}^{N} v_{i,j}^{m_{i,j}}$ over the columns by assigning the factor $\prod_{i=1}^{n} v_{i,j}^{m_{i,j}}$ to column $j$, and for each $0 \leq j \leq N$ we compute separately the sums over $\{m_{1,j},\dots,m_{n,j}\}$ (which only affects the boundary states of column $j$). Referring to the definition \eqref{Q-def} of 
$\Lambda_{\{v_{1,j},\dots,v_{n,j}\}}(\bm{\mu}(j))$, equation \eqref{column-decomp} is nothing more than the algebraic realization of this column-by-column decomposition.
\end{proof}

The partition function \eqref{lattice} is quite useful for visualizing the structure of $\langle \C_1(x_1) \dots \C_n(x_n) \rangle_{\mu}$, but it is still not in the most convenient form. In particular, we would like to blend the summation over $\{m\}$ into our graphical definitions. With that in mind, we set up the following notation.
\begin{defn}
Let $\Lambda_{\{v_1,\dots,v_n\}}(\bm{A})$ be a column operator as defined in \eqref{Q-def}. We shall simplify the graphical representation of its components by writing
\begin{align*}
\bra{i_1,\dots,i_n}
\Lambda_{\{v_1,\dots,v_n\}}(\bm{A})
\ket{j_1,\dots,j_n}
=
\sum_{\M \in \mathbb{N}^n}
\prod_{k=1}^{n}
v_k^{m_k}
\tikz{1.2}{
\foreach\y in {1,...,4}{
\draw[lgray,line width=1.5pt] (1.5,0.5+\y) -- (2.5,0.5+\y);
}
\draw[lgray,line width=1.5pt] (1.5,0.5) -- (2.5,0.5) -- (2.5,5.5) -- (1.5,5.5) -- (1.5,0.5);
\node[text centered] at (2,1) {$x_1$};
\node[text centered] at (2,2) {$x_2$};
\node[text centered] at (2,3.1) {$\vdots$};
\node[text centered] at (2,4.1) {$\vdots$};
\node[text centered] at (2,5) {$x_n$};
\node[below] at (2,0.5) {\footnotesize$\bm{M}$};
\node[above] at (2,5.5) {\footnotesize$\bm{M}+\bm{A}$};
\node[right] at (2.5,1) {$j_1$};
\node[right] at (2.5,2) {$j_2$};
\node[right] at (2.5,3.1) {$\vdots$};
\node[right] at (2.5,4.1) {$\vdots$};
\node[right] at (2.5,5) {$j_n$};
\node[left] at (1.5,1) {$i_1$};
\node[left] at (1.5,2) {$i_2$};
\node[left] at (1.5,3.1) {$\vdots$};
\node[left] at (1.5,4.1) {$\vdots$};
\node[left] at (1.5,5) {$i_n$};
}
=
\tikz{1.2}{
\foreach\y in {1,...,4}{
\draw[lgray,line width=1.5pt] (1.5,0.5+\y) -- (2.5,0.5+\y);
}
\draw[lgray,line width=1.2pt] (2.5,0.5) -- (2.5,5.5);
\draw[double,lgray,line width=1.7pt] (2.5,5.5) -- (1.5,5.5);
\draw[lgray,line width=1.2pt] (1.5,5.5) -- (1.5,0.5);
\draw[double,lgray,line width=1.7pt] (1.5,0.5) -- (2.5,0.5);
\node[text centered] at (2,1) {$x_1$};
\node[text centered] at (2,2) {$x_2$};
\node[text centered] at (2,3.1) {$\vdots$};
\node[text centered] at (2,4.1) {$\vdots$};
\node[text centered] at (2,5) {$x_n$};
\node[below] at (2,0.5) {\phantom{\footnotesize$\bm{M}$}};
\node[above] at (2,5.5) {\footnotesize$\bm{A}$};
\node[right] at (2.5,1) {$j_1$};
\node[right] at (2.5,2) {$j_2$};
\node[right] at (2.5,3.1) {$\vdots$};
\node[right] at (2.5,4.1) {$\vdots$};
\node[right] at (2.5,5) {$j_n$};
\node[left] at (1.5,1) {$i_1$};
\node[left] at (1.5,2) {$i_2$};
\node[left] at (1.5,3.1) {$\vdots$};
\node[left] at (1.5,4.1) {$\vdots$};
\node[left] at (1.5,5) {$i_n$};
}
\end{align*}
where the dependence on $\{v_1,\dots,v_n\}$ is kept implicit in the object appearing on the right-hand side.
\end{defn}

\subsection{Explicit computation of column operator components}

In this subsection we turn to the question of explicitly computing the matrix elements of the linear operators \eqref{Q-def}. We begin with some auxiliary definitions:

\begin{defn}[Admissible vectors]
\label{defn:admiss}
Let $\mathcal{I} = (i_1,\dots,i_n) \in \{0,1,\dots,n\}^n$ and 
$\mathcal{J} = (j_1,\dots,j_n) \in \{0,1,\dots,n\}^n$ be two vectors of integers. We say that the pair $(\mathcal{I},\mathcal{J})$ is {\it admissible} provided that
\begin{align}
\label{mult-admiss}
1 \geq \text{mult}_k(\mathcal{I}) \geq \text{mult}_k(\mathcal{J}) \geq 0,
\qquad
\forall\ 1 \leq k \leq n,
\end{align}
with $\text{mult}_k(\mathcal{I})$ and $\text{mult}_k(\mathcal{J})$ denoting the multiplicities of $k$ in $\mathcal{I}$ and $\mathcal{J}$, respectively.
\end{defn}

\begin{defn}[Colour data]
\label{defn:data}
Given an admissible pair of vectors $(\mathcal{I},\mathcal{J})$, as in Definition \ref{defn:admiss}, we introduce two disjoint sets $\mathcal{P} \subset \{1,\dots,n\}$ and $\mathcal{Q} \subset \{1,\dots,n\}$, where $p \in \mathcal{P}$ if and only if $\text{mult}_p(\mathcal{I}) = 1$, 
$\text{mult}_p(\mathcal{J}) = 0$,  while $p \in \mathcal{Q}$ if and only if 
$\text{mult}_p(\mathcal{I}) = \text{mult}_p(\mathcal{J}) = 1$. We refer to $(\mathcal{P},\mathcal{Q})$ as the {\it colour data} associated to $(\mathcal{I},\mathcal{J})$.
\end{defn}

\begin{defn}[Coordinates]
\label{defn:coord}
Let $\mathcal{I} = (i_1,\dots,i_n)$ and $\mathcal{J} = (j_1,\dots,j_n)$ be admissible vectors as in Definition \ref{defn:admiss}, and $(\mathcal{P},\mathcal{Q})$ their associated colour data, as in Definition \ref{defn:data}. Introduce another two vectors
\begin{align}
\label{coord-def}
\mathcal{A} = (a_p)_{p \in \mathcal{P} \cup \mathcal{Q}},
\qquad 
\mathcal{B} = (b_p)_{p \in \mathcal{Q}},
\end{align} 
such that
\begin{align}
\label{coord-def2}
i_{a_p} = p, \quad \forall\ p \in \mathcal{P} \cup \mathcal{Q},
\qquad\qquad
j_{b_p} = p, \quad \forall\ p \in \mathcal{Q}.
\end{align}
We shall call $(\mathcal{A},\mathcal{B})$ the {\it coordinates} of 
$(\mathcal{I},\mathcal{J})$.
\end{defn}

\begin{ex}
In order to illustrate the above definitions, let us choose $n=5$, and $\mathcal{I} = (0,2,1,5,3)$, $\mathcal{J} = (2,0,0,0,5)$. These vectors clearly satisfy the requirements \eqref{mult-admiss}, and are thus admissible. We find that $\text{mult}_p(\mathcal{I}) = 1$, $\text{mult}_p(\mathcal{J}) = 0$ for $p \in \{1,3\}$, while $\text{mult}_p(\mathcal{I}) = \text{mult}_p(\mathcal{J}) = 1$ for $p \in \{2,5\}$. Hence, $\mathcal{P} = \{1,3\}$ and $\mathcal{Q} = \{2,5\}$ is the colour data associated to $(\mathcal{I},\mathcal{J})$. Finally, one finds $\mathcal{A} = (a_1,a_2,a_3,a_5) = (3,2,5,4)$ by reading the positions of $\{1,2,3,5\}$ in $\mathcal{I}$, and $\mathcal{B} = (b_2,b_5) = (1,5)$ by reading the positions of $\{2,5\}$ in $\mathcal{J}$.

To illustrate how we will make use of such quantities, consider the following example of a tower \eqref{tower} for $n=5$, $\I = (0,1,0,0,0)$, $\J = (1,1,1,0,0)$, $(i_1,\dots,i_5) = \mathcal{I}$, $(j_1,\dots,j_5) = \mathcal{J}$ (with the same $\mathcal{I}$ and $\mathcal{J}$ as above):
\begin{align*}
L(\I,0,2,1,5,3;\J,2,0,0,0,5)
=
\tikz{1.2}{
\foreach\y in {1,...,4}{
\draw[lgray,line width=1.5pt] (1.5,0.5+\y) -- (2.5,0.5+\y);
}
\draw[lgray,line width=1.5pt] (1.5,0.5) -- (2.5,0.5) -- (2.5,5.5) -- (1.5,5.5) -- (1.5,0.5);
\draw[red,line width=1pt,->] (1.5,3) -- (1.775,3) -- (1.775,5.5);
\draw[orange,line width=1pt,->] (1.5,2) -- (1.925,2) -- (1.925,5.5);
\draw[orange,line width=1pt,->] (1.925,0.5) -- (1.925,1) -- (2.5,1);
\draw[green,line width=1pt,->] (1.5,5) -- (2.075,5) -- (2.075,5.5);
\draw[blue,line width=1pt,->] (1.5,4) -- (2.225,4) -- (2.225,5) -- (2.5,5);
\node[below] at (2,0.5) {\footnotesize$(0,1,0,0,0)$};
\node[above] at (2,5.5) {\footnotesize$(1,1,1,0,0)$};
\node[right] at (2.5,1) {$2$};
\node[right] at (2.5,2) {$0$};
\node[right] at (2.5,3) {$0$};
\node[right] at (2.5,4) {$0$};
\node[right] at (2.5,5) {$5$};
\node[left] at (1.5,1) {$0$};
\node[left] at (1.5,2) {$2$};
\node[left] at (1.5,3) {$1$};
\node[left] at (1.5,4) {$5$};
\node[left] at (1.5,5) {$3$};
}
\end{align*}
In this picture, $\mathcal{I}$ and $\mathcal{J}$ label the states (read from bottom to top) which live on the left and right edges of the tower, respectively; the admissibility of $(\mathcal{I},\mathcal{J})$ translates into the fact that each colour $\{1,\dots,5\}$ appears at most once on the left/right of the tower, with the requirement that all colours that appear on the right must have also appeared on the left. $\mathcal{P} \cup \mathcal{Q}$ gives the set of all colours entering via the left edges of the tower, $\mathcal{Q}$ gives the set of all colours leaving via the right edges.
\end{ex}

\begin{thm}
\label{thm:comp}
Let $\mathcal{I} = (i_1,\dots,i_n)$, $\mathcal{J} = (j_1,\dots,j_n)$ be admissible vectors as in Definition \ref{defn:admiss}, and associate to them the colour data $(\mathcal{P},\mathcal{Q})$ in the same way as in Definition \ref{defn:data}. Let $(\mathcal{A},\mathcal{B})$ be the coordinates of $(\mathcal{I},\mathcal{J})$, as in equations \eqref{coord-def}--\eqref{coord-def2}. Fix $n$ complex parameters $\{v_1,\dots,v_n\}$ such that $v_r = 0$ for all $r \not\in \mathcal{P} \cup \mathcal{Q}$, and define the binary string
\begin{align*}
\bm{e}_{\mathcal{P}}
=
\sum_{p \in \mathcal{P}} 
\bm{e}_p.
\end{align*}
Under the above set of assumptions, we have
\begin{multline}
\label{components}
\bra{i_1,\dots,i_n}
\Lambda_{\{v_1,\dots,v_n\}}
\left( \bm{e}_{\mathcal{P}} \right)
\ket{j_1,\dots,j_n}
\\
=
\prod_{\substack{
p > \ell
\\
p \in \mathcal{P} \cup \mathcal{Q}
\\
\ell \in \mathcal{Q}
}}
\bm{1}(a_p \not= b_{\ell})
\prod_{p \in \mathcal{P}} t^{g(p)}
\prod_{p \in \mathcal{Q}} x_{b_p}
\prod_{p \in \mathcal{P} \cup \mathcal{Q}}
\frac{1}{1-v_p t^{f(p)}}
\prod_{\substack{
p\in \mathcal{Q} \\ a_p \not= b_p
}}
\frac{ v_p^{\bm{1}(a_p > b_p)} t^{h(p)}(1-t)}
{1-v_p t^{f(p)+1}},
\end{multline}
where we have defined the combinatorial exponents
\begin{align}
\label{exponents}
\begin{split}
f(p) 
&= 
\# \{\ell \in \mathcal{Q} : \ell < p\},
\\
g(p)
&=
\# \{\ell \in \mathcal{Q}: \ell < p, a_p < b_{\ell}\},
\\
h(p)
&=
\# \{\ell \in \mathcal{Q}: \ell < p, b_{\ell} \in (a_p,b_p)\},
\end{split}
\end{align}
with the interval $(a_p,b_p)$ appearing in $h(p)$ to be interpreted in a cyclic sense; namely, for all integers $1 \leq a,b \leq n$ we define
\begin{align*}
(a,b) 
:=
\left\{
\begin{array}{ll}
\{a+1,\dots,b-1\},
&
a < b,
\\
\{a+1,\dots,n\} \cup \{1,\dots,b-1\},
&
a > b,
\\
\varnothing,
&
a = b.
\end{array}
\right.
\end{align*}

\end{thm}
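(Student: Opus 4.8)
The plan is to compute the matrix element \eqref{components} directly from the definition \eqref{Q-def}, by summing the weight of the tower \eqref{tower} over all bottom boundaries $\M=(m_1,\dots,m_n)$. The starting observation is that since $v_r=0$ forces $m_r=0$ for every $r\notin\mathcal{P}\cup\mathcal{Q}$, and since the internal horizontal states of a tower are fixed by conservation, for each admissible $\M$ there is a \emph{single} configuration contributing, of weight $\prod_{k=1}^{n}L_{x_k}(\K(k-1),i_k;\K(k),j_k)$ with $\K(k-1)=\M+\sum_{\ell<k}(\bm{e}_{i_\ell}-\bm{e}_{j_\ell})$. Because each colour occurs at most once in $\mathcal{I}$ and $\mathcal{J}$, this gives, for $c\in\mathcal{P}\cup\mathcal{Q}$,
\begin{align*}
[\K(k-1)]_c = m_c + \bm{1}(a_c<k) - \bm{1}(c\in\mathcal{Q})\,\bm{1}(b_c<k),
\end{align*}
and $[\K(k-1)]_c=0$ otherwise. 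Thus the only remaining summation is over $\M$, to be carried out colour by colour at the very end.

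Next I read off each row's weight from the table \eqref{bos-weights}. Every row with $j_k=0$ (whether empty or a pure upward turn $i_k\ge1$) has weight $1$. The rows carrying nontrivial weight are exactly the exit rows $k=b_p$, $p\in\mathcal{Q}$; each contributes a factor $x_{b_p}$ (producing $\prod_{p\in\mathcal{Q}}x_{b_p}$), a factor $t^{\sum_{c>p}[\K(b_p-1)]_c}$, and, when $a_p\ne b_p$ (so that a strictly smaller colour, or nothing, enters from the left there), a factor $(1-t^{[\K(b_p-1)]_p})$. The only locally forbidden face is one in which a colour $p$ enters from the left exactly where a strictly smaller colour $\ell\in\mathcal{Q}$ exits to the right, i.e.\ $a_p=b_\ell$ with $p>\ell$; the vanishing of this weight in \eqref{bos-weights} is precisely the source of the indicator $\prod_{p>\ell,\,p\in\mathcal{P}\cup\mathcal{Q},\,\ell\in\mathcal{Q}}\bm{1}(a_p\ne b_\ell)$.

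Collecting the powers of $t$, the exponent arising from the factors $t^{\sum_{c>p}[\K(b_p-1)]_c}$ splits into a bulk part $\sum_c m_c\,\#\{p\in\mathcal{Q}:p<c\}=\sum_c m_c f(c)$ and an $\M$-independent residual $E_{\mathrm{res}}=\sum_{p\in\mathcal{Q}}\sum_{c>p}\bigl(\bm{1}(a_c<b_p)-\bm{1}(c\in\mathcal{Q})\bm{1}(b_c<b_p)\bigr)$. Since $[\K(b_p-1)]_p=m_p+\bm{1}(a_p<b_p)$, I then perform the geometric sums over each $m_c\ge0$ independently: a colour $p\in\mathcal{P}$, or $p\in\mathcal{Q}$ with $a_p=b_p$, yields $\sum_m v_p^m t^{m f(p)}=(1-v_p t^{f(p)})^{-1}$, while a colour $p\in\mathcal{Q}$ with $a_p\ne b_p$ yields
\begin{align*}
\sum_{m\ge0} v_p^m t^{m f(p)}\bigl(1-t^{m+\bm{1}(a_p<b_p)}\bigr)
=
\frac{1}{1-v_p t^{f(p)}}\cdot
\frac{v_p^{\bm{1}(a_p>b_p)}\,t^{f(p)\bm{1}(a_p>b_p)}(1-t)}{1-v_p t^{f(p)+1}}.
\end{align*}
This already reproduces every factor of \eqref{components} except for the precise $t$-exponents $g(p)$ and $h(p)$ of \eqref{exponents}.

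It then remains to verify the single combinatorial identity
\begin{align*}
E_{\mathrm{res}}+\sum_{p\in\mathcal{Q}}f(p)\,\bm{1}(a_p>b_p)
=\sum_{p\in\mathcal{P}}g(p)+\sum_{p\in\mathcal{Q}}h(p),
\end{align*}
which I prove by fixing the larger colour of each pair and summing over the smaller one. The terms of $E_{\mathrm{res}}$ with $c\in\mathcal{P}$ reassemble directly into $\sum_{p\in\mathcal{P}}g(p)$, since $g(c)=\#\{p\in\mathcal{Q}:p<c,\ a_c<b_p\}$. For a pair with larger colour $c\in\mathcal{Q}$ and smaller colour $p<c$ in $\mathcal{Q}$, the local contribution is $\bm{1}(a_c<b_p)-\bm{1}(b_c<b_p)+\bm{1}(a_c>b_c)$, and I check that this equals $\bm{1}\bigl(b_p\in(a_c,b_c)\bigr)$ in the cyclic sense, splitting into the cases $a_c<b_c$ and $a_c>b_c$ and using $b_p\ne b_c$ (distinct exit rows) together with $b_p\ne a_c$ (guaranteed by the indicator $\bm{1}(a_c\ne b_p)$) to handle the open endpoints; summing over $p$ gives $h(c)$. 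The main obstacle is exactly this last identity: tracking which $\mathcal{Q}$-colours feed the residual $t$-exponent, and recognising that the wrap-around supplied by the geometric factor $t^{f(p)\bm{1}(a_p>b_p)}$ in the case $a_p>b_p$ is precisely what turns the ordinary interval into the cyclic interval $(a_c,b_c)$ of $h$. Everything else is routine bookkeeping of the face weights \eqref{bos-weights}.
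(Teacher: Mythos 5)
Your proposal is correct and takes essentially the same route as the paper's proof: both reduce the matrix element to a single configuration per $\M$, obtain the indicator functions from the vanishing weight $L_x(\I,j;\I^{+-}_{ji},i)=0$ for $i<j$, compute independent geometric series over the winding numbers $m_c$ of each colour, and identify the residual $t$-exponents with $f$, $g$, $h$. The only difference is presentational: the paper reads off these exponents colour-by-colour by tracing paths through the tower (Figure \ref{fig:phi-chi-psi}), whereas you collect them row-by-row and then verify the aggregate identity $E_{\mathrm{res}}+\sum_{p\in\mathcal{Q}}f(p)\,\bm{1}(a_p>b_p)=\sum_{p\in\mathcal{P}}g(p)+\sum_{p\in\mathcal{Q}}h(p)$ by a pairwise case analysis — the same bookkeeping, carried out algebraically rather than pictorially.
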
 

\begin{proof}
We compute the components by calculating
\begin{align}
\label{comp-calc1}
\bra{i_1,\dots,i_n}
\Lambda_{\{v_1,\dots,v_n\}}
\left( \bm{e}_{\mathcal{P}} \right)
\ket{j_1,\dots,j_n}
=
\sum_{\M \in \mathbb{N}^n}
\prod_{k=1}^{n}
v_k^{m_k}
L(\M, i_1, \dots, i_n; 
\M+\bm{e}_{\mathcal{P}}, j_1, \dots, j_n),
\end{align}
where $L(\M, i_1, \dots, i_n; \M+\bm{e}_{\mathcal{P}}, j_1, \dots, j_n)$ is given by the tower of vertices \eqref{tower} with $\I = \M$ and $\J = \M + \bm{e}_{\mathcal{P}}$. We begin by remarking that this tower has weight zero if $i_k > j_k \geq 1$ for any $1 \leq k \leq n$, due to the vanishing of the sixth weight in \eqref{bos-weights}. This gives rise to the product of indicator functions present in \eqref{components}. In the rest of the proof, we assume that the constraints imposed by these indicators are always obeyed.

Noting that the Boltzmann weights \eqref{bos-weights} are factorized across different colours, \ie\ over the $n$ components of $\I$, we find that the $n$ sums in \eqref{comp-calc1}, over $m_1,\dots,m_n \geq 0$,  can be computed independently of each other (once $\mathcal{I}$ and $\mathcal{J}$ are fixed). This leads to the factorization
\begin{align}
\bra{i_1,\dots,i_n}
\Lambda_{\{v_1,\dots,v_n\}}
\left( \bm{e}_{\mathcal{P}} \right)
\ket{j_1,\dots,j_n}
=
\prod_{p \in \mathcal{P}}
\phi_p(v_p)
\prod_{\substack{p \in \mathcal{Q} \\ a_p = b_p}}
\chi_p(v_p)
\prod_{\substack{p \in \mathcal{Q} \\ a_p \not= b_p}}
\psi_p(v_p),
\end{align}
where we have introduced the following functions:
\begin{align}
\label{phi-sum}
\phi_p(v)
=
t^{g(p)}
\sum_{m=0}^{\infty}
\left(v t^{f(p)}\right)^m
=
\frac{t^{g(p)}}{1-v t^{f(p)}},
\end{align}
\begin{align}
\label{chi-sum}
\chi_p(v)
=
x_{b_p}
\sum_{m = 0}^{\infty}
\left(v t^{f(p)}\right)^m
=
\frac{x_{b_p}}{1-v t^{f(p)}},
\end{align}
\begin{align}
\label{psi-sum}
\psi_p(v) 
=
x_{b_p}
v^{\bm{1}(a_p > b_p)}
t^{h(p)}
\sum_{m=0}^{\infty}
(1-t^{m+1})
\left(v t^{f(p)}\right)^m
=
\frac{x_{b_p} v^{\bm{1}(a_p > b_p)} t^{h(p)}(1-t)}
{\left(1-v t^{f(p)} \right) \left(1-v t^{f(p)+1} \right)}.
\end{align}
The sums \eqref{phi-sum}--\eqref{psi-sum} can be obtained by tracing paths of a fixed colour; see Figure \ref{fig:phi-chi-psi}.
\begin{figure}
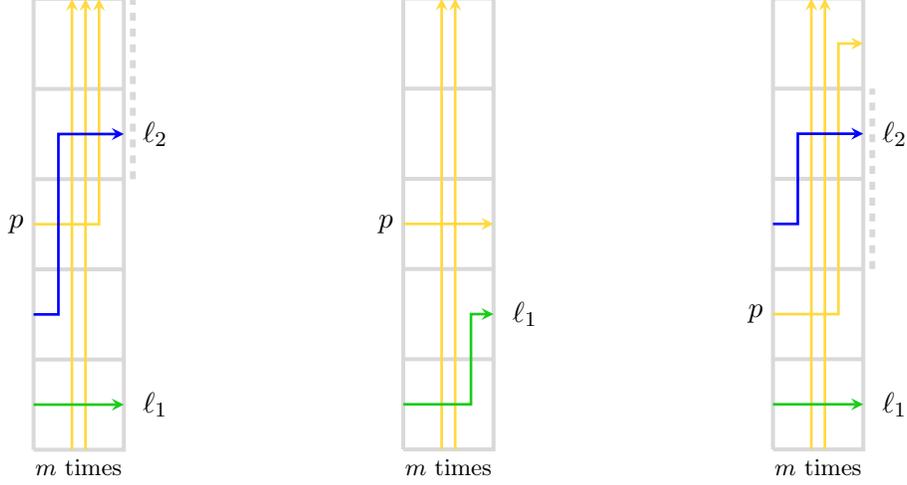

\tikz{1.2}{
\foreach\y in {1,...,4}{
\draw[lgray,line width=1.5pt] (1.5,0.5+\y) -- (2.5,0.5+\y);
}
\draw[lgray,line width=1.5pt] (1.5,0.5) -- (2.5,0.5) -- (2.5,5.5) -- (1.5,5.5) -- (1.5,0.5);
\draw[lgray,dashed,line width=2.2pt] (2.6,3.5) -- (2.6,5.5);
\draw[yellow,line width=1pt,->] (1.5,3) -- (2.225,3) -- (2.225,5.5);
\draw[yellow,line width=1pt,->] (1.925,0.5) -- (1.925,5.5);
\draw[yellow,line width=1pt,->] (2.075,0.5) -- (2.075,5.5);
\draw[blue,line width=1pt,->] (1.5,2) -- (1.775,2) -- (1.775,4) -- (2.5,4);
\draw[green,line width=1pt,->] (1.5,1) -- (2.5,1);
\node[below] at (2,0.5) {\footnotesize$m$\ {\rm times}};
\node[right] at (2.6,4) {$\ell_2$};
\node[right] at (2.6,1) {$\ell_1$};
\node[left] at (1.5,3) {$p$};
}
\qquad\qquad\qquad
\tikz{1.2}{
\foreach\y in {1,...,4}{
\draw[lgray,line width=1.5pt] (1.5,0.5+\y) -- (2.5,0.5+\y);
}
\draw[lgray,line width=1.5pt] (1.5,0.5) -- (2.5,0.5) -- (2.5,5.5) -- (1.5,5.5) -- (1.5,0.5);
\draw[yellow,line width=1pt,->] (1.5,3) -- (2.5,3);
\draw[yellow,line width=1pt,->] (1.925,0.5) -- (1.925,5.5);
\draw[yellow,line width=1pt,->] (2.075,0.5) -- (2.075,5.5);
\draw[green,line width=1pt,->] (1.5,1) -- (2.25,1) -- (2.25,2) -- (2.5,2);
\node[below] at (2,0.5) {\footnotesize$m$\ {\rm times}};
\node[right] at (2.6,2) {$\ell_1$};
\node[left] at (1.5,3) {$p$};
}
\qquad\qquad\qquad
\tikz{1.2}{
\foreach\y in {1,...,4}{
\draw[lgray,line width=1.5pt] (1.5,0.5+\y) -- (2.5,0.5+\y);
}
\draw[lgray,line width=1.5pt] (1.5,0.5) -- (2.5,0.5) -- (2.5,5.5) -- (1.5,5.5) -- (1.5,0.5);
\draw[lgray,dashed,line width=2.2pt] (2.6,2.5) -- (2.6,4.5);
\draw[yellow,line width=1pt,->] (1.5,2) -- (2.225,2) -- (2.225,5) -- (2.5,5);
\draw[yellow,line width=1pt,->] (1.925,0.5) -- (1.925,5.5);
\draw[yellow,line width=1pt,->] (2.075,0.5) -- (2.075,5.5);
\draw[blue,line width=1pt,->] (1.5,3) -- (1.775,3) -- (1.775,4) -- (2.5,4);
\draw[green,line width=1pt,->] (1.5,1) -- (2.5,1);
\node[below] at (2,0.5) {\footnotesize$m$\ {\rm times}};
\node[right] at (2.6,4) {$\ell_2$};
\node[right] at (2.6,1) {$\ell_1$};
\node[left] at (1.5,2) {$p$};
}
\caption{Left panel: a path of colour $p$ enters via the left of the tower, and circles $m$ times around it before exiting via the top. $f(p)$ counts the number of colours $\ell_1 < p$ which enter from the left of the tower and leave anywhere on its right. $g(p)$ counts the number of colours $\ell_2 < p$ that exit through one of the shaded right edges, which are situated strictly above the position where $p$ entered.
\\
{\color{white} .}
\hspace{0.1em} Middle panel: a path of colour $p$ enters via the left of the tower, and circles $m$ times around it before exiting via the right edge directly opposite its entry point. In this case, only the $f(p)$ statistic is of interest (it is defined as above).
\\
{\color{white} .}
\hspace{0.1em} Right panel: a path of colour $p$ enters via the left of the tower, circles $m$ times around it, and then exits via a right edge in a different row to its starting point. $h(p)$ counts the number of colours $\ell_2 < p$ that exit through one of the shaded right edges, which are situated strictly between the initial and final locations of path $p$.}
\label{fig:phi-chi-psi}
\end{figure}
Pictorially, the first sum \eqref{phi-sum} corresponds to a path that entered on the left of the tower, circled $m$ times over the column and then exited through its top. The second sum \eqref{chi-sum} corresponds to a path that entered on the left of the tower, circled $m$ times over the column, and finally exited on its right through the {\it same} row as where it entered. The third sum \eqref{psi-sum} corresponds to a path that entered on the left of the tower, circled $m$ times over the column, and then exited on its right but via a {\it different} row from the one where it entered. The precise form of the summand in all three cases \eqref{phi-sum}--\eqref{psi-sum} is obtained by extracting factors from the weights \eqref{bos-weights} that correspond to the colour being fixed; the summation index counts the number of circles that the path makes around the column.

The result \eqref{components} now follows.

\end{proof}

\subsection{Column rotation}
\label{ssec:rotate}

Next, we define a specific ``rotation'' operation which acts on the column operator components. The key result will be the fact that, up to a simple overall factor depending on $\{v_1,\dots,v_n\}$ and $t$, the components \eqref{components} remain invariant under this operation: 
\begin{prop}
\label{prop:rotate}
Assuming the same setup as in Theorem \ref{thm:comp}, consider the effect of rotating the edge states of the column, which is equivalent to sending $(i_1,\dots,i_{n-1},i_n) \mapsto (i_n,i_1,\dots,i_{n-1})$ and $(j_1,\dots,j_{n-1},j_n) \mapsto (j_n,j_1,\dots,j_{n-1})$ in \eqref{components}, together with the corresponding cyclic shift of the row parameters. We find that
\begin{align}
\label{rotation-const}
\frac{
\bra{i_1,\dots,i_n}
\Lambda_{\{v_1,\dots,v_n\}}(\bm{e}_{\mathcal{P}})
\ket{j_1,\dots,j_n}
}
{
\bra{i_n,i_1,\dots,i_{n-1}}
\mathfrak{s}_{n-1} \cdots \mathfrak{s}_1 \cdot
\Lambda_{\{v_1,\dots,v_n\}}
\left( \bm{e}_{\mathcal{P}} \right)
\ket{j_n,j_1,\dots,j_{n-1}}
}
=
\kappa_{\{v_1,\dots,v_n;t\}}(\mathcal{A},\mathcal{B}),
\end{align}
where the right-hand side is given by
\begin{align}
\label{rotation-const2}
\kappa_{\{v_1,\dots,v_n;t\}}(\mathcal{A},\mathcal{B})
=
\prod_{\substack{(p,\ell) \in \mathcal{P} \times \mathcal{Q} \\ p > \ell}}
t^{\bm{1}(b_{\ell}=n)-\bm{1}(a_p = n)}
\prod_{p \in \mathcal{Q}}
v_p^{\bm{1}(a_p=n)-\bm{1}(b_p = n)}.
\end{align}
\end{prop}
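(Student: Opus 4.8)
The plan is to prove \eqref{rotation-const}--\eqref{rotation-const2} by a direct comparison of the two sides using the explicit product formula \eqref{components} of Theorem \ref{thm:comp}; the whole statement then reduces to bookkeeping of how the combinatorial data transforms under the rotation. The first step is to record that rotating the edge states $(i_1,\dots,i_n)\mapsto(i_n,i_1,\dots,i_{n-1})$ and $(j_1,\dots,j_n)\mapsto(j_n,j_1,\dots,j_{n-1})$ leaves the multiplicities, and hence the colour data $(\mathcal{P},\mathcal{Q})$ and the colour labels themselves, unchanged; it only shifts the positions of the colours. Since $i_{a_p}=p$ forces $i'_{a_p+1}=p$ in the rotated state, the coordinates \eqref{coord-def}--\eqref{coord-def2} transform as $a_p\mapsto a_p+1$ and $b_p\mapsto b_p+1$, both read cyclically modulo $n$ (so the value $n$ wraps to $1$). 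I would set up this cyclic shift once and then track each factor of \eqref{components} through it.

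Next I would dispose of all the manifestly invariant factors. The statistic $f(p)=\#\{\ell\in\mathcal{Q}:\ell<p\}$ depends only on the colour set $\mathcal{Q}$, so every factor $(1-v_pt^{f(p)})^{-1}$ and $(1-v_pt^{f(p)+1})^{-1}$ is unchanged; this is why no such denominators appear in $\kappa$. The indicator $\bm{1}(a_p\neq b_\ell)$ and the statistic $h(p)=\#\{\ell\in\mathcal{Q}:\ell<p,\ b_\ell\in(a_p,b_p)\}$ are defined through cyclic relations (equality of positions, and cyclic betweenness), which are preserved by a simultaneous cyclic shift of all coordinates by $1$; hence these are invariant too. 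Finally I would check that the spectral-parameter factor $\prod_{p\in\mathcal{Q}}x_{b_p}$ is unchanged: the shift $\mathfrak{s}_{n-1}\cdots\mathfrak{s}_1$ evaluates the rotated component at $(x_n,x_1,\dots,x_{n-1})$, i.e.\ substitutes $y_k=x_{k-1}$ for $k\geq 2$ and $y_1=x_n$, so that $y_{b'_p}=x_{b_p}$ in both cases ($b_p<n$, where $b'_p=b_p+1\geq 2$ and $y_{b'_p}=x_{b'_p-1}=x_{b_p}$; and $b_p=n$, where $b'_p=1$ and $y_1=x_n=x_{b_p}$). This is exactly the role of the ``corresponding cyclic shift of the row parameters'' in the statement, and it is what makes $\kappa$ free of $x$-dependence.

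All that then remains are the two families of factors built from the non-cyclic inequalities in \eqref{components}, namely the $t$-power $\prod_{p\in\mathcal{P}}t^{g(p)}$ with $g(p)=\#\{\ell\in\mathcal{Q}:\ell<p,\ a_p<b_\ell\}$, and the $v$-power $\prod_{p\in\mathcal{Q},\,a_p\neq b_p}v_p^{\bm{1}(a_p>b_p)}$. Here a shift by $1$ can flip a strict inequality precisely when one of the two coordinates equals $n$ and so wraps to $1$, and I expect the careful case analysis at this ``seam'' to be the main obstacle. For the $v$-power, comparing $\bm{1}(a_p>b_p)$ with $\bm{1}(a'_p>b'_p)$ gives a change of $\bm{1}(a_p=n)-\bm{1}(b_p=n)$ for each $p\in\mathcal{Q}$ (and this holds trivially when $a_p=b_p$, where $\chi_p$ carries no $v$-power), producing exactly $\prod_{p\in\mathcal{Q}}v_p^{\bm{1}(a_p=n)-\bm{1}(b_p=n)}$. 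For the $t$-power, comparing $\bm{1}(a_p<b_\ell)$ with $\bm{1}(a'_p<b'_\ell)$ for each pair $p\in\mathcal{P}$, $\ell\in\mathcal{Q}$ with $\ell<p$ gives a change of $\bm{1}(a_p<n)\bm{1}(b_\ell=n)-\bm{1}(a_p=n)\bm{1}(b_\ell<n)$; the delicate point is that the support constraint $\bm{1}(a_p\neq b_\ell)=1$ rules out $a_p=b_\ell=n$, so this collapses to $\bm{1}(b_\ell=n)-\bm{1}(a_p=n)$, yielding $\prod_{(p,\ell)\in\mathcal{P}\times\mathcal{Q},\,p>\ell}t^{\bm{1}(b_\ell=n)-\bm{1}(a_p=n)}$.

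Assembling the invariant factors (which cancel in the ratio \eqref{rotation-const}) with these two computed contributions reproduces \eqref{rotation-const2} exactly, completing the proof. A conceptual alternative would be to argue geometrically, viewing the rotation as physically rotating a path configuration on the cylindrical column by one row (weight-preserving except for paths crossing the seam between rows $n$ and $1$); but given that Theorem \ref{thm:comp} is already in hand, the direct factor-by-factor computation from \eqref{components} is the most economical route, and one can verify the result on the $n=5$ example following Definition \ref{defn:coord} as a sanity check.
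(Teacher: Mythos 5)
Your proposal is correct and follows essentially the same route as the paper's proof: both compute the ratio directly from \eqref{components}, observe that the factors involving $f(p)$, $h(p)$, the indicators and (after the shift $\mathfrak{s}_{n-1}\cdots\mathfrak{s}_1$) the product $\prod_{p\in\mathcal{Q}}x_{b_p}$ are invariant under the simultaneous cyclic shift $a_p\mapsto a_p+1$, $b_p\mapsto b_p+1\ (\mathrm{mod}\ n)$, and then resolve the remaining $t^{g(p)}$ and $v_p^{\bm{1}(a_p>b_p)}$ factors by a case analysis at the seam where a coordinate equals $n$. The only cosmetic difference is that your collapse of $\bm{1}(a_p<n)\bm{1}(b_\ell=n)-\bm{1}(a_p=n)\bm{1}(b_\ell<n)$ to $\bm{1}(b_\ell=n)-\bm{1}(a_p=n)$ is actually an identity (the two cross terms cancel), so the appeal to the support constraint $\bm{1}(a_p\not=b_\ell)$ is unnecessary, though harmless.
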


\begin{proof}
We can explicitly compute the ratio on the left-hand side of \eqref{rotation-const} by using the formula \eqref{components} for the column components. The main observation is that, under the proposed rotation, almost all of the factors present in \eqref{components} remain invariant.

Let us recall that $\mathcal{P}\cup\mathcal{Q}$ is the set of colours entering on the $n$ left edges of the column, while $\mathcal{Q}$ is the set of colours exiting via the $n$ right edges of the column. The vector $(a_p)_{p \in \mathcal{P} \cup \mathcal{Q}}$ records the positions where colours traverse left edges, while $(b_p)_{p \in \mathcal{Q}}$ records the positions where colours traverse right edges.

First, we examine the term $\prod_{p \in \mathcal{Q}} x_{b_p}$ in \eqref{components}, where the product ranges over all colours which exit via the right edges of the column. The required cyclic rotation of the exiting positions is achieved by the replacement $b_p \mapsto b_p+1\ (\text{mod}\ n)$, which is clearly negated by the cyclic permutation $\mathfrak{s}_{n-1} \cdots \mathfrak{s}_1$, which sends $x_i \mapsto x_{i-1\ (\text{mod}\ n)}$. Therefore this term cancels out in the ratio on the left-hand side of \eqref{rotation-const}. 

Similarly, it is clear from their definition that both of the exponents $f(p)$ and $h(p)$ in \eqref{exponents} remain unchanged under the rotation $a_p \mapsto a_p+1\ (\text{mod}\ n)$ for $p \in \mathcal{P} \cup \mathcal{Q}$ and $b_p \mapsto b_p+1\ (\text{mod}\ n)$ for $p \in \mathcal{Q}$. This allows us to cancel all factors involving those exponents in the ratio \eqref{rotation-const}, and we then read
\begin{align}
\label{rotation-const2}
\frac{
\bra{i_1,\dots,i_n}
\Lambda_{\{v_1,\dots,v_n\}}(\bm{e}_{\mathcal{P}})
\ket{j_1,\dots,j_n}
}
{
\bra{i_n,i_1,\dots,i_{n-1}}
\mathfrak{s}_{n-1} \cdots \mathfrak{s}_1 \cdot
\Lambda_{\{v_1,\dots,v_n\}}
\left( \bm{e}_{\mathcal{P}} \right)
\ket{j_n,j_1,\dots,j_{n-1}}
}
=
\frac{
\prod_{p \in \mathcal{P}} t^{g(p)} 
\prod_{p \in \mathcal{Q}} v_p^{\bm{1}(a_p > b_p)}
}
{
\prod_{p \in \mathcal{P}} t^{\tilde{g}(p)}
\prod_{p \in \mathcal{Q}} v_p^{\bm{1}(\tilde{a}_p > \tilde{b}_p)}
},
\end{align}
where we have defined $\tilde{a}_p = a_p+1\ (\text{mod}\ n)$, 
$\tilde{b}_p = b_p+1\ (\text{mod}\ n)$ and 
\begin{align}
\tilde{g}(p)
=
\#\{\ell \in \mathcal{Q} : \ell < p, \tilde{a}_p < \tilde{b}_{\ell}\}.
\end{align}
We are able to simplify the right-hand side of \eqref{rotation-const2} yet further, by noticing that it is only in the case $a_p = n$ for some $p \in \mathcal{P} \cup \mathcal{Q}$, or $b_p = n$ for some $p \in \mathcal{Q}$, where we see a discrepancy between the numerator and denominator. In those cases, one has $\tilde{a}_p=1$ or $\tilde{b}_p=1$, which can cause inequalities of the form $a_p > b_{\ell}$ or $a_{\ell} < b_p$ that previously held to now be violated. Analysing these cases yields the final result:
\begin{multline*}
\frac{
\bra{i_1,\dots,i_n}
\Lambda_{\{v_1,\dots,v_n\}}(\bm{e}_{\mathcal{P}})
\ket{j_1,\dots,j_n}
}
{
\bra{i_n,i_1,\dots,i_{n-1}}
\mathfrak{s}_{n-1} \cdots \mathfrak{s}_1 \cdot
\Lambda_{\{v_1,\dots,v_n\}}
\left( \bm{e}_{\mathcal{P}} \right)
\ket{j_n,j_1,\dots,j_{n-1}}
}
\\
=
\prod_{\substack{(p,\ell) \in \mathcal{P} \times \mathcal{Q} \\ p > \ell}}
t^{\bm{1}(b_{\ell}=n)-\bm{1}(a_p = n)}
\prod_{p \in \mathcal{Q}}
v_p^{\bm{1}(n=a_p > b_p)-\bm{1}(a_p < b_p = n)},
\end{multline*}
where we can simplify the exponent in the second product by noting that 
\begin{align*}
\bm{1}(n=a_p > b_p)-\bm{1}(a_p < b_p = n) = 
\bm{1}(a_p=n) - \bm{1}(b_p=n).
\end{align*}
\end{proof}

\begin{rmk}
In the calculations that follow, it is more useful to recast the right-hand side of \eqref{rotation-const} in terms of the colour data $\mathcal{P}$, $\mathcal{Q}$ and the vector components $i_n$, $j_n$, without making reference to the coordinates of Definition \ref{defn:coord}. We find that 
\begin{align}
\label{rot-const}
\kappa_{\{v_1,\dots,v_n;t\}}
=
\frac{
t^{\#\{ a \in \mathcal{P} : a > j_n \} \bm{1}(j_n \geq 1)}
}
{
t^{\#\{ a \in \mathcal{Q} : i_n > a\} \bm{1}(i_n \in \mathcal{P})}
}
\cdot
\frac{
(v_{i_n})^{\bm{1}(i_n \in \mathcal{Q})}
}
{
(v_{j_n})^{\bm{1}(j_n \geq 1)}
},
\end{align}
with the same $\mathcal{P}$ and $\mathcal{Q}$ as in Definition \ref{defn:data}.
\end{rmk}

\subsection{Proof of the relation \eqref{cyclic-rel}}
\label{ssec:proof}

We are now ready to return to the proof of \eqref{cyclic-rel}. Fix $N$ vectors $(k^{(a)}_1,\dots,k^{(a)}_n)$, $1 \leq a \leq N$, where $0 \leq k^{(a)}_b \leq n$ for all $1 \leq b \leq n$. We introduce a function $Z_{\rm l}$ obtained by concatenating $N$ column operators as shown below:
\begin{align}
\label{Z_l}
Z_{\rm l}\Big[(k^{(1)}_1,\dots,k^{(1)}_n),\dots,(k^{(N)}_1,\dots,k^{(N)}_n)\Big]
=
\tikz{1.1}{
\foreach\y in {0}{
\draw[lgray,line width=1.7pt,double] (1.5,0.5+\y) -- (7.5,0.5+\y);
}
\foreach\y in {5}{
\draw[lgray,line width=1.7pt,double] (1.5,0.5+\y) -- (7.5,0.5+\y);
}
\foreach\y in {1,...,4}{
\draw[lgray,line width=1.5pt] (1.5,0.5+\y) -- (7.5,0.5+\y);
}
\foreach\x in {0,...,6}{
\draw[lgray,line width=1.5pt] (1.5+\x,0.5) -- (1.5+\x,5.5);
}
\node[text centered] at (2,1) {$x_1$};
\node[text centered] at (4,1) {$\cdots$};
\node[text centered] at (5,1) {$\cdots$};
\node[text centered] at (7,1) {$x_1$};
\node[text centered] at (2,4) {$x_n$};
\node[text centered] at (4,4) {$\cdots$};
\node[text centered] at (5,4) {$\cdots$};
\node[text centered] at (7,4) {$x_n$};
\node[text centered] at (2,5) {$qx_i$};
\node[text centered] at (4,5) {$\cdots$};
\node[text centered] at (5,5) {$\cdots$};
\node[text centered] at (7,5) {$qx_i$};
\node[text centered] at (2,2.1) {$\vdots$};
\node[text centered] at (2,3.1) {$\vdots$};
\node[text centered] at (7,2.1) {$\vdots$};
\node[text centered] at (7,3.1) {$\vdots$};
\node[above] at (2,5.5) {\footnotesize$\bm{\mu}(0)$};
\node[above,text centered] at (4,5.5) {$\cdots$};
\node[above,text centered] at (5,5.5) {$\cdots$};
\node[above] at (7,5.5) {\footnotesize$\bm{\mu}(N)$};
\node[right] at (7.5,1) {$0$};
\node[right] at (7.5,2.1) {$\vdots$};
\node[right] at (7.5,3.1) {$\vdots$};
\node[right] at (7.5,4) {$0$};
\node[right] at (7.5,5) {$0$};
\node[left] at (1.5,1) {$1$};
\node[left] at (1.5,2.1) {$\vdots$};
\node[left] at (1.5,3.1) {$\vdots$};
\node[left] at (1.5,4) {$n$};
\node[left] at (1.5,5) {$i$};
\node[right, text centered] at (2.5,5) {$k^{(1)}_i$};
\node at (2.5,5) {$\bullet$};
\node[left, text centered] at (6.5,5) {$k^{(N)}_i$};
\node at (6.5,5) {$\bullet$};
\node[right, text centered] at (2.5,4) {$k^{(1)}_n$};
\node at (2.5,4) {$\bullet$};
\node[left, text centered] at (6.5,4) {$k^{(N)}_n$};
\node at (6.5,4) {$\bullet$};
\node[right, text centered] at (2.5,1) {$k^{(1)}_1$};
\node at (2.5,1) {$\bullet$};
\node[left, text centered] at (6.5,1) {$k^{(N)}_1$};
\node at (6.5,1) {$\bullet$};
}
\end{align}
where each $\bm{\mu}(j)$ is given by \eqref{comp-state}. In this picture, the left edge states are ordered sequentially from bottom to top, with the exception of the state $i$; it is omitted from the sequence $(1,\dots,n)$ and instead appears in the top row. We continue to label rows according to the state which enters on the left edge; accordingly, the top row will continue to be called the $i$-th row. For any $1 \leq b \leq n$, the internal edge states of the $b$-th row are $(k^{(1)}_b,\dots,k^{(N)}_b)$. All internal edge states are fixed, not summed.

Let us introduce a further function $Z_{\rm r}$, with a similar definition:
\begin{align}
\label{Z_r}
Z_{\rm r}\Big[
(k^{(1)}_1,\dots,k^{(1)}_n),\dots,(k^{(N)}_1,\dots,k^{(N)}_n)\Big]
=
\tikz{1.1}{
\foreach\y in {0}{
\draw[lgray,line width=1.7pt,double] (1.5,0.5+\y) -- (7.5,0.5+\y);
}
\foreach\y in {5}{
\draw[lgray,line width=1.7pt,double] (1.5,0.5+\y) -- (7.5,0.5+\y);
}
\foreach\y in {1,...,4}{
\draw[lgray,line width=1.5pt] (1.5,0.5+\y) -- (7.5,0.5+\y);
}
\foreach\x in {0,...,6}{
\draw[lgray,line width=1.5pt] (1.5+\x,0.5) -- (1.5+\x,5.5);
}
\node[text centered] at (2,1) {$x_i$};
\node[text centered] at (4,1) {$\cdots$};
\node[text centered] at (5,1) {$\cdots$};
\node[text centered] at (7,1) {$x_i$};
\node[text centered] at (2,2) {$x_1$};
\node[text centered] at (4,2) {$\cdots$};
\node[text centered] at (5,2) {$\cdots$};
\node[text centered] at (7,2) {$x_1$};
\node[text centered] at (2,5) {$x_n$};
\node[text centered] at (4,5) {$\cdots$};
\node[text centered] at (5,5) {$\cdots$};
\node[text centered] at (7,5) {$x_n$};
\node[text centered] at (2,3.1) {$\vdots$};
\node[text centered] at (2,4.1) {$\vdots$};
\node[text centered] at (7,3.1) {$\vdots$};
\node[text centered] at (7,4.1) {$\vdots$};
\node[above] at (2,5.5) {\footnotesize$\bm{\mu}(0)$};
\node[above,text centered] at (4,5.5) {$\cdots$};
\node[above,text centered] at (5,5.5) {$\cdots$};
\node[above] at (7,5.5) {\footnotesize$\bm{\mu}(N)$};
\node[right] at (7.5,1) {$0$};
\node[right] at (7.5,2) {$0$};
\node[right] at (7.5,3.1) {$\vdots$};
\node[right] at (7.5,4.1) {$\vdots$};
\node[right] at (7.5,5) {$0$};
\node[left] at (1.5,1) {$i$};
\node[left] at (1.5,2) {$1$};
\node[left] at (1.5,3.1) {$\vdots$};
\node[left] at (1.5,4.1) {$\vdots$};
\node[left] at (1.5,5) {$n$};
\node[right, text centered] at (2.5,5) {$k^{(1)}_n$};
\node at (2.5,5) {$\bullet$};
\node[left, text centered] at (6.5,5) {$k^{(N)}_n$};
\node at (6.5,5) {$\bullet$};
\node[right, text centered] at (2.5,2) {$k^{(1)}_1$};
\node at (2.5,2) {$\bullet$};
\node[left, text centered] at (6.5,2) {$k^{(N)}_1$};
\node at (6.5,2) {$\bullet$};
\node[right, text centered] at (2.5,1) {$k^{(1)}_i$};
\node at (2.5,1) {$\bullet$};
\node[left, text centered] at (6.5,1) {$k^{(N)}_i$};
\node at (6.5,1) {$\bullet$};
}
\end{align}
Once again, the left edge states are ordered sequentially from bottom to top, with the exception of state $i$; it is omitted from $(1,\dots,n)$ and transferred to the bottom row. This time, the bottom row is called the $i$-th row. As before, we assign the states $(k^{(1)}_b,\dots,k^{(N)}_b)$ to the internal edges of the $b$-th row, for each $1 \leq b \leq n$. 

We note that $Z_{\rm l}$ and $Z_{\rm r}$ are equivalent under application of the rotation operation of Section \ref{ssec:rotate} to each of the $N+1$ columns that are present (up to the variable shift $qx_i \mapsto x_i$). $Z_{\rm l}$ and $Z_{\rm r}$ also serve as refinements of the left and right-hand sides of \eqref{cyclic-rel}, respectively. One sees that
\begin{align}
\label{left-refined}
\langle \C_1(x_1) \cdots \wh{\C_i(x_i)} \cdots \C_n(x_n) \C_i(qx_i) \rangle_{\mu}
&=
\sum
Z_{\rm l}\Big[
(k^{(1)}_1,\dots,k^{(1)}_n),\dots,(k^{(N)}_1,\dots,k^{(N)}_n)\Big],
\\
\label{right-refined}
\langle \C_i(x_i) \C_1(x_1) \cdots \wh{\C_i(x_i)} \cdots \C_n(x_n) \rangle_{\mu}
&=
\sum
Z_{\rm r}\Big[
(k^{(1)}_1,\dots,k^{(1)}_n),\dots,(k^{(N)}_1,\dots,k^{(N)}_n)\Big],
\end{align}
where the sum in both cases is taken over all $0 \leq k_b^{(a)} \leq n$, $1 \leq a \leq N$ and $1 \leq b \leq n$, such that the resulting lattice configurations \eqref{Z_l} and \eqref{Z_r} have a non-vanishing weight. 

\begin{ex} Let us give an explicit example of the correspondence between \eqref{left-refined} and \eqref{right-refined}, in the case $n=5$, $i=3$ and $\mu =(0,4,4,1,5)$. In this case we have 
\begin{align*}
\bm{\mu}(0) = \bm{e}_1,\ \ 
\bm{\mu}(1) = \bm{e}_4,\ \ 
\bm{\mu}(2) = \bm{\mu}(3) = \bm{0},\ \ 
\bm{\mu}(4) = \bm{e}_2 + \bm{e}_3,\ \ 
\bm{\mu}(5) = \bm{e}_5.
\end{align*}
We extract a sample term from the sums \eqref{left-refined} and 
\eqref{right-refined}:
\begin{align*}
\tikz{1}{
\foreach\y in {0}{
\draw[lgray,line width=1.5pt,double] (1.5,0.5+\y) -- (7.5,0.5+\y);
}
\foreach\y in {5}{
\draw[lgray,line width=1.5pt,double] (1.5,0.5+\y) -- (7.5,0.5+\y);
}
\foreach\y in {1,...,4}{
\draw[lgray,line width=1.5pt] (1.5,0.5+\y) -- (7.5,0.5+\y);
}
\foreach\x in {0,...,6}{
\draw[lgray,line width=1.5pt] (1.5+\x,0.5) -- (1.5+\x,5.5);
}
\node[above] at (2,5.5) {\footnotesize$\bm{e}_1$};
\node[above] at (3,5.5) {\footnotesize$\bm{e}_4$};
\node[above] at (4,5.5) {\footnotesize$\bm{0}$};
\node[above] at (5,5.5) {\footnotesize$\bm{0}$};
\node[above] at (6,5.5) {\footnotesize$\bm{e}_2+\bm{e}_3$};
\node[above] at (7,5.5) {\footnotesize$\bm{e}_5$};
\node[right] at (7.5,1) {$0$};
\node[right] at (7.5,2) {$0$};
\node[right] at (7.5,3.1) {$0$};
\node[right] at (7.5,4.1) {$0$};
\node[right] at (7.5,5) {$0$};
\node[left] at (1,1) {$x_1$};
\node[left] at (1,2) {$x_2$};
\node[left] at (1,3) {$x_4$};
\node[left] at (1,4) {$x_5$};
\node[left] at (1,5) {$qx_3$};
\node[left] at (1.5,1) {$1$};
\node[left] at (1.5,2) {$2$};
\node[left] at (1.5,3) {$4$};
\node[left] at (1.5,4) {$5$};
\node[left] at (1.5,5) {$3$};
\draw[ultra thick,yellow,->] (1.5,5) -- (2.925,5) -- (2.925,5.5);
\draw[ultra thick,yellow,->] (2.925,0.5) -- (2.925,2) -- (4,2) -- (4,3) -- (5,3) -- (5,5.5);
\draw[ultra thick,yellow,->] (5,0.5) -- (5,1) -- (6.075,1) -- (6.075,5.5);
\draw[ultra thick,red,->] (1.5,4) -- (4,4) -- (4,5) -- (7,5) -- (7,5.5);
\draw[ultra thick,orange,->] (1.5,3) -- (3.075,3) -- (3.075,5.5);
\draw[ultra thick,green,->] (1.5,2) -- (2.075,2) -- (2.075,5.5); 
\draw[ultra thick,green,->] (2.075,0.5) -- (2.075,1) -- (5,1) -- (5,2) -- (5.925,2) -- (5.925,5.5);
\draw[ultra thick,blue,->] (1.5,1) -- (1.925,1) -- (1.925,5.5);
}
\quad\quad
\tikz{1}{
\foreach\y in {0}{
\draw[lgray,line width=1.5pt,double] (1.5,0.5+\y) -- (7.5,0.5+\y);
}
\foreach\y in {5}{
\draw[lgray,line width=1.5pt,double] (1.5,0.5+\y) -- (7.5,0.5+\y);
}
\foreach\y in {1,...,4}{
\draw[lgray,line width=1.5pt] (1.5,0.5+\y) -- (7.5,0.5+\y);
}
\foreach\x in {0,...,6}{
\draw[lgray,line width=1.5pt] (1.5+\x,0.5) -- (1.5+\x,5.5);
}
\node[above] at (2,5.5) {\footnotesize$\bm{e}_1$};
\node[above] at (3,5.5) {\footnotesize$\bm{e}_4$};
\node[above] at (4,5.5) {\footnotesize$\bm{0}$};
\node[above] at (5,5.5) {\footnotesize$\bm{0}$};
\node[above] at (6,5.5) {\footnotesize$\bm{e}_2+\bm{e}_3$};
\node[above] at (7,5.5) {\footnotesize$\bm{e}_5$};
\node[right] at (7.5,1) {$0$};
\node[right] at (7.5,2) {$0$};
\node[right] at (7.5,3.1) {$0$};
\node[right] at (7.5,4.1) {$0$};
\node[right] at (7.5,5) {$0$};
\node[left] at (1,1) {$x_3$};
\node[left] at (1,2) {$x_1$};
\node[left] at (1,3) {$x_2$};
\node[left] at (1,4) {$x_4$};
\node[left] at (1,5) {$x_5$};
\node[left] at (1.5,1) {$3$};
\node[left] at (1.5,2) {$1$};
\node[left] at (1.5,3) {$2$};
\node[left] at (1.5,4) {$4$};
\node[left] at (1.5,5) {$5$};
\draw[ultra thick,yellow,->] (1.5,1) -- (2.925,1) -- (2.925,3) -- (4,3) -- (4,4) -- (5,4) -- (5,5.5);
\draw[ultra thick,yellow,->] (5,0.5) -- (5,2) -- (6.075,2) -- (6.075,5.5);
\draw[ultra thick,red,->] (1.5,5) -- (4,5) -- (4,5.5); 
\draw[ultra thick,red,->] (4,0.5)-- (4,1) -- (7,1) -- (7,5.5);
\draw[ultra thick,orange,->] (1.5,4) -- (3.075,4) -- (3.075,5.5);
\draw[ultra thick,green,->] (1.5,3) -- (2.075,3) -- (2.075,5.5); 
\draw[ultra thick,green,->] (2.075,0.5) -- (2.075,2) -- (5,2) -- (5,3) -- (5.925,3) -- (5.925,5.5);
\draw[ultra thick,blue,->] (1.5,2) -- (1.925,2) -- (1.925,5.5);
}
\end{align*}
where the configuration on the left is one term in the sum \eqref{left-refined}, while the configuration on the right is one term in the sum \eqref{right-refined}. The configuration on the right is obtained from the configuration on the left by rotating each of the $N+1$ columns; namely, by bumping each row upwards by one step, and transferring the top row to the bottom of the lattice.
\end{ex}

\begin{prop}
\label{prop:refined-cyclic-rel}
Fix any $nN$ integers $0 \leq k_b^{(a)} \leq n$, $1 \leq a \leq N$ and $1 \leq b \leq n$, such that $Z_{\rm r}$ is not identically zero. Then $Z_{\rm l}$ is also non-vanishing, and we have
\begin{align}
\label{refined-cyclic-rel}
\frac{Z_{\rm l}\Big[
(k^{(1)}_1,\dots,k^{(1)}_n),\dots,(k^{(N)}_1,\dots,k^{(N)}_n)\Big]}
{Z_{\rm r}\Big[
(k^{(1)}_1,\dots,k^{(1)}_n),\dots,(k^{(N)}_1,\dots,k^{(N)}_n)\Big]}
=
q^{\mu_i} t^{\gamma_{i,0}(\mu)}.
\end{align}
\end{prop}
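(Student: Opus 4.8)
The plan is to reduce \eqref{refined-cyclic-rel} to the single-column rotation identity of Proposition \ref{prop:rotate}, applied once to each of the $N+1$ columns, and then to evaluate the resulting product. For a fixed admissible choice of the internal labels $\{k^{(a)}_b\}$, both $Z_{\rm l}$ and $Z_{\rm r}$ factorize as ordered products over the columns $j=0,\dots,N$, the $j$-th factor being a single matrix element of the column operator $\Lambda_{\{v_{1,j},\dots,v_{n,j}\}}(\bm{\mu}(j))$ taken between the edge states prescribed by $\{k^{(a)}_b\}$ and the boundary data; this is the column decomposition \eqref{column-decomp} with its internal summations frozen. The structural observation, already flagged in the text, is that the $j$-th column of $Z_{\rm r}$ is the cyclic rotation, in the sense of Proposition \ref{prop:rotate}, of the $j$-th column of $Z_{\rm l}$; the only mismatch is that the colour-$i$ row carries the spectral parameter $qx_i$ in $Z_{\rm l}$ and $x_i$ in $Z_{\rm r}$.

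I would first separate out this spectral mismatch using the homogeneity recorded in Remark \ref{rmk:x-depend}. Writing $i_n^{(j)}$ and $j_n^{(j)}$ for the top-left and top-right edge states of the $j$-th column, each occupied right edge of the colour-$i$ row contributes one power of that row's spectral parameter, so the difference between $qx_i$ and $x_i$ produces, column by column, an overall factor $q^{\#\{j\,:\,j_n^{(j)}\geq 1\}}$. With this accounted for, Proposition \ref{prop:rotate} yields
\begin{align*}
\frac{Z_{\rm l}}{Z_{\rm r}}
=
q^{\#\{j\,:\,j_n^{(j)}\geq 1\}}
\prod_{j=0}^{N}
\kappa^{(j)}_{\{v_{1,j},\dots,v_{n,j};t\}},
\end{align*}
with each $\kappa^{(j)}$ given by \eqref{rot-const} in terms of the colour data $(\mathcal{P}^{(j)},\mathcal{Q}^{(j)})$ and the states $i_n^{(j)},j_n^{(j)}$ of the $j$-th column.

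The heart of the argument is to substitute the explicit parameters \eqref{v_ij} and collapse this product to $q^{\mu_i}t^{\gamma_{i,0}(\mu)}$. I would organize the computation around the colours occupying the horizontal edges of the colour-$i$ row as $j$ increases, exploiting the shared-edge identities $i_n^{(j)}=j_n^{(j-1)}$ to telescope neighbouring factors. For the power of $q$: through its $v$-factors $v_{i_n^{(j)},j}^{\,\bm{1}(i_n^{(j)}\in\mathcal{Q}^{(j)})}\big/v_{j_n^{(j)},j}^{\,\bm{1}(j_n^{(j)}\geq 1)}$ each $\kappa^{(j)}$ carries the $q$-exponents $\mu_{i_n^{(j)}}-j$ and $\mu_{j_n^{(j)}}-j$ from \eqref{v_ij}; combined with the homogeneity exponent $\#\{j:j_n^{(j)}\geq 1\}$ these telescope along the colour-$i$ path and leave exactly $q^{\mu_i}$, reflecting that this path runs from the left boundary to the top of column $\mu_i$ on the vertically wrapped cylinder. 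For the power of $t$, I would assemble the $t^{g(\cdot)}$ and $t^{-\#\{\cdots\}}$ contributions of \eqref{rot-const} together with the $t^{\gamma_{i_n^{(j)},j}}$ pieces of \eqref{v_ij}, and verify that the net exponent is the inversion count $-\#\{k<i:\mu_k>\mu_i\}+\#\{k>i:\mu_k<\mu_i\}$, which equals $\gamma_{i,0}(\mu)$ by \eqref{gamma}.

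The non-vanishing assertion then comes for free: if $Z_{\rm r}\neq 0$, every column component of $Z_{\rm r}$ is nonzero, which forces each parameter $v_{c,j}$ that enters some $\kappa^{(j)}$ with a nonzero exponent to be nonvanishing; consequently every $\kappa^{(j)}$ is a well-defined nonzero monomial in $q$ and $t$, and $Z_{\rm l}$ is a nonzero scalar multiple of $Z_{\rm r}$. I expect the genuine difficulty to lie entirely in the third step, the bookkeeping of the $t$-exponents in $\prod_j\kappa^{(j)}$: one must track how the statistics $f,g,h$ of Theorem \ref{thm:comp} accumulate as the colour-$i$ path, and the colours that overtake it or are overtaken by it, wind around the cylinder, and check that these inversion counts assemble into $\gamma_{i,0}(\mu)$ with no residual contribution.
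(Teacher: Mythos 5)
Your strategy coincides with the paper's own proof: freeze the internal labels, apply Proposition \ref{prop:rotate} once to each of the $N+1$ columns, account for the $qx_i$ versus $x_i$ mismatch through the homogeneity of Remark \ref{rmk:x-depend}, and telescope along the shared horizontal edges of the $i$-th row. Your intermediate identity $Z_{\rm l}/Z_{\rm r} = q^{\#\{j :\, j_n^{(j)} \geq 1\}}\prod_{j=0}^{N}\kappa^{(j)}$ is precisely the paper's \eqref{N-columns} (in the notation $K_j = k^{(j)}_i$, $K_0 = i$, $K_{N+1} = 0$), your $q$-exponent bookkeeping is carried out correctly (each column $j \geq 1$ with $K_j \geq 1$ nets $q^{-1}$ from the ratio $v_{K_j,j}/v_{K_j,j-1}$, cancelled by the homogeneity factor, leaving $q^{\mu_i}$ from column $0$), and your non-vanishing argument is sound: every $v_{p,j}$ entering some $\kappa^{(j)}$ with nonzero exponent belongs to a colour $p$ that still crosses column $j$, so $\mu_p > j$ and the indicator in \eqref{v_ij} equals $1$.

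The genuine gap is that the $t$-exponent verification --- the computational core of the proof --- is deferred rather than performed (``I would assemble \ldots and verify''). Moreover, your forecast of where the difficulty lies is misdirected: no tracking of the statistics $f,g,h$ as paths wind around the cylinder is required, since those statistics are rotation-invariant and were already cancelled inside Proposition \ref{prop:rotate}; only the residual counts of \eqref{rot-const} and the exponents $\gamma_{\cdot,\cdot}(\mu)$ from \eqref{v_ij}--\eqref{gamma} survive. The paper then closes the computation \emph{locally}, column by column: for each $j \geq 1$ with $K_j \geq 1$ one checks from \eqref{gamma} that
\begin{align*}
\gamma_{K_j,j}(\mu)-\gamma_{K_j,j-1}(\mu) &= -\#\{a>K_j:\mu_a=j-1\},
&& K_j\in\mathcal{Q}_j,
\\
-\gamma_{K_j,j-1}(\mu) &= \#\{a<K_j:\mu_a>j\}-\#\{a>K_j:\mu_a=j-1\},
&& K_j\in\mathcal{P}_j,
\end{align*}
and in either case these exponents cancel exactly against the explicit $t$-factors of \eqref{rot-const}, so every column $j \geq 1$ contributes precisely $q^{-\bm{1}(K_j\geq 1)}$. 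The whole product therefore collapses to the $j=0$ factor alone, which equals $v_{i,0} = q^{\mu_i}t^{\gamma_{i,0}(\mu)}$ when $\mu_i > 0$ and $t^{-\#\{a<i:\mu_a>0\}} = q^{\mu_i}t^{\gamma_{i,0}(\mu)}$ when $\mu_i = 0$. In particular the answer is \emph{not} assembled from inversion counts spread over all columns: it comes entirely from column $0$, and the ratio is independent of every internal state $k^{(j)}_b$. Supplying this two-case cancellation is exactly what your proposal still needs in order to be a proof.
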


\begin{proof}
The proof is based on $N+1$ applications of \eqref{rotation-const} and \eqref{rot-const}, noting that any of the $N+1$ columns present in $Z_{\rm l}$ differs from the corresponding column in $Z_{\rm r}$ up to the rotation operation of Proposition \ref{prop:rotate}. The ratio $Z_{\rm l} / Z_{\rm r}$ will thus be of the form $q^a t^b$ for some $a,b \in \mathbb{Z}$; let us proceed to calculate these exponents.

For the purpose of calculating the right-hand side of \eqref{refined-cyclic-rel}, Proposition \ref{prop:rotate} tells us that it is sufficient to focus on the internal edge states of the $i$-th row; in particular we will be interested in those $k^{(j)}_i$ which take non-zero values, since these contribute non-trivially to the right-hand side of \eqref{rotation-const}. As we will not need to specify $k^{(j)}_b$ for $b \not= i$, we hereafter lighten the notation by writing $k^{(j)}_i = K_j$ for all $1 \leq j \leq N$. We also set $K_0 = i$ and $K_{N+1} = 0$.

By iterating \eqref{rot-const} over the $N+1$ columns and assigning a factor of $q$ to each integer $1 \leq j \leq N$ for which $K_j \geq 1$,\footnote{We obtain a power of $q$ for every step by a path in the $i$-th row of $Z_{\rm l}$, due to the $q$-shifted argument of $\mathcal{C}_i(qx_i)$ and Remark \ref{rmk:x-depend}.} we find that
\begin{align}
\label{N-columns}
\frac{Z_{\rm l}}{Z_{\rm r}}
=
\prod_{j=0}^{N}
\frac{
(v_{K_j,j})^{\bm{1}(K_j \in \mathcal{Q}_j)}
}
{
(v_{K_{j+1},j})^{\bm{1}(K_{j+1} \geq 1)}
}
\cdot
\frac{
t^{\#\{a \in \mathcal{P}_j: a>K_{j+1}\} \bm{1}(K_{j+1} \geq 1)}
}
{
t^{\#\{a \in \mathcal{Q}_j: a<K_{j}\} \bm{1}(K_j \in \mathcal{P}_j)}
}
\cdot
\prod_{j=1}^{N}
q^{\bm{1}(K_j \geq 1)},
\end{align}
where $\mathcal{P}_j$, $\mathcal{Q}_j$ are the colour data associated to the $j$-th column of $Z_{\rm l}$. Note that $\mathcal{P}_j$ is the set of colours which exit column $j$ via its top edge, while $\mathcal{Q}_j$ is the set of colours exiting via columns $j+1,\dots,N$. Hence we can write
\begin{align}
\label{PQ_j}
\mathcal{P}_j = \{a: \mu_a = j\},
\qquad
\mathcal{Q}_j = \{a: \mu_a > j\}.
\end{align}
The right-hand side of \eqref{N-columns} becomes
\begin{align}
\label{N-columns-2}
\frac{Z_{\rm l}}{Z_{\rm r}}
&=
\frac{
(v_{i,0})^{\bm{1}(i \in \mathcal{Q}_0)}
}
{t^{\#\{ a<i : \mu_a > 0 \} \bm{1}(i \in \mathcal{P}_0)}}
\prod_{j=1}^{N}
\frac{
(v_{K_j,j})^{\bm{1}(K_j \in \mathcal{Q}_j)}
}
{
(v_{K_j,j-1})^{\bm{1}(K_j \geq 1)}
}
\cdot
\frac{
t^{\#\{a>K_{j} : \mu_a=j-1\} \bm{1}(K_{j} \geq 1)}
}
{
t^{\#\{a<K_{j} : \mu_a > j\} \bm{1}(K_j \in \mathcal{P}_j)}
}
\cdot
q^{\bm{1}(K_j \geq 1)},
\end{align}
where we have used the fact that $K_0 = i$ and $K_{N+1} = 0$ to redistribute the factors in the product.  

Now let us compute the $j$-th term in the product on the right-hand side of \eqref{N-columns-2}. It is clearly sufficient to restrict our attention to $j$ such that $K_j \geq 1$, since for each $j$ such that $K_j = 0$ the factors inside the product are all equal to $1$; we tacitly assume $K_j \geq 1$ in what follows. Invoking (for the first time) the explicit form of the parameters \eqref{v_ij}, we find that
\begin{align}
\label{j-ratio}
\frac{
(v_{K_j,j})^{\bm{1}(K_j \in \mathcal{Q}_j)}
}
{
(v_{K_j,j-1})^{\bm{1}(K_j \geq 1)}
}
=
\left\{
\begin{array}{ll}
q^{-1} t^{\gamma_{K_j,j}(\mu)-\gamma_{K_j,j-1}(\mu)}, \quad
& \mu_{K_j} > j,
\\
\\
q^{-1} t^{-\gamma_{K_j,j-1}(\mu)}, \quad
& \mu_{K_j} = j,
\end{array}
\right.
\end{align}
where the case $\mu_{K_j} < j$ never appears (it would mean that the colour $K_j$ has traversed beyond column $\mu_{K_j}$, which is forbidden). From \eqref{gamma}, we can write down the $t$ exponents appearing in \eqref{j-ratio}. In the case $\mu_{K_j} > j$ (equivalent to $K_j \in \mathcal{Q}_j$), one has
\begin{align*}
\gamma_{K_j,j}(\mu)
-
\gamma_{K_j,j-1}(\mu)
&=
\#\{a > K_j : j \leq \mu_a < \mu_{K_j} \}
-
\#\{a > K_j : j-1 \leq \mu_a < \mu_{K_j} \}
\\
&=
-\#\{a > K_j : j-1 = \mu_a \}.
\end{align*}
Similarly, for $\mu_{K_j} = j$ (equivalent to $K_j \in \mathcal{P}_j$), we find that
\begin{align*}
-\gamma_{K_j,j-1}(\mu)
&=
\#\{a < K_j : \mu_a > \mu_{K_j} \}
-
\#\{a > K_j : j-1 \leq \mu_a < \mu_{K_j} \}
\\
&=
\#\{a < K_j : \mu_a > j \}
-
\#\{a > K_j : j-1 = \mu_a \}.
\end{align*}
Using these facts in \eqref{j-ratio}, we read
\begin{align}
\label{j-ratio-2}
\frac{
(v_{K_j,j})^{\bm{1}(K_j \in \mathcal{Q}_j)}
}
{
(v_{K_j,j-1})^{\bm{1}(K_j \geq 1)}
}
=
q^{-1}
\cdot
\left\{
\begin{array}{ll}
t^{-\#\{a > K_j : \mu_a = j-1 \}}, \quad\quad
& K_j \in \mathcal{Q}_j,
\\
\\
t^{\#\{a < K_j : \mu_a > j \}
- \#\{a > K_j : \mu_a = j-1 \}}, \quad\quad
& K_j \in \mathcal{P}_j.
\end{array}
\right.
\end{align}
We can now see that in either case, the $t$ exponents cancel perfectly with the remaining factors in the $j$-th term of the product \eqref{N-columns-2}, and hence
\begin{align*}
\frac{
(v_{K_j,j})^{\bm{1}(K_j \in \mathcal{Q}_j)}
}
{
(v_{K_j,j-1})^{\bm{1}(K_j \geq 1)}
}
\cdot
\frac{
t^{\#\{a>K_{j} : \mu_a=j-1\} \bm{1}(K_{j} \geq 1)}
}
{
t^{\#\{a<K_{j} : \mu_a > j\} \bm{1}(K_j \in \mathcal{P}_j)}
}
=
q^{-\bm{1}(K_j \geq 1)}.
\end{align*}
Returning to the expression \eqref{N-columns-2}, we have shown that
\begin{align}
\frac{Z_{\rm l}}{Z_{\rm r}}
&=
\frac{
(v_{i,0})^{\bm{1}(i \in \mathcal{Q}_0)}
}
{t^{\#\{ a<i : \mu_a > 0 \} \bm{1}(i \in \mathcal{P}_0)}},
\end{align}
which expresses the remarkable fact that the ratio $Z_{\rm l}/Z_{\rm r}$ does not depend on any of the values of the states $k^{(j)}_b$; not even those for which $b=i$. Finally, we check that
\begin{align}
\frac{
(v_{i,0})^{\bm{1}(i \in \mathcal{Q}_0)}
}
{t^{\#\{ a<i : \mu_a > 0 \} \bm{1}(i \in \mathcal{P}_0)}}
=
\left\{
\begin{array}{ll}
v_{i,0} = q^{\mu_i} t^{\gamma_{i,0}(\mu)}, \quad\quad
& i \in \mathcal{Q}_0,
\\
\\
t^{-\#\{ a<i : \mu_a > 0 \}}
=
q^{\mu_i} t^{\gamma_{i,0}(\mu)}, \quad\quad
& i \in \mathcal{P}_0,
\end{array}
\right.
\end{align}
where in the latter case, we have noted that $i \in \mathcal{P}_0$ implies $\mu_i = 0$.

\end{proof}

\begin{cor}
Equation \eqref{cyclic-rel} holds.
\end{cor}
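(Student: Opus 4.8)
The plan is to assemble the corollary from the three ingredients already in place: the refinements \eqref{left-refined} and \eqref{right-refined}, which express each side of \eqref{cyclic-rel} as a sum of the lattice partition functions $Z_{\rm l}$ and $Z_{\rm r}$ over all admissible choices of internal edge states $\{k_b^{(a)}\}$, together with the termwise identity \eqref{refined-cyclic-rel} of Proposition \ref{prop:refined-cyclic-rel}, which asserts that $Z_{\rm l} = q^{\mu_i} t^{\gamma_{i,0}(\mu)}\, Z_{\rm r}$ for each fixed choice of internal states. Granting this, one simply sums \eqref{refined-cyclic-rel} over $\{k_b^{(a)}\}$, pulls the constant $q^{\mu_i} t^{\gamma_{i,0}(\mu)}$ outside the sum, and identifies the two resulting sums with the left- and right-hand sides of \eqref{cyclic-rel} via \eqref{left-refined} and \eqref{right-refined}.

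The one point that requires care is that the two sums in \eqref{left-refined} and \eqref{right-refined} are, by definition, taken over the respective sets $\{\{k_b^{(a)}\} : Z_{\rm l} \neq 0\}$ and $\{\{k_b^{(a)}\} : Z_{\rm r} \neq 0\}$, and these index sets must be shown to coincide before the sums can be equated termwise. Proposition \ref{prop:refined-cyclic-rel} already supplies one inclusion, namely that $Z_{\rm r} \neq 0$ forces $Z_{\rm l} \neq 0$. For the reverse inclusion I would invoke the invertibility of the column rotation of Section \ref{ssec:rotate}: each of the $N+1$ columns of $Z_{\rm l}$ is related to the corresponding column of $Z_{\rm r}$ through the operation of Proposition \ref{prop:rotate}, whose ratio $\kappa_{\{v_1,\dots,v_n;t\}}(\mathcal{A},\mathcal{B})$ is a nonvanishing monomial in the $v_p$ and $t$. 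Reading \eqref{components}, a column component can vanish only through the indicator factors $\bm{1}(a_p \neq b_\ell)$, and the collision condition $a_p = b_\ell$ is preserved under the cyclic shift $a_p \mapsto a_p + 1$, $b_\ell \mapsto b_\ell + 1\ (\mathrm{mod}\ n)$ that implements the rotation; hence a column of $Z_{\rm l}$ vanishes if and only if the matching column of $Z_{\rm r}$ does. Since $Z_{\rm l}$ and $Z_{\rm r}$ factor over their columns, this shows $Z_{\rm l}$ and $Z_{\rm r}$ vanish simultaneously for every configuration, so the two index sets agree.

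With the supports matched, the identity $Z_{\rm l} = q^{\mu_i} t^{\gamma_{i,0}(\mu)}\, Z_{\rm r}$ holds for every configuration (both sides being zero off the common support), and summing yields
\begin{align*}
\langle \C_1(x_1) \cdots \wh{\C_i(x_i)} \cdots \C_n(x_n) \C_i(qx_i) \rangle_{\mu}
=
q^{\mu_i} t^{\gamma_{i,0}(\mu)}
\langle \C_i(x_i) \C_1(x_1) \cdots \wh{\C_i(x_i)} \cdots \C_n(x_n) \rangle_{\mu},
\end{align*}
which is precisely \eqref{cyclic-rel}. The genuine work — verifying that the per-configuration ratio equals $q^{\mu_i} t^{\gamma_{i,0}(\mu)}$ and, remarkably, is independent of all the internal states $k_b^{(a)}$ — has already been carried out in Proposition \ref{prop:refined-cyclic-rel}; at the level of the corollary the only residual obstacle is this bijectivity of supports, which the invertibility of the rotation resolves cleanly.
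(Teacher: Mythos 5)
Your proof is correct and takes essentially the same route as the paper: both arguments sum the termwise identity of Proposition \ref{prop:refined-cyclic-rel} over the internal edge states and pull out the constant $q^{\mu_i}t^{\gamma_{i,0}(\mu)}$. Your explicit verification that the supports of $Z_{\rm l}$ and $Z_{\rm r}$ coincide (via invertibility of the column rotation and the fact that vanishing of a column component \eqref{components} is governed by the rotation-invariant collision indicators $\bm{1}(a_p \neq b_\ell)$) is a legitimate refinement of a point the paper's proof leaves implicit in the phrase ``one-to-one pairing of each non-vanishing term,'' since Proposition \ref{prop:refined-cyclic-rel} as stated only gives the inclusion $Z_{\rm r}\neq 0 \Rightarrow Z_{\rm l}\neq 0$.
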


\begin{proof}
Since we have a one-to-one pairing of each non-vanishing term in the sum \eqref{left-refined} with a corresponding term in \eqref{right-refined}, and have shown that the two terms have the correct proportionality constant \eqref{refined-cyclic-rel} irrespective of the values of the internal states $k^{(j)}_b$, \eqref{cyclic-rel} follows immediately. 

Note that, by repeating the above arguments with ``permuted'' boundary conditions at the left edges of the lattice, we prove Proposition \ref{prop2-intro} from the introduction.
\end{proof}

\subsection{Fixing the normalization}
\label{ssec:norm}

So far we have determined that $\langle \C_1(x_1) \cdots \C_n(x_n) \rangle_{\mu}$ satisfies the eigenvalue equation
\begin{align}
\label{eig-Yi-CCC}
Y_i \cdot \langle \C_1(x_1) \cdots \C_n(x_n) \rangle_{\mu}
=
y_i(\mu;q,t) \langle \C_1(x_1) \cdots \C_n(x_n) \rangle_{\mu},
\qquad
\forall\ 1 \leq i \leq n.
\end{align}
It remains only for us to resolve the normalization of 
$\langle \C_1(x_1) \cdots \C_n(x_n) \rangle_{\mu}$ by introducing a factor $\Omega_{\mu}(q,t)$ such that
\begin{align}
\label{Omega-compute}
{\rm Coeff}\Big[ 
\Omega_{\mu}(q,t)
\langle \C_1(x_1) \cdots \C_n(x_n) \rangle_{\mu};
x^{\mu}
\Big]
=
1,
\qquad
x^{\mu}
=
\prod_{i=1}^{n} x_i^{\mu_i}.
\end{align}
In this section we will show that $\Omega_{\mu}(q,t)$ takes the form \eqref{Omega}.

It is a straightforward task to compute the coefficient of the monomial $x^{\mu}$ in $\langle \C_1(x_1) \cdots \C_n(x_n) \rangle_{\mu}$. In the process of computing this coefficient we also verify that $ \langle \C_1(x_1) \cdots \C_n(x_n) \rangle_{\mu}$ does not vanish identically (identical zero would serve as the trivial solution of \eqref{eig-Yi-CCC}). We make the following claim: 
\begin{prop}
The quantity ${\rm Coeff}[\langle \C_1(x_1) \cdots \C_n(x_n) \rangle_{\mu}; x^{\mu}]$ is non-vanishing and is given by the weight of the unique lattice configuration of the form
\begin{align}
\label{frozen-xmu}
\sum_{\{m\}}
\prod_{i=1}^{n}
\prod_{j=0}^{N}
v_{i,j}^{m_{i,j}}
\times
\tikz{1.1}{
\foreach\y in {0,...,5}{
\draw[lgray,line width=1.5pt] (1.5,0.5+\y) -- (7.5,0.5+\y);
}
\foreach\x in {0,...,6}{
\draw[lgray,line width=1.5pt] (1.5+\x,0.5) -- (1.5+\x,5.5);
}
\node[below] at (2,0.5) {\footnotesize$\bm{M}(0)$};
\node[above,text centered] at (4,0) {$\cdots$};
\node[above,text centered] at (5,0) {$\cdots$};
\node[below] at (7,0.5) {\footnotesize$\bm{M}(N)$};
\node[above] at (2,5.5) {\footnotesize$\bm{M}(0)+\bm{\mu}(0)$};
\node[above,text centered] at (4,5.5) {$\cdots$};
\node[above,text centered] at (5,5.5) {$\cdots$};
\node[above] at (7,5.5) {\footnotesize$\bm{M}(N)+\bm{\mu}(N)$};
\node[right] at (7.5,1) {$0$};
\node[right] at (7.5,2) {$0$};
\node[right] at (7.5,3.1) {$\vdots$};
\node[right] at (7.5,4.1) {$\vdots$};
\node[right] at (7.5,5) {$0$};
\node[left] at (1,1) {$x_1$};
\node[left] at (1,2) {$x_2$};
\node[left] at (1,3.1) {$\vdots$};
\node[left] at (1,4.1) {$\vdots$};
\node[left] at (1,5) {$x_n$};
\node[left] at (1.5,1) {$1$};
\node[left] at (1.5,2) {$2$};
\node[left] at (1.5,3.1) {$\vdots$};
\node[left] at (1.5,4.1) {$\vdots$};
\node[left] at (1.5,5) {$n$};
\draw[ultra thick,red,->] (1.5,5) -- (2.5,5) -- (4,5) -- (4,5.5);
\draw[ultra thick,orange,->] (1.5,4) -- (2,4) -- (2,5.5);
\draw[ultra thick,yellow,->] (1.5,3) -- (2.5,3) -- (5.1,3) -- (5.1,5.5);
\draw[ultra thick,green,->] (1.5,2) -- (7,2) -- (7,5.5); 
\draw[ultra thick,blue,->] (1.5,1) -- (2.5,1) -- (4.9,1) -- (4.9,5.5);
}
\end{align}
in which path $i$ travels straight for $\mu_i$ consecutive horizontal steps, before turning and exiting at the top of the $\mu_i$-th column, for all 
$1 \leq i \leq n$.
\end{prop}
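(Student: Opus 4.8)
The plan is to read off the coefficient of $x^\mu$ directly from the lattice, using that the $x$-grading of a configuration is entirely controlled by its horizontal steps. By Remark \ref{rmk:x-depend}, a face contributes a factor of its row parameter exactly when a path crosses its right vertical edge, so the power of $x_k$ in a given configuration equals the number of rightward steps taken inside row $k$. The boundary data built into the linear form \eqref{form} allots each colour $i$ one entry on the left of column $0$ together with exits prescribed along the top, so the net rightward displacement of colour $i$ is fixed to be $\mu_i$ independently of its routing (and of the background states $\{m\}$). Summing over colours, every configuration has total $x$-degree $|\mu|$, and a configuration contributes to the monomial $x^\mu=\prod_k x_k^{\mu_k}$ if and only if row $k$ carries exactly $\mu_k$ rightward steps, for every $1\le k\le n$. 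First I would check that the frozen configuration of \eqref{frozen-xmu}, in which colour $i$ runs straight along row $i$ through columns $0,\dots,\mu_i-1$ and then turns up at column $\mu_i$, realizes precisely this: colour $i$ makes all $\mu_i$ of its steps inside row $i$, so row $k$ sees exactly $\mu_k$ steps.

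The heart of the proof is uniqueness: I claim the frozen pattern is the only arrangement of the principal paths with a nonzero weight that puts exactly $\mu_k$ steps in each row. The mechanism is the vanishing of the sixth weight in \eqref{bos-weights}, $L_x(\I,j;\I^{+-}_{ji},i)=0$ for $i<j$, which forbids a smaller colour from leaving a face to the right while a larger colour enters it from the left. Since the principal colours enter sorted (colour $k$ in row $k$) and paths may only move up or right, I would argue by downward induction on the row index that this sorting must be preserved: in the top row $n$ the requirement that exactly $\mu_n$ steps occur, combined with the fact that every colour must have reached its exit column $\mu_i$ before leaving through the top, forces every colour $i<n$ to enter row $n$ already at column $\mu_i$ and hence to make no step there; peeling off the top row and repeating shows colour $i$ completes all of its horizontal travel inside row $i$. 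Any attempt to have a colour change rows (circle up into a foreign row to exit there, or a background path advance horizontally) produces a face at which a smaller colour crosses rightward beneath a larger colour, i.e. a forbidden zero-weight face — exactly as one sees already on the smallest example $n=2$, $\mu=(1,1)$. Non-vanishing is then immediate: the frozen configuration uses only admissible weights, and evaluating it through the column-operator components \eqref{components} (where each column $j$ contributes its same-row factors, $\mathcal{P}=\{i:\mu_i=j\}$ and $\mathcal{Q}=\{i:\mu_i>j\}$ with $a_p=b_p=p$) exhibits its weight as a manifestly nonzero element of $\mathbb{Q}(v_{i,j},t)$, so ${\rm Coeff}[\langle \C_1(x_1)\cdots\C_n(x_n)\rangle_\mu; x^\mu]$ is nonzero.

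The main obstacle is controlling the background/circling paths encoded by the sum over $\{m\}$ in \eqref{form} together with the cylindrical wrap, since these do not affect the $x$-degree and so are a priori free to rearrange the routing while preserving the monomial $x^\mu$. The point to nail down is that any such rearrangement — a background path that advances to the right, or a principal path that wraps around the cylinder to exit in a row other than its home row — must cross some other path in the forbidden orientation and therefore contributes zero; the indicator factors $\bm{1}(a_p\neq b_\ell)$ appearing in Theorem \ref{thm:comp} are precisely the bookkeeping of this obstruction at the level of a single column, and iterating them across the $N+1$ columns rules out every competitor to the frozen pattern. Once uniqueness is established, the stated expression \eqref{frozen-xmu} for the coefficient follows, with the residual sum over $\{m\}$ (the circling multiplicities) left intact as displayed.
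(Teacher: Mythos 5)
Your opening bookkeeping is correct and coincides with the paper's: a face contributes its row variable precisely when a path crosses its right edge, each colour $i$ makes exactly $\mu_i$ rightward steps in total, so extracting $x^\mu$ is the same as demanding exactly $\mu_k$ rightward steps in row $k$ for every $k$; and the frozen configuration realizes these counts. The gap lies in the uniqueness argument, and it is twofold. First, the base case of your top-down row induction is a non sequitur: the requirement that row $n$ carry exactly $\mu_n$ steps, combined with the fact that each colour's final exit occurs at column $\mu_i$, does not force a colour $i<n$ to enter row $n$ only at column $\mu_i$. A smaller colour can enter row $n$ from below through a face whose left edge is $0$ and then step rightward there --- this is the third weight of \eqref{bos-weights}, which is nonzero --- so nothing local to row $n$ forbids it; ruling it out requires a global accounting of where the steps of colour $n$, displaced by the intruder, would then have to go. Second, your proposed mechanism (``any rearrangement must cross some other path in the forbidden orientation and therefore contributes zero'') is not true in general. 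Take $n=2$, $\mu=(2,1)$ and route colour $1$ as: one step in row $1$ at column $0$, turn up, one step in row $2$ at column $1$, exit at column $2$. Every face along this route is of an allowed type; no sixth-type face appears anywhere. What excludes this competitor from the coefficient of $x^\mu$ is that colour $2$ would then have to take its single step in row $1$, and it cannot: it reaches row $1$ only by wrapping in column $0$, the unique rightward edge of row $1$ there is already occupied by colour $1$, and any step in row $2$ changes the monomial. The obstruction is a routing/counting one, invisible to the single-column indicators $\bm{1}(a_p \neq b_\ell)$ that you invoke.

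The missing idea is the constraint on which the paper's proof is built: a path of colour $c$ can never occupy a horizontal edge strictly to the right of column $\mu_c$, since it could never return to its exit. Hence, if $i$ is the largest index with $\mu_i = N = \max(\mu)$, the $N$ steps required of row $i$ force \emph{all} $N$ internal edges of that row to be occupied; on a fully occupied row the vanishing of the sixth weight makes the edge colours weakly increasing from the boundary value $i$; and the colour on the last edge must have part $\geq N$, hence equals $i$ by maximality of the index. This freezes colour $i$, which is then deleted, and the argument repeats with the colour of next-largest part (largest index as tie-break): for that colour $j$, the same overshoot constraint shows that no surviving colour can occupy edges of row $j$ beyond position $\mu_j$, so the required $\mu_j$ steps must be exactly the first $\mu_j$ edges, and monotonicity again pins them all to colour $j$. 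The induction is thus organized by decreasing part size over colours, not by row index from the top down; your scheme neither states the overshoot constraint nor processes the colours in an order where it can be exploited, so the uniqueness claim does not go through. (Your final non-vanishing step, evaluating the frozen configuration with the column components \eqref{components}, is fine and matches the computation the paper performs after the proposition.)
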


\begin{proof}
In what follows let $N = \max_{1 \leq i \leq n}(\mu_i)$. Let $i$ be the largest integer such that $\mu_i = N$ (there may be other integers $j$ such that $\mu_j = N$, and we assume that $i>j$ for all such $j$). Then there is no other path which travels further, horizontally, than the path of colour $i$ does in order to reach its final destination in column $N$. 

Given that we wish to calculate ${\rm Coeff}[\langle \C_1(x_1) \cdots \C_n(x_n) \rangle_{\mu}; x^{\mu}]$, we restrict our attention to lattice configurations which give rise to a factor of $x_i^{\mu_i} = x_i^N$. We claim that the only possible way in which this factor is obtained is when the path of colour $i$ travels horizontally for $N$ consecutive steps before turning into the $N$-th column. In other words, isolating the $i$-th row of the lattice,
\begin{align*}
\tikz{1.2}{
\draw[lgray,line width=1.5pt] (0.5,-0.5) -- (6.5,-0.5) -- (6.5,0.5) -- (0.5,0.5) -- (0.5,-0.5);
\foreach\x in {1,...,5}{
\draw[lgray,line width=1.5pt] (0.5+\x,-0.5) -- (0.5+\x,0.5);
}
\node[left] at (0,0) {$x_i$};
\node[left] at (1.5,0) {\fs $k_1$}; \node at (1.5,0) {$\bullet$};
\node[left] at (2.5,0) {\fs $k_2$}; \node at (2.5,0) {$\bullet$};
\node[left] at (5.5,0) {\fs $k_N$}; \node at (5.5,0) {$\bullet$};
\node[left] at (0.5,0) {\fs $i$};\node[right] at (6.5,0) {\fs $0$};
\node[below] at (6,-0.5) {\fs $\I(N)$};\node[above] at (6,0.5) {\fs $\J(N)$};
\node[below] at (5,-0.5) {\fs $\cdots$};\node[above] at (5,0.5) {\fs $\cdots$};
\node[below] at (4,-0.5) {\fs $\cdots$};\node[above] at (4,0.5) {\fs $\cdots$};
\node[below] at (3,-0.5) {\fs $\cdots$};\node[above] at (3,0.5) {\fs $\cdots$};
\node[below] at (2,-0.5) {\fs $\I(1)$};\node[above] at (2,0.5) {\fs $\J(1)$};
\node[below] at (1,-0.5) {\fs $\I(0)$};\node[above] at (1,0.5) {\fs $\J(0)$};
}
\end{align*} 
and labelling its internal edges as $k_1,\dots,k_N$ as shown above, we claim that the only possible contribution to $x_i^N$ comes in the case $k_1 = \cdots = k_N = i$. To prove this, we will assume that some other lattice configuration exists that gives rise to the factor $x_i^N$, in which not all $k_1,\dots,k_N$ are equal to $i$, and show that we obtain a contradiction.

To begin, we note that all $k_1,\dots,k_N$ must be nonzero, or else we cannot recover the required degree in $x_i$. Assume that for some integer $p_1$ one has a lattice configuration in which $k_{p_1} = i$ and 
$k_{p_1+1} = j_1$, where $i < j_1$ (note that one cannot have $i > j_1 \geq 1$, since the corresponding Boltzmann weight in the table \eqref{bos-weights} vanishes). Now the path of colour $j_1$ must turn out of the $i$-th row somewhere before the $N$-th column, since by assumption $i$ is the largest colour such that $\mu_i = N$. Let the column where this turning happens be labelled $p_2$. Then we must have $k_{p_2} = j_1$ and $k_{p_2+1} = j_2$, where $j_1 < j_2$. We again find that the path of colour $j_2$ must turn out of the $i$-th row somewhere before the $N$-th column. One may now iterate this reasoning, ultimately arriving at the contradiction that $k_N$ will be forced to assume a value greater than $i$, which is impossible.

We have thus proved the required statement about the path of colour $i$, completely constraining its motion to the $i$-th row, and it plays no role in configurations of the remaining rows. Then let us define 
$\hat{\mu} = (\hat\mu_1,\dots,\hat\mu_{n-1}) = (\mu_1,\dots,\mu_{i-1},\mu_{i+1},\dots,\mu_n)$, \ie\ we omit the $i$-th part from $\mu$. One may now apply a similar reasoning to the largest integer $j$ such that 
$\mu_j = \max_{1 \leq i \leq n-1}(\hat\mu_i)$, arriving at the same conclusion for the path of colour $j$; namely, it is forced to take 
$\max_{1 \leq i \leq n-1}(\hat\mu_i)$ consecutive horizontal steps before turning into the $\mu_j$-th column, where it terminates. By iterating this argument over each of the colours, we arrive at the statement of the proposition.
\end{proof}

Now let us compute the weight of the lattice configuration shown in \eqref{frozen-xmu}. To do this, we use the formula \eqref{components} to calculate the weight of each of the columns in \eqref{frozen-xmu}, and multiply them together. Each of the columns that appear in the frozen configuration \eqref{frozen-xmu} take a much simpler form than the generic column components \eqref{components}; we denote the weight of the $j$-th column by $F_j$, and remark that it is a function only of the colours $\{a: \mu_a = j\}$ which exit via its top and those $\{b: \mu_b > j\}$ which horizontally traverse it. We are able to write
\begin{align}
\label{something}
{\rm Coeff}\Big[ 
\langle \C_1(x_1) \cdots \C_n(x_n) \rangle_{\mu};
x^{\mu}
\Big]
=
\prod_{j=0}^{N}
F_j(\mathcal{P}_j ; \mathcal{Q}_j),
\end{align}
with $\mathcal{P}_j$ and $\mathcal{Q}_j$ as given by \eqref{PQ_j}, and where we have defined the function
\begin{align}
\label{coeff-prod}
F_j(\mathcal{P} ; \mathcal{Q})
=
\prod_{p \in \mathcal{P} \cup \mathcal{Q}}
\frac{1}{1-v_{p,j} t^{f(p)}}
=
\prod_{p \in \mathcal{P} \cup \mathcal{Q}}
\frac{1}{1-v_{p,j} t^{\#\{\ell \in \mathcal{Q} : \ell < p\}}}.
\end{align}
Rewriting the product \eqref{coeff-prod} more explicitly, we have
\begin{align}
\label{coeff-prod2}
{\rm Coeff}\Big[ 
\langle \C_1(x_1) \cdots \C_n(x_n) \rangle_{\mu};
x^{\mu}
\Big]
=
\prod_{j=0}^{N}
\prod_{i: \mu_i \geq j}
\frac{1}{1-v_{i,j} t^{\#\{\ell < i : j < \mu_{\ell}\}}}
=
\prod_{i=1}^{n}
\prod_{j=0}^{\mu_i}
\frac{1}{1-v_{i,j} t^{\#\{\ell < i : j < \mu_{\ell}\}}}.
\end{align}
To conclude our calculation, we recall the explicit form \eqref{v_ij} of the parameters $v_{i,j}$. First, the indicator function $\bm{1}(\mu_i > j)$ present in $v_{i,j}$ allows us to replace the product on the right-hand side of \eqref{coeff-prod2} with $\prod_{i=1}^{n} \prod_{j=0}^{\mu_i-1}$. Second, the parameters $v_{i,j}$ contribute a factor of $t^{\gamma_{i,j}(\mu)}$ with $\gamma_{i,j}(\mu)$ given by \eqref{gamma}, which can be combined with the $t^{\#\{\ell < i : j < \mu_{\ell}\}}$ term appearing in \eqref{coeff-prod2}. Collecting the exponents proceeds as follows:
\begin{align*}
\#\{\ell < i : j < \mu_{\ell}\}
+
\gamma_{i,j}(\mu)
&=
\#\{\ell < i : j < \mu_\ell\}
-
\#\{k < i : \mu_k > \mu_i\}
+
\#\{k>i : j \leq \mu_k < \mu_i\}
\\
&=
\#\{k < i : j < \mu_k \leq \mu_i\}
+
\#\{k > i : j \leq \mu_k < \mu_i\}
\\
&=
\#\{k < i : \mu_k = \mu_i\}
+
\#\{k \not=i : j < \mu_k < \mu_i\}
+
\#\{k > i : j = \mu_k \}
\\
&=
\alpha_{i,j}(\mu),
\end{align*}
with $\alpha_{i,j}(\mu)$ given by \eqref{alpha}. Putting everything together, we have shown that
\begin{align}
\label{coeff-prod3}
{\rm Coeff}\Big[ 
\langle \C_1(x_1) \cdots \C_n(x_n) \rangle_{\mu};
x^{\mu}
\Big]
=
\prod_{i=1}^{n}
\prod_{j=0}^{\mu_i-1}
\frac{1}{1-q^{\mu_i-j} t^{\alpha_{i,j}(\mu)}}
=
\frac{1}{\Omega_{\mu}(q,t)},
\end{align}
with $\Omega_{\mu}(q,t)$ given by \eqref{Omega}.

\subsection{Completing the proof of Theorem \ref{thm:f-formula}}

We have shown that our Ansatz \eqref{f-ansatz} satisfies both the eigenvalue equation \eqref{eig-Yi-CCC} and the normalization constraint \eqref{Omega-compute}, with $\Omega_{\mu}(q,t)$ given by \eqref{Omega}. This lands us in the territory of Remark \ref{rmk:alt-def}, where we defined the nonsymmetric Macdonald polynomials $f_{\mu}$. The proof of Theorem \ref{thm:f-formula} is complete.

\section{Connection with the formula of Haglund--Haiman--Loehr}

One of the nice consequences of the formula \eqref{f-ansatz} is that it completely elaborates the monomial structure of $f_{\mu}(x_1,\dots,x_n;q,t)$; namely, it allows one to calculate the coefficient of the monomial $x^{\nu}$ within this polynomial, for any composition $\nu$ such that $|\nu| = |\mu|$. In particular, we will show that with a small amount of work, \eqref{f-ansatz} leads to a combinatorial formula for nonsymmetric Macdonald polynomials of the same type as that obtained in \cite{HaglundHL2}.

In Sections \ref{ssec:diag}--\ref{ssec:inv} we gather some necessary combinatorial definitions. In Section \ref{ssec:formula} we state a combinatorial formula for $f_{\mu}(x_1,\dots,x_n;q,t)$, which can be seen to match with Theorem 3.5.1 of \cite{HaglundHL2} up to sending $(q,t) \mapsto (q^{-1},t^{-1})$ and reorganization of the $t$ exponents within the summand. Sections \ref{ssec:fill-config}--\ref{ssec:weight-match} will be devoted to building a bridge between the formula \eqref{hhl-formula} and the partition function formalism of the present paper.

\subsection{Diagrams, leg and arm lengths, attacking squares}
\label{ssec:diag}

Fix a composition $\mu = (\mu_1,\dots,\mu_n)$ and associate to it a diagram ${\rm dg}(\mu)$, which is a collection of $n$ columns of squares, where the $i$-th column (counted from the left) consists of $\mu_i$ squares. More precisely, the diagram of $\mu$ is the following set of coordinates:
\begin{align*}
{\rm dg}(\mu)
=
\{(i,j) : 1 \leq i \leq n,\ 1 \leq j \leq \mu_i \},
\end{align*}
or in other words, the square $s$ with coordinates $(i,j)$ is a member of ${\rm dg}(\mu)$ if and only if $1 \leq j \leq \mu_i$. ${\rm dg}(\mu)$ extends the notion of a Young diagram, written in French notation, to the case of compositions.

The extended diagram $\widehat{\rm dg}(\mu)$ takes the same form as above, up to abridging an extra row of squares along the bottom edge of 
${\rm dg}(\mu)$:
\begin{align}
\label{extend-diag}
\widehat{\rm dg}(\mu)
=
\{(i,j) : 1 \leq i \leq n,\ 0 \leq j \leq \mu_i \}.
\end{align}
Note that the square $(i,0)$, $1 \leq i \leq n$, is always present in 
$\widehat{\rm dg}(\mu)$.

\begin{defn}[Legs]
\label{def:leg}
Let $s = (i,j)$ be a square in ${\rm dg}(\mu)$. We define its {\it leg length} $l(s)$ as follows:
\begin{align}
l(s) = \mu_i-j.
\end{align}
\end{defn}

\begin{defn}[Arms]
\label{def:arm}
Let $s = (i,j)$ be a square in ${\rm dg}(\mu)$. We define its {\it arm length} $a(s)$ as follows:
\begin{align}
a(s) = \alpha_{i,j-1}(\mu),
\end{align}
where the right-hand side is given by \eqref{alpha}. Note that one has
\begin{align*}
a(s) 
&= 
\#\{k < i : j \leq \mu_k \leq \mu_i\} 
+
\#\{k > i: j-1 \leq \mu_k < \mu_i\}
=
|\text{arm}^{\text{left}}(s)|
+
|\text{arm}^{\text{right}}(s)|,
\end{align*}
with precisely the same definition of the sets $\text{arm}^{\text{left}}(s)$ and $\text{arm}^{\text{right}}(s)$ as in \cite[Section 2.3]{HaglundHL2}.
\end{defn}

\begin{defn}[Attacking squares]
Given an extended diagram of the form \eqref{extend-diag}, we say that two squares $(i,j) \in \widehat{\rm dg}(\mu)$ and $(i',j') \in \widehat{\rm dg}(\mu)$ {\it attack}, and write $(i,j) \vdash (i',j')$, if {\bf 1.} $i<i'$ and 
$j=j'$, or {\bf 2.} $i< i'$ and $j=j'+1$.
\end{defn}

\subsection{Fillings and the non-attacking property}
\label{ssec:fill}

\begin{defn}[Fillings]
A filling $\sigma$ of shape $\mu$, written $(\mu;\sigma)$, assigns a positive integer in $\{1,\dots,n\}$ to each of the squares in $\widehat{\rm dg}(\mu)$. Namely, it is a set of integer triples of the following form:
\begin{align}
\label{extend-fill}
(\mu;\sigma)
=
\{ 
(i,j,\sigma_{i,j}) 
: 
1 \leq i \leq n,\
0 \leq j \leq \mu_i,\ 
1 \leq \sigma_{i,j} \leq n \},
\end{align}
where by agreement $\sigma_{i,0} = i$ for all $1 \leq i \leq n$.\footnote{Note that in \cite{HaglundHL2} such fillings were called {\it extended}. Here we avoid this qualification and always agree that our fillings apply to extended diagrams.}
\end{defn}

\begin{defn}[Non-attacking fillings]
\label{def:non-attack}
A filling $(\mu;\sigma)$ is said to be {\it non-attacking} if for all squares $(i,j) \in \widehat{\rm dg}(\mu)$ and $(i',j') \in \widehat{\rm dg}(\mu)$ such that $(i,j) \vdash (i',j')$, one has $\sigma_{i,j} \not= \sigma_{i',j'}$. Denote the set of all non-attacking fillings $(\mu;\sigma)$ as follows:
\begin{align}
\mathfrak{S}(\mu)
=
\{
(\mu;\sigma):
\sigma\ \text{non-attacking}
\}.
\end{align}
\end{defn}

\subsection{Ascents and descents}
\label{ssec:descent}

\begin{defn}[Ascent and descent sets]
\label{def:ascent}
Fixing a filling $(\mu; \sigma)$ and a square $s = (i,j) \in {\rm dg}(\mu)$, we say that $s$ is a {\it descent} if $\sigma_{i,j} > \sigma_{i,j-1}$. Similarly, 
$s = (i,j) \in {\rm dg}(\mu)$ is an {\it ascent} if $\sigma_{i,j} < \sigma_{i,j-1}$. We define the descent (resp. ascent) set of a filling as follows:
\begin{align*}
\mathcal{D}(\mu;\sigma)
\equiv
\mathcal{D}(\sigma)
=
\{(i,j) \in {\rm dg}(\mu) : \sigma_{i,j} > \sigma_{i,j-1}\},
\\
\mathcal{A}(\mu;\sigma)
\equiv
\mathcal{A}(\sigma)
=
\{(i,j) \in {\rm dg}(\mu) : \sigma_{i,j} < \sigma_{i,j-1}\}.
\end{align*}

\end{defn}

\subsection{Ordered triples}
\label{ssec:inv}

\begin{defn}[Ordered triples]
Given a filling $(\mu; \sigma)$, an {\it ordered triple}\footnote{The triples that we define are not supposed to match with the {\it inversion triples} defined in \cite{HaglundHL2}; accordingly, we have given them a different name.} is a collection of three squares $(i,j) \in {\rm dg}(\mu)$, $(i',j-1) \in \widehat{\rm dg}(\mu)$ and $(i',j)$ with $i < i'$, whose fillings satisfy one of the strings of inequalities
\begin{align}
\label{pos-inv}
\sigma_{i',j} > \sigma_{i,j} > \sigma_{i',j-1},
\\
\label{neg-inv}
\sigma_{i',j} < \sigma_{i,j} < \sigma_{i',j-1},
\end{align}
where by agreement $\sigma_{i',j} = \infty$ if $(i',j) \not\in {\rm dg}(\mu)$. Triples of the first type \eqref{pos-inv} will be called {\it positive}; similarly, triples of the second type \eqref{neg-inv} we term {\it negative}. We denote the total number of positive (resp. negative) ordered triples in $(\mu;\sigma)$ by ${\rm ord}_{\pm}(\mu;\sigma) \equiv {\rm ord}_{\pm}(\sigma)$, and write their difference as
\begin{align*}
\Delta(\sigma) = {\rm ord}_{+}(\sigma) - {\rm ord}_{-}(\sigma). 
\end{align*}

\end{defn}

\subsection{Combinatorial formula for $f_{\mu}(x_1,\dots,x_n;q,t)$}
\label{ssec:formula}

\begin{prop}
\label{prop:match}
The nonsymmetric Macdonald polynomials are given by the following formula:
\begin{align}
\label{hhl-formula}
f_{\mu}(x_1,\dots,x_n;q,t)
=
\sum_{\sigma \in \mathfrak{S}(\mu)}
x^{\sigma}
t^{\Delta(\sigma)}
\prod_{s \in \mathcal{D}(\sigma)}
\frac{1-t}{1-q^{l(s)+1} t^{a(s)+1}}
\prod_{s \in \mathcal{A}(\sigma)}
\frac{q^{l(s)+1} t^{a(s)} (1-t)}{1-q^{l(s)+1} t^{a(s)+1}},
\end{align}
where we have defined $x^{\sigma} =
\prod_{i=1}^{n} \prod_{j=1}^{\mu_i} x_{\sigma_{i,j}}$.

\end{prop}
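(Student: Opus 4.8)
The plan is to start from the matrix-product formula \eqref{f-ansatz} of Theorem \ref{thm:f-formula}, expand it column-by-column into an explicit sum over coloured-path ensembles, and then read a non-attacking filling off each ensemble so that \eqref{hhl-formula} emerges term-by-term. Concretely, I would insert the column decomposition \eqref{column-decomp} and the closed form \eqref{components} for the column components, with colour data $(\mathcal{P}_j,\mathcal{Q}_j)$ specialised to \eqref{PQ_j}, and multiply through by the normalisation $\Omega_\mu(q,t)$ of \eqref{Omega}.

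The first key simplification is that $\Omega_\mu$ cancels almost all of the denominators in \eqref{components}. Since $v_{p,j}=0$ whenever $\mu_p=j$ by \eqref{v_ij}, the factors $1/(1-v_{p,j}t^{f(p)})$ with $p\in\mathcal{P}_j$ are trivial, while for $p\in\mathcal{Q}_j$ one has $f(p)=\#\{\ell<p:\mu_\ell>j\}$ and $\gamma_{p,j}(\mu)+f(p)=\alpha_{p,j}(\mu)$ — the identity already exploited in \eqref{coeff-prod3} — so that $1-v_{p,j}t^{f(p)}=1-q^{\mu_p-j}t^{\alpha_{p,j}(\mu)}$ is exactly a factor of $\Omega_\mu$. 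After this cancellation $f_\mu$ is a finite sum over the intermediate vertical-edge colourings (equivalently, reduced path ensembles on the cylinder), each weighted by a product over columns of the surviving pieces: a factor $x_{b_p}$ for every traversing colour, $t^{g(p)}$ for every top-exiting colour, and, for each colour that traverses a column while changing rows, the rational factor $v_{p,j}^{\bm{1}(a_p>b_p)}t^{h(p)}(1-t)/(1-v_{p,j}t^{f(p)+1})$.

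Next I would set up the bijection between reduced ensembles and the non-attacking fillings $\mathfrak{S}(\mu)$ of Definition \ref{def:non-attack}. The entry $\sigma_{i,j}$ (for $1\le j\le\mu_i$) is read as the lattice-row in which the path of colour $i$ takes its $j$-th rightward step, with the basement convention $\sigma_{i,0}=i$ recording that colour $i$ enters in row $i$; local conservation of the $L$-weights makes this a well-defined filling of $\widehat{\rm dg}(\mu)$, and the vanishing of the input-$i>j\ge1$ vertex (the sixth weight in \eqref{bos-weights}) is precisely what forbids the attacking coincidences, so admissible ensembles correspond bijectively to non-attacking fillings. Under the index shift that sends lattice column $j-1$ to box height $j$, a colour changing rows in column $j-1$ records a box $s=(i,j)$ with $l(s)=\mu_i-j$ and $a(s)=\alpha_{i,j-1}(\mu)$ as in Definitions \ref{def:leg}, \ref{def:arm}; because the path may only move upward, a row change with no wrap around the cylinder ($a_p<b_p$) is a descent and one that wraps the cut ($a_p>b_p$) is an ascent, in the sense of Definition \ref{def:ascent}. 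Since $v_{i,j-1}t^{f(p)+1}=q^{\mu_i-j+1}t^{\alpha_{i,j-1}(\mu)+1}=q^{l(s)+1}t^{a(s)+1}$, the surviving denominators are exactly those of \eqref{hhl-formula}, the extra $v_{i,j-1}=q^{l(s)+1}t^{\cdots}$ attached to wrapping produces the $q^{l(s)+1}$ in the ascent numerator and nothing in the descent numerator, and the product $\prod_{p\in\mathcal{Q}_j}x_{b_p}$ collected over all columns assembles into $x^\sigma=\prod_{i,j}x_{\sigma_{i,j}}$.

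The main obstacle is the matching of the overall $t$-power, and this is where the genuine work lies. The per-column exponents $g(p),h(p)$ of \eqref{exponents}, together with the $t^{\gamma_{p,j}(\mu)}$ carried by each ascent factor $v_{p,j}$, do not agree box-by-box with the arm exponents $t^{a(s)}$ and the global prefactor $t^{\Delta(\sigma)}$ in \eqref{hhl-formula}; only their totals coincide, which is exactly the sense in which our formula differs from Theorem 3.5.1 of \cite{HaglundHL2}. I would therefore prove the weight identity globally: after extracting from each ascent box the piece $t^{a(s)}$ demanded by \eqref{hhl-formula}, the residual discrepancy between $\sum_j\big(\sum_{\mathcal{P}_j}g(p)+\sum_{\mathcal{Q}_j}h(p)\big)$ and $\sum_{s}a(s)$ must be shown to equal $\Delta(\sigma)={\rm ord}_+(\sigma)-{\rm ord}_-(\sigma)$. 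The bookkeeping I expect to use is that each cyclic comparison counted by $g$ or $h$ between two colours sharing a column corresponds to one of the three-square configurations of Section \ref{ssec:inv}, with the inequalities \eqref{pos-inv} and \eqref{neg-inv} tracking the two signs; reorganising these comparisons across all columns (and absorbing the wrap contributions from the ascent factors) should reassemble $\Delta(\sigma)$. Carrying out this reorganisation of $t$-exponents, deferred to Sections \ref{ssec:fill-config}--\ref{ssec:weight-match}, is the delicate heart of the proof.
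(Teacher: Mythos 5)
Your proposal follows essentially the same route as the paper's own proof: the column decomposition \eqref{column-decomp} with the explicit components \eqref{components}, the cancellation of $\Omega_\mu(q,t)$ against the factors $1-v_{p,j}t^{f_j(p)}$ via the identity $\gamma_{p,j}(\mu)+f_j(p)=\alpha_{p,j}(\mu)$, the bijection between $\mu$-legal lattice configurations and non-attacking fillings (with wrapping row changes giving ascents and non-wrapping ones giving descents), and the identification of the leftover $t$-exponents with signed ordered triples. The ``delicate heart'' you defer is exactly what the paper executes in Appendix \ref{app:match}, where the $g_j(p)$'s together with the descent $h_j(p)$'s assemble into $t^{{\rm ord}_+(\sigma)}$, while the ascent factors, rewritten using $h_j(p)-f_j(p)$, produce $t^{-{\rm ord}_-(\sigma)}\prod_{s\in\mathcal{A}(\sigma)}q^{l(s)+1}t^{a(s)}$.
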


\subsection{Bijection between lattice configurations and non-attacking fillings}
\label{ssec:fill-config}

The proof of Proposition \ref{prop:match} proceeds by finding a bijective mapping between the configurations appearing in the matrix product formula for $f_{\mu}(x_1,\dots,x_n;q,t)$, and non-attacking fillings. 

A lattice configuration $\xi$ is a collection of integer triples of the form
\begin{align}
\label{eta}
\xi
=
\left\{ \left(i,j,k^{(j)}_i\right): 
1 \leq i \leq n,\ 
0 \leq j \leq N,\
0 \leq k^{(j)}_i \leq n
\right\},
\end{align}
where $k^{(j)}_i$ is the colour that passes through the vertical edge situated at lateral coordinate $j$ and height coordinate $i$, as in Figure \ref{fig:xi}.
\begin{figure}
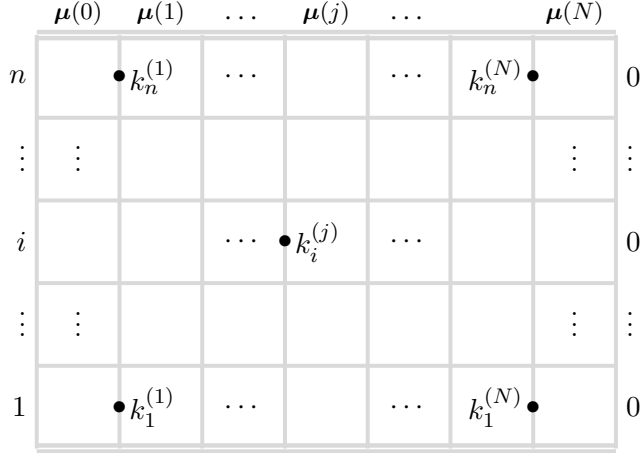

\begin{align*}
\tikz{1.1}{
\foreach\y in {0}{
\draw[lgray,line width=1.7pt,double] (1.5,0.5+\y) -- (8.5,0.5+\y);
}
\foreach\y in {5}{
\draw[lgray,line width=1.7pt,double] (1.5,0.5+\y) -- (8.5,0.5+\y);
}
\foreach\y in {1,...,4}{
\draw[lgray,line width=1.5pt] (1.5,0.5+\y) -- (8.5,0.5+\y);
}
\foreach\x in {0,...,7}{
\draw[lgray,line width=1.5pt] (1.5+\x,0.5) -- (1.5+\x,5.5);
}
\node[text centered] at (4,3) {$\cdots$};
\node[text centered] at (6,3) {$\cdots$};
\node[text centered] at (4,1) {$\cdots$};
\node[text centered] at (6,1) {$\cdots$};
\node[text centered] at (4,5) {$\cdots$};
\node[text centered] at (6,5) {$\cdots$};
\node[text centered] at (2,2.1) {$\vdots$};
\node[text centered] at (2,4.1) {$\vdots$};
\node[text centered] at (8,2.1) {$\vdots$};
\node[text centered] at (8,4.1) {$\vdots$};
\node[above] at (2,5.5) {\footnotesize$\bm{\mu}(0)$};
\node[above] at (3,5.5) {\footnotesize$\bm{\mu}(1)$};
\node[above,text centered] at (4,5.5) {$\cdots$};
\node[above] at (5,5.5) {\footnotesize$\bm{\mu}(j)$};
\node[above,text centered] at (6,5.5) {$\cdots$};
\node[above] at (8,5.5) {\footnotesize$\bm{\mu}(N)$};
\node[right] at (8.5,1) {$0$};
\node[right] at (8.5,3) {$0$};
\node[right] at (8.5,2.1) {$\vdots$};
\node[right] at (8.5,4.1) {$\vdots$};
\node[right] at (8.5,5) {$0$};
\node[left] at (1.5,3) {$i$};
\node[left] at (1.5,1) {$1$};
\node[left] at (1.5,2.1) {$\vdots$};
\node[left] at (1.5,4.1) {$\vdots$};
\node[left] at (1.5,5) {$n$};
\node[right, text centered] at (2.5,5) {$k^{(1)}_n$};
\node at (2.5,5) {$\bullet$};
\node[left, text centered] at (7.5,5) {$k^{(N)}_n$};
\node at (7.5,5) {$\bullet$};
\node[right, text centered] at (2.5,1) {$k^{(1)}_1$};
\node at (2.5,1) {$\bullet$};
\node[left, text centered] at (7.5,1) {$k^{(N)}_1$};
\node at (7.5,1) {$\bullet$};
\node[right, text centered] at (4.5,3) {$k^{(j)}_i$};
\node at (4.5,3) {$\bullet$};
}
\end{align*}
\caption{A lattice configuration $\xi$ on a cylinder. $\xi$ is $\mu$-legal provided the vertical edge states $k^{(j)}_i$ obey the constraints {\bf 1}--{\bf 3}.}
\label{fig:xi}
\end{figure}
We say that a configuration \eqref{eta} is $\mu$-legal, and write $\xi \in \Xi(\mu)$, provided that the following three types of constraints are obeyed: 

\begin{enumerate}[{\bf 1.}]
\item
For any colour $1 \leq a \leq n$ and lateral coordinate $0 \leq j \leq \mu_a$, there exists an $i$ such that $k^{(j)}_i = a$; while for $j > \mu_a$, there exists no $i$ such that $k^{(j)}_i = a$. This constraint comes from the continuity of path $a$ as it traverses the cylinder, and the fact that it terminates in column $\mu_a$ of the lattice. 

\item
One has $k^{(0)}_i = i$ for all $1 \leq i \leq n$. This comes from the boundary condition assigned to the left edges of the lattice.  

\item
For any pair of neighbouring colours $k^{(j)}_i \geq 1$ and $k^{(j+1)}_i \geq 1$, the inequality $k^{(j)}_i > k^{(j+1)}_i$ is forbidden. This restriction comes from the vanishing of the sixth Boltzmann weight in the table \eqref{bos-weights}.
\end{enumerate}

Given a configuration $\xi \in \Xi(\mu)$, let us now find a corresponding filling $\sigma \in \mathfrak{S}(\mu)$, \ie\ we seek a map $\mathfrak{M}: \Xi(\mu) \rightarrow \mathfrak{S}(\mu)$ such that $\mathfrak{M}(\xi) = \sigma$. We will specify this map algorithmically. From the $\mu$-legal configuration $\xi$ given by \eqref{eta} we construct $n$ {\it colour sets}:
\begin{align*}
\mathcal{S}_a 
=
\left\{(i,j) : 1 \leq i \leq n,\ 0 \leq j \leq N,\ k^{(j)}_i = a \right\},
\qquad
1 \leq a \leq n.
\end{align*}
In view of constraint {\bf 1} listed above, one has 
$|\mathcal{S}_a| = \mu_a + 1$ for all $1\leq a \leq n$, and $\mathcal{S}_a$ takes the explicit form
\begin{align}
\label{colour-sets-sigma}
\mathcal{S}_a
=
\Big\{
(\sigma_{a,0},0),
(\sigma_{a,1},1),
\dots,
(\sigma_{a,\mu_a},\mu_a)
\Big\},
\qquad
1\leq a \leq n,
\end{align}
where $\sigma_{a,0} = a$ (constraint {\bf 2}) and  $1 \leq \sigma_{a,j} \leq n$ for all $1 \leq j \leq \mu_a$. Hence, the integers $\sigma_{a,j}$ can be used to specify a filling $(\mu;\sigma)$ of the extended diagram $\widehat{\rm dg}(\mu)$, in the same sense as in \eqref{extend-fill}; namely, one uses the members of $\mathcal{S}_a$ to fill the boxes of the $a$-th column of $\widehat{\rm dg}(\mu)$. 

It remains to check that the filling $(\mu;\sigma)$ inherited in this way is non-attacking. For this, notice that the integers obtained by the prescription \eqref{colour-sets-sigma} must satisfy 
$\sigma_{a,j} \not= \sigma_{b,j}$ for all $a \not= b$ (were this not the case, we would have $k^{(j)}_i =a$ and $k^{(j)}_i = b$ for $i = \sigma_{a,j} = \sigma_{b,j}$, which leads to the contradiction that a vertical edge in Figure \ref{fig:xi} is occupied by two different colours). In a similar vein, the integers coming from \eqref{colour-sets-sigma} must satisfy $\sigma_{a,j+1} \not= \sigma_{b,j}$ for all $a<b $ (were this not the case, we would have $k^{(j+1)}_i = a < b = k^{(j)}_i$ for $i = \sigma_{a,j+1} = \sigma_{b,j}$, which violates constraint {\bf 3}). The properties $\sigma_{a,j} \not= \sigma_{b,j}$ for $a \not= b$ and $\sigma_{a,j+1} \not= \sigma_{b,j}$ for $a < b$ are precisely the same as those which characterize non-attacking fillings $(\mu;\sigma)$; compare with Definition \ref{def:non-attack}.

We denote the map described above, taking $\Xi(\mu)$ to 
$\mathfrak{S}(\mu)$, by $\mathfrak{M}$. It is straightforward to show that $\mathfrak{M}$ is not only injective but also surjective, \ie\ it is a bijection between the two sets. An explicit example of the map $\mathfrak{M}$ is given in Figure \ref{fig:map}.
\begin{figure}
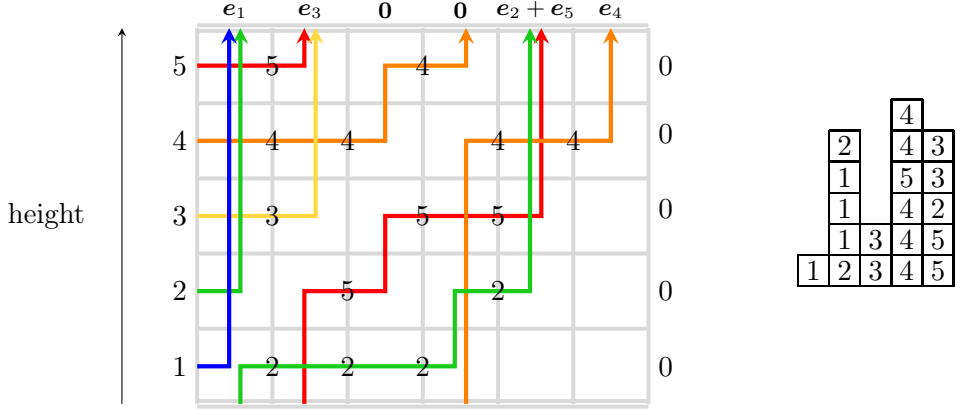

\begin{align*}
\tikz{1}{
\foreach\y in {0}{
\draw[lgray,line width=1.5pt,double] (1.5,0.5+\y) -- (7.5,0.5+\y);
}
\foreach\y in {5}{
\draw[lgray,line width=1.5pt,double] (1.5,0.5+\y) -- (7.5,0.5+\y);
}
\foreach\y in {1,...,4}{
\draw[lgray,line width=1.5pt] (1.5,0.5+\y) -- (7.5,0.5+\y);
}
\foreach\x in {0,...,6}{
\draw[lgray,line width=1.5pt] (1.5+\x,0.5) -- (1.5+\x,5.5);
}
\node[above] at (2,5.5) {\footnotesize$\bm{e}_1$};
\node[above] at (3,5.5) {\footnotesize$\bm{e}_3$};
\node[above] at (4,5.5) {\footnotesize$\bm{0}$};
\node[above] at (5,5.5) {\footnotesize$\bm{0}$};
\node[above] at (6,5.5) {\footnotesize$\bm{e}_2+\bm{e}_5$};
\node[above] at (7,5.5) {\footnotesize$\bm{e}_4$};
\node[right] at (7.5,1) {$0$};
\node[right] at (7.5,2) {$0$};
\node[right] at (7.5,3.1) {$0$};
\node[right] at (7.5,4.1) {$0$};
\node[right] at (7.5,5) {$0$};
\node[left] at (1.5,1) {$1$};
\node[left] at (1.5,2) {$2$};
\node[left] at (1.5,3) {$3$};
\node[left] at (1.5,4) {$4$};
\node[left] at (1.5,5) {$5$};
\draw[ultra thick,red,->] (1.5,5) -- (2.925,5) -- (2.925,5.5);
\draw[ultra thick,red,->] (2.925,0.5) -- (2.925,2) -- (4,2) -- (4,3) -- (6.075,3) -- (6.075,5.5);
\draw[ultra thick,orange,->] (1.5,4) -- (4,4) -- (4,5) -- (5.075,5) -- (5.075,5.5);
\draw[ultra thick,orange,->] (5.075,0.5) -- (5.075,4) -- (7,4) -- (7,5.5);
\draw[ultra thick,yellow,->] (1.5,3) -- (3.075,3) -- (3.075,5.5);
\draw[ultra thick,green,->] (1.5,2) -- (2.075,2) -- (2.075,5.5); 
\draw[ultra thick,green,->] (2.075,0.5) -- (2.075,1) -- (4.925,1) -- (4.925,2) -- (5.925,2) -- (5.925,5.5);
\draw[ultra thick,blue,->] (1.5,1) -- (1.925,1) -- (1.925,5.5);
\node at (2.5,1) {$2$};
\node at (3.5,1) {$2$};
\node at (4.5,1) {$2$};
\node at (5.5,2) {$2$};
\node at (2.5,3) {$3$};
\node at (2.5,4) {$4$};
\node at (3.5,4) {$4$};
\node at (4.5,5) {$4$};
\node at (5.5,4) {$4$};
\node at (6.5,4) {$4$};
\node at (2.5,5) {$5$};
\node at (3.5,2) {$5$};
\node at (4.5,3) {$5$};
\node at (5.5,3) {$5$};
\node at (-0.5,3) {height};
\draw[->] (0.5,0.5) -- (0.5,5.5);
}
\qquad\qquad
\tableau{
& & & {4} &
\\
& {2} & & {4} & {3}
\\
& {1} & & {5} & {3}
\\
& {1} & & {4} & {2}
\\ 
& {1} & {3} & {4} & {5}
\\ 
{1} & {2} & {3} & {4} & {5} 
}
\end{align*}
\caption{Left panel: a $\mu$-legal configuration $\xi$ in the case $\mu = (0,4,1,5,4)$. The trajectory of each coloured path, from the left boundary to its final column, is shown on the picture. Right panel: the corresponding non-attacking filling $\sigma = \mathfrak{M}(\xi)$. The numbers in column 
$1\leq i \leq 5$ are obtained by listing the height coordinates visited by path $i$ at each of its $\mu_i+1$ horizontal unit steps within $\xi$.}
\label{fig:map}
\end{figure}

\subsection{Matching weights}
\label{ssec:weight-match}

In this subsection we let $w : \Xi(\mu) \rightarrow \mathbb{C}$ and $W : \mathfrak{S}(\mu) \rightarrow \mathbb{C}$ be {\it weights}, \ie\ functions which map lattice configurations (resp. fillings) to the complex numbers. Given our existing weights $w$ prescribed to lattice configurations, our aim is to write down $W = w \circ \mathfrak{M}^{-1}$ explicitly, where $\mathfrak{M}^{-1}$ is the inverse of the map described in the previous subsection.

Let us begin by writing down the following generic expansion:
\begin{align*}
f_{\mu}(x_1,\dots,x_n;q,t)
=
\sum_{\xi \in \Xi(\mu)}
w(\xi; x_1,\dots,x_n; q,t),
\end{align*}
obtained by writing $\Omega_{\mu}(q,t) \langle \C_1(x_1) \cdots \C_n(x_n) \rangle_{\mu}$ as a sum over $\mu$-legal lattice configurations $\xi$. Let $\xi$ be such a configuration, of the form \eqref{eta}. For each fixed $0 \leq j \leq N$, the vector $\left(k_1^{(j)},\dots,k_n^{(j)}\right)$ tells us the occupation data of the $j$-th column of vertical edges of the lattice in Figure \ref{fig:xi}. We associate to it a vector of coordinates $\mathcal{A}_j$, in the same way as in Definition \ref{defn:coord}; namely,
\begin{align}
\label{A_j}
\mathcal{A}_j
=
(a_{p,j})_{p \in \mathcal{Q}_{j-1}},
\qquad
\text{such that}
\quad
k^{(j)}_{a_{p,j}} = p,
\quad
\forall\ p \in \mathcal{Q}_{j-1},
\end{align}
where we have defined the following sets, which depend implicitly on the composition $\mu$:
\begin{align}
\mathcal{P}_j = \{p : \mu_p = j\},
\qquad
\mathcal{Q}_j = \{p : \mu_p > j\},
\qquad
\mathcal{P}_j \cup \mathcal{Q}_j = \mathcal{Q}_{j-1}.
\end{align}
In words, $a_{p,j}$ is the (vertical) coordinate of colour $p$ as it passes through the $j$-th column of the lattice. The weight prescribed to $\xi$ is then the product of weights of the individual columns, as given by \eqref{components}:
\begin{multline*}
w(\xi; x_1,\dots,x_n; q,t)
=
\Omega_{\mu}(q,t)
\cdot
\prod_{j=0}^{N}
\prod_{\subalign{
\quad p &> \ell
\\
\quad p &\in \mathcal{Q}_{j-1}
\\
\quad \ell &\in \mathcal{Q}_j
}}
\bm{1}(a_{p,j} \not= a_{\ell,j+1})
\prod_{p \in \mathcal{P}_j} t^{g_j(p)}
\prod_{p \in \mathcal{Q}_j} x_{a_{p,j+1}}
\\
\times
\prod_{p \in \mathcal{Q}_{j-1}}
\frac{1}{1-v_{p,j} t^{f_j(p)}}
\prod_{\substack{
p\in \mathcal{Q}_j \\ a_{p,j} \not= a_{p,j+1}
}}
\frac{ v_{p,j}^{\bm{1}(a_{p,j} > a_{p,j+1})} t^{h_j(p)}(1-t)}
{1-v_{p,j} t^{f_j(p)+1}},
\end{multline*}
where we have defined column-dependent analogues of the exponents \eqref{exponents} in the obvious way;
\begin{align}
\label{exponents-j}
\begin{split}
f_j(p) 
&= 
\# \{\ell \in \mathcal{Q}_j : \ell < p\},
\\
g_j(p)
&=
\# \{\ell \in \mathcal{Q}_j : \ell < p, a_{p,j} < a_{\ell,j+1}\},
\\
h_j(p)
&=
\# \{\ell \in \mathcal{Q}_j : \ell < p, a_{\ell,j+1} \in (a_{p,j},a_{p,j+1})\}.
\end{split}
\end{align}
Let $\mathfrak{M} : \Xi(\mu) \rightarrow \mathfrak{S}(\mu)$ be the previously described bijection between lattice configurations and fillings, and fix a lattice configuration $\xi \in \Xi(\mu)$ of the form \eqref{eta} with associated coordinate vectors given by \eqref{A_j}. If $\sigma \in \mathfrak{S}(\mu)$ is the unique non-attacking filling such that $\sigma = \mathfrak{M}(\xi)$, we note that $\sigma_{i,j} = a_{i,j}$ for all $1 \leq i \leq n$, $0 \leq j \leq \mu_i$. Because of this, we see that
\begin{align*}
f_{\mu}(x_1,\dots,x_n;q,t)
=
\sum_{\sigma \in \mathfrak{S}(\mu)}
W(\sigma;x_1,\dots,x_n;q,t),
\end{align*}
with weights given by
\begin{multline}
\label{W-complicated}
W(\sigma;x_1,\dots,x_n;q,t)
=
\Omega_{\mu}(q,t)
\cdot
\prod_{j=0}^{N}
\prod_{\subalign{
\quad p &> \ell
\\
\quad p &\in \mathcal{Q}_{j-1}
\\
\quad \ell &\in \mathcal{Q}_j
}}
\bm{1}(\sigma_{p,j} \not= \sigma_{\ell,j+1})
\prod_{p \in \mathcal{P}_j} t^{g_j(p)}
\prod_{p \in \mathcal{Q}_j} x_{\sigma_{p,j+1}}
\\
\times
\prod_{p \in \mathcal{Q}_{j-1}}
\frac{1}{1-v_{p,j} t^{f_j(p)}}
\prod_{\substack{
p\in \mathcal{Q}_j \\ \sigma_{p,j} \not= \sigma_{p,j+1}
}}
\frac{ v_{p,j}^{\bm{1}(\sigma_{p,j} > \sigma_{p,j+1})} t^{h_j(p)}(1-t)}
{1-v_{p,j} t^{f_j(p)+1}},
\end{multline}
where each $v_{p,j}$ parameter appearing above is given by \eqref{v_ij}--\eqref{gamma}, and where the exponents $f_j(p)$, $g_j(p)$ and $h_j(p)$ are the same as in \eqref{exponents-j}, up to replacing the letter $a$ with $\sigma$. It remains to show that the right-hand side of \eqref{W-complicated} takes the same form as the summand appearing in \eqref{hhl-formula}; \ie\ we seek to show that
\begin{align}
\label{W-simple}
W(\sigma;x_1,\dots,x_n;q,t)
=
x^{\sigma}
t^{\Delta(\sigma)}
\prod_{s \in \mathcal{D}(\sigma)}
\frac{1-t}{1-q^{l(s)+1} t^{a(s)+1}}
\prod_{s \in \mathcal{A}(\sigma)}
\frac{q^{l(s)+1} t^{a(s)} (1-t)}{1-q^{l(s)+1} t^{a(s)+1}}.
\end{align}
The details of this routine matching procedure follow in Appendix \ref{app:match}.

\appendix
\section{Matching equations \eqref{W-complicated} and \eqref{W-simple}}
\label{app:match}

Throughout this appendix we shall make reference to the right-hand side of \eqref{W-complicated}, taking it apart piece-by-piece.

\addtocontents{toc}{\protect\setcounter{tocdepth}{1}}

\subsection{} We can ignore the product of indicator functions 
\begin{align*}
\prod_{j=0}^{N}
\prod_{\subalign{
\quad p &> \ell
\\
\quad p &\in \mathcal{Q}_{j-1}
\\
\quad \ell &\in \mathcal{Q}_j
}}
\bm{1}(\sigma_{p,j} \not= \sigma_{\ell,j+1});
\end{align*}
all such constraints are already fulfilled by virtue of the fact that 
$\sigma$ is non-attacking.

\subsection{} 
\label{app:x-sigma}

One easily sees that there holds
\begin{align*}
\prod_{j=0}^{N}
\prod_{p \in \mathcal{Q}_j} x_{\sigma_{p,j+1}}
=
\prod_{i=1}^{n}
\prod_{j=1}^{\mu_i} 
x_{\sigma_{i,j}}
=
x^{\sigma}.
\end{align*}

\subsection{} Combining equations \eqref{something}--\eqref{coeff-prod3} from earlier, we observe the cancellation
\begin{align*}
\Omega_{\mu}(q,t)
\cdot
\prod_{j=0}^{N}
\prod_{p \in \mathcal{Q}_{j-1}}
\frac{1}{1-v_{p,j} t^{f_j(p)}}
=
1.
\end{align*}

\subsection{} We move on to the factor
\begin{align*}
\prod_{j=0}^{N}
\prod_{\substack{
p\in \mathcal{Q}_j \\ \sigma_{p,j} \not= \sigma_{p,j+1}
}}
\frac{1-t}{1-v_{p,j} t^{f_j(p)+1}}
=
\prod_{j=0}^{N}
\prod_{\substack{
p: \mu_p > j \\ \sigma_{p,j} \not= \sigma_{p,j+1}
}}
\frac{1-t}{1-v_{p,j} t^{f_j(p)+1}}.
\end{align*}
Using the explicit form \eqref{v_ij}--\eqref{gamma} of $v_{p,j}$ and collecting $t$ exponents in the exact same way as in the computations at the end of Section \ref{ssec:norm}, we obtain (\cf\ Definitions \ref{def:leg}, \ref{def:arm} and \ref{def:ascent})
\begin{align*}
\prod_{j=0}^{N}
\prod_{\substack{
p\in \mathcal{Q}_j \\ \sigma_{p,j} \not= \sigma_{p,j+1}
}}
\frac{1-t}{1-v_{p,j} t^{f_j(p)+1}}
=
\prod_{i=1}^{n}
\prod_{\substack{
1 \leq j \leq \mu_i \\ \sigma_{i,j-1} \not= \sigma_{i,j}
}}
\frac{1-t}{1-q^{\mu_i-j+1} t^{\alpha_{i,j-1}(\mu)+1}}
=
\prod_{s \in \mathcal{D}(\sigma) \cup \mathcal{A}(\sigma)}
\frac{1-t}{1-q^{l(s)+1} t^{a(s)+1}}.
\end{align*}

\subsection{} 
\label{app:t^g}
The remaining factors require us to collect together various $t$ exponents. We have 
\begin{align*}
\prod_{j=0}^{N}
\prod_{p \in \mathcal{P}_j} t^{g_j(p)}
=
\prod_{j=0}^{N}
\prod_{p: \mu_p = j} t^{g_j(p)}
=
\prod_{j=0}^{N}
t^{\#\{(\ell,p):\ \ell < p,\ \mu_\ell > \mu_p = j,\ \sigma_{p,j} < 
\sigma_{\ell,j+1} \}}.
\end{align*}
This product assigns a factor of $t$ to every pair of boxes in the filling $\sigma$ of the form
\begin{align*}
\tableau{
& {} & & & &
\\
& {\star} & & &
\\
{} & {} & & {\bullet} & & {}
\\ 
{} & {} & {} & {} & & {}
\\ 
{} & {} & {} & {} & {} &{}
}
\end{align*}
with square coordinates $\star = (\ell,j+1)$ and $\bullet = (p,j) = (p,\mu_p)$, and such that the square fillings satisfy $\sigma_{\ell,j+1} > \sigma_{p,j}$. The number of such pairs is equal to the number of positive ordered triples of the form \eqref{pos-inv} with $\sigma_{i',j} = \infty$.

\subsection{} The last factor is the most involved. It has the form
\begin{align}
\label{two-pieces}
\prod_{j=0}^{N}
\prod_{\substack{
p\in \mathcal{Q}_j \\ \sigma_{p,j} \not= \sigma_{p,j+1}
}}
v_{p,j}^{\bm{1}(\sigma_{p,j} > \sigma_{p,j+1})} t^{h_j(p)}
=
\prod_{j=0}^{N}
\left(
\prod_{\substack{
p\in \mathcal{Q}_j \\ \sigma_{p,j} < \sigma_{p,j+1}
}}
t^{h_j(p)}
\right)
\left(
\prod_{\substack{
p\in \mathcal{Q}_j \\ \sigma_{p,j} > \sigma_{p,j+1}
}}
v_{p,j} t^{h_j(p)}
\right),
\end{align}
which splits into two pieces. 

\subsection{} The first piece of \eqref{two-pieces} is given by
\begin{align*}
\prod_{j=0}^{N}
\prod_{\substack{
p\in \mathcal{Q}_j \\ \sigma_{p,j} < \sigma_{p,j+1}
}}
t^{h_j(p)}
=
\prod_{j=0}^{N}
\prod_{\substack{
p: \mu_p > j \\ \sigma_{p,j} < \sigma_{p,j+1}
}}
t^{h_j(p)}
=
\prod_{j=0}^{N}
t^{\#\{(\ell,p):\ \ell < p,\ \mu_\ell > j,\ \mu_p > j,\ \sigma_{p,j} < 
\sigma_{\ell,j+1} < \sigma_{p,j+1} \}},
\end{align*}
using the definition \eqref{exponents-j} of $h_j(p)$. This product assigns a factor of $t$ to every triple of boxes in the filling $\sigma$ of the form
\begin{align*}
\tableau{
& {} & & & &
\\
& {} & & &
\\
{} & {\star} & & {\circ} & & {}
\\ 
{} & {} & {} & {\bullet} & & {}
\\ 
{} & {} & {} & {} & {} &{}
}
\end{align*}
with square coordinates $\circ = (p,j+1)$, $\star = (\ell,j+1)$ and 
$\bullet = (p,j)$, and such that the square fillings satisfy 
$\sigma_{p,j+1} > \sigma_{\ell,j+1} > \sigma_{p,j}$. Hence we are, in this case, counting positive ordered triples of the form \eqref{pos-inv}. Combining this with the contribution from Section \ref{app:t^g}, we recover precisely $t^{{\rm ord}_{+}(\sigma)}$.

\subsection{} 
\label{app:second-piece}

The second piece of \eqref{two-pieces} is given by
\begin{align}
\label{second-piece}
\prod_{j=0}^{N}
\prod_{\substack{
p\in \mathcal{Q}_j \\ \sigma_{p,j} > \sigma_{p,j+1}
}}
v_{p,j} t^{h_j(p)}
=
\prod_{j=0}^{N}
\prod_{\substack{
p: \mu_p > j \\ \sigma_{p,j} > \sigma_{p,j+1}
}}
v_{p,j} t^{h_j(p)}.
\end{align}
We note that
\begin{align*}
h_j(p)
&=
f_j(p) 
- 
\#\{\ell \in \mathcal{Q}_j : 
\ell < p,\
\sigma_{\ell,j+1} 
\not\in (\sigma_{p,j},\sigma_{p,j+1}) \}
\\
&=
f_j(p) 
- 
\#\{\ell \in \mathcal{Q}_j : 
\ell < p,\ 
\sigma_{p,j} > \sigma_{\ell,j+1} > \sigma_{p,j+1} \},
\end{align*}
where the second line follows from the fact that $\sigma_{p,j} > \sigma_{p,j+1}$ by assumption in \eqref{second-piece}, as well as the fact that the equalities $\sigma_{p,j} = \sigma_{\ell,j+1}$ and $ \sigma_{p,j+1} = \sigma_{\ell,j+1}$ are forbidden (they both violate the non-attacking property of fillings). Equation \eqref{second-piece} becomes (again, using the computation at the end of Section \ref{ssec:norm}) 
\begin{multline*}
\prod_{j=0}^{N}
\prod_{\substack{
p\in \mathcal{Q}_j \\ \sigma_{p,j} > \sigma_{p,j+1}
}}
v_{p,j} t^{h_j(p)}
\\
=
\left(
\prod_{i=1}^{n}
\prod_{\substack{
1 \leq j \leq \mu_i \\ \sigma_{i,j-1} > \sigma_{i,j}
}}
q^{\mu_i-j+1} t^{\alpha_{i,j-1}(\mu)}
\right)
\prod_{j=0}^{N}
t^{-\#\{(\ell,p):\ \ell<p,\ \mu_{\ell}>j,\ \mu_p>j,\
\sigma_{p,j} > \sigma_{\ell,j+1} > \sigma_{p,j+1}
\}}.
\end{multline*}
The final product assigns a factor of $t^{-1}$ to every triple of boxes in the filling $\sigma$ of the form
\begin{align*}
\tableau{
& {} & & & &
\\
& {} & & &
\\
{} & {\star} & & {\bullet} & & {}
\\ 
{} & {} & {} & {\circ} & & {}
\\ 
{} & {} & {} & {} & {} &{}
}
\end{align*}
with square coordinates $\bullet = (p,j+1)$, $\star = (\ell,j+1)$ and 
$\circ = (p,j)$, and such that the square fillings satisfy 
$\sigma_{p,j+1} < \sigma_{\ell,j+1} < \sigma_{p,j}$. We are thus counting negative ordered triples of the form \eqref{neg-inv}. We conclude that
\begin{align*}
\prod_{j=0}^{N}
\prod_{\substack{
p\in \mathcal{Q}_j \\ \sigma_{p,j} > \sigma_{p,j+1}
}}
v_{p,j} t^{h_j(p)}
=
t^{-{\rm ord}_{-}(\sigma)}
\prod_{s \in \mathcal{A}(\sigma)}
q^{l(s)+1} t^{a(s)}.
\end{align*}

\subsection{} Putting together the results of Sections \ref{app:x-sigma}--\ref{app:second-piece}, we find the desired expression, \eqref{W-simple}.

\bibliographystyle{alpha}
\bibliography{references}

\end{document}